\newcommand{\eps}{\ensuremath{\varepsilon}}
\newcommand{\bs}{\boldsymbol}
\newcommand{\mbf}{\mathbf}
\newcommand{\jac}{J_{\mbf u} \, }
\newcommand{\ol}{\overline}
\newcommand*\circled[1]{\tikz[baseline=(char.base)]{\node[shape = circle, draw, inner sep = 2pt] (char) {#1};}}
\title{Pattern formation of bulk-surface reaction-diffusion systems in a ball.}
\author{Edgardo Villar-Sep\'ulveda\thanks{Department of Engineering Mathematics, University of Bristol, Bristol, UK (edgardo.villar-sepulveda@bristol.ac.uk).}
\and Alan R.~Champneys\thanks{Department of Engineering Mathematics, University of Bristol, Bristol, UK (a.r.champneys@bristol.ac.uk).}
\and Davide Cusseddu\thanks{CMAT – Centro de Matemática, Universidade do Minho, Braga, Portugal
(davide.cusseddu@gmail.com).}
\and Anotida Madzvamuse\thanks{University of British Columbia, Department of Mathematics, 1984 Mathematics Road, Vancouver, V6T 1Z2, British Columbia, Canada; University of Pretoria and University of Johannesburg, Department of Mathematics, South Africa, and University of Zimbabwe, Department of Mathematics and Computational Science, Mt Pleasant, Harare, Zimbabwe
(am823@math.ubc.ca, \url{https://www.math.ubc.ca/user/3665}).}}
\date{}
\DeclareMathOperator{\diag}{diag}
\newcommand{\rtext}[1]{\textcolor{black}{#1}}
\newcommand{\s}{\\}
\newcommand{\innerp}[1]{\left \langle #1 \right \rangle}
\definecolor{DCbrown}{RGB}{123,63,0}
\newcommand{\evs}[1]{\textcolor{black}}
\begin{document}

\maketitle

\begin{abstract}
    Weakly nonlinear amplitude equations are derived for the onset of spatially extended patterns on a general class of $n$-component bulk-surface reaction-diffusion systems in a ball, under the assumption of linear kinetics in the bulk \rtext{and coupling Robin-type boundary conditions}. Linear analysis shows conditions under which various pattern modes can become unstable to either generalised pitchfork or transcritical bifurcations depending on the parity of the spatial wavenumber. Weakly nonlinear analysis is used to derive general expressions for the multi-component amplitude equations of different patterned states. These reduced-order systems are found to agree with prior normal forms for pattern formation bifurcations with $O(3)$ symmetry and \rtext{they} provide information on the stability of bifurcating patterns of different symmetry types. The analysis is complemented with numerical results using a dedicated \rtext{bulk-surface} finite-element method. The theory is illustrated in two examples: a bulk-surface version of the Brusselator and a four-component \rtext{bulk-surface} cell-polarity model.
\end{abstract}



\section{Introduction}
    Recently, there have been substantial developments in extending the ground-breaking theory for pattern formation proposed by Alan Turing in his seminal paper in 1952 \cite{turing1952chemical}; see, for example, \cite{Krause,madzvamuse2010stability} and references therein. One such development, driven primarily by experimental observations in biology and material science, is the formulation of bulk-surface reaction-diffusion models (BS-RDEs), see e.g \cite{madzvamuse2016bulk,madzvamuse2015,ratz2015turing, ratz2014symmetry}. \rtext{A bulk-surface reaction-diffusion system is one posed} on two different, yet compatible geometries; one denoted the {\it bulk}, which describes the interior of the domain; and the {\it surface}, which describes the enclosing smooth manifold representing its boundary. These models naturally describe many biological and physical processes, such as symmetry-breaking and polarity formation in cell biology \cite{brauns2021bulk,cusseddu2019,cusseddu2022numerical}, surfactant dynamics \cite{chen2014conservative,fernandez2016existence,hahn2014modelling}, and spatiotemporal electrodeposition in batteries \cite{frittelli2023turing}, to give just a few examples. 
    
    The complex coupled nature of such models tends to involve not only nonlinear reaction kinetics but also different spatial operators (Laplace and Laplace-Beltrami) for fields posed on the interior and boundary of the domain. The mathematical analysis of the resulting so-called bulk-surface partial differential equations is still in its infancy, and we do not attempt a full existence and uniqueness theory here. Instead, we conduct linear, weakly nonlinear, and numerical analysis of a general class bulk-surface reaction-diffusion systems (BS-RDEs) in 3D. \rtext{Analytical studies on the existence, uniqueness, and uniform boundedness of solutions of bulk-surface reaction-diffusion systems have been recently carried out (see \cite{morgan2023global,morgan2017uniform,sharma2016global} for specific details). In \cite{sharma2016global}, criteria for guaranteeing uniform boundedness of solutions in time for BS-RDEs was established. The analysis was based on the assumption that the reaction kinetics are locally Lipschitz and quasi-positive. Global well-posedness and global existence of solutions for BS-RDEs were subsequently shown in \cite{morgan2017uniform,morgan2023global}. The reaction kinetics considered in this study fulfill these assumptions, thereby guaranteeing the existence and uniqueness of solutions for this generalised BS-RDEs.}
     
    The class of BS-RDEs we consider consists of an arbitrary number $n \geq 2$ of components whose dynamics on the bulk and surface are coupled through linear mixed Robin (inhomogeneous Neumann) boundary conditions. \rtext{To date, most of the analysis has been restricted to the case where there is a natural compatibility between the eigenfunctions of the {\it ground state} steady solution with respect to the Laplace and Laplace-Beltrami operators.} The simplest geometries with natural compatibility are a circle in 2D (the bulk being the interior of the circle and the surface being its boundary) and a ball in 3D. \rtext{Hence, in this study} we restrict \rtext{our} analysis to the case of the ball with linear kinetics in the bulk; see Sec.~\ref{sec:model} for more details.
    
    \rtext{This study aims} to understand the nature of patterned states that can form under parameter variation as a bifurcation from a trivial radially-symmetric ground state. One inspiration for the present work is due to Paquin-Lefebvre {\it et al} \cite{paquin-lefebvre}, who follow a similar procedure on a disk, where rotational symmetry is exploited to carry out the analysis. However, the case of a bulk-surface reaction-diffusion model posed on a ball is more complex due to the extra symmetries involved. \rtext{The interested reader is referred to} \cite{charette2018,matthews,matthews2,callahan} and references therein for analogous bifurcation results for reaction-diffusion systems posed on the surface of a sphere.
     
    The main result (see Sec.~\ref{sec:main}) is to derive general expressions for amplitude equations for small-amplitude steady patterns that can be used to determine the bifurcation and stability of different patterned states. We find that these amplitude equations take the same form as those describing Turing bifurcations of PDE systems posed on the surface of the sphere \cite{callahan}, owing to the underlying $O(3)$ symmetry of both problems.
    
    We complement this analytical work with numerical solutions to the initial value problem to illustrate patterns emerging from the fully nonlinear system and to show their transient dynamics. These results are developed using a state-of-the-art finite element method for nonlinear semilinear parabolic partial differential equations (PDEs) of which BS-PDEs are one such class \cite{alphonse2018coupled,elliott2013finite, elliott2017coupled, madzvamuse2016bulk,madzvamuse2015}.
    
    To illustrate our results, we choose two examples. One is a bulk-surface version of the well-studied Brusselator model \cite{Brusselator} in which there is an activator and an inhibitor field. The other is a four-component system that is inspired by problems in cell biology, specifically the reaction-diffusion kinetics of small GTPases; these are key molecular proteins that exhibit co-localisation of their molecular concentrations and interactions in both the bulk and surface geometries as active and inactive species. \rtext{The reader is referred to \cite{beta,MeronIssue} for general introductions to models describing} such biological processes and  \cite{borgqvist2021cell,cusseddu2019,cusseddu2022numerical,singh2022sensing} for \rtext{work} on specific bulk-surface versions of them. 
    
    For each example, we identify bifurcation curves in a parameter plane, corresponding to instabilities of the ground state for different spatial wavenumbers. From there, we derive the normal forms and give predictions for the bifurcation and stability of different kinds of steady patterns. These are corroborated by numerical simulations, which suggest the existence of more spatially localised patterns that are at least meta-stable. However, for higher wavenumbers, the amplitude equations have many components, and their full analysis is unknown. A complete analysis of their spatiotemporal dynamics, including the analogs of possible localised patterns after subcritical bifurcation \rtext{(see e.g.,~\cite{bramburger-review,WoodsReview})}, will be left for future work.
            
    The rest of the article is structured as follows. In Section \ref{sec:model} we present the general class of BS-RDEs \rtext{under study}, its linearised analysis together with the main result on how to construct the appropriate weakly nonlinear normal forms at pattern formation instabilities of different wavenumbers. The normal forms that arise have been analysed before in the context of systems with $O(3)$ symmetry, and we outline some of their dynamics in Section \ref{sec:explanation}. The proof of the main result is contained in Section \ref{sec:main}. Section \ref{sec:examples} presents two examples where we show how the results of the explicit computation of normal form coefficients explain what is observed in the finite-element simulations. Finally, Section \ref{sec:conclusions} contains concluding remarks and suggests avenues for future work.

\section{General theory} \label{sec:model}
    We consider a general class of BS-RDEs posed in a ball of radius $R > 0$. For analytical purposes, we restrict the class of PDEs in the bulk to be linear. We then find a general expression for its ground state and categorise its spectrum using modified spherical Bessel functions and Legendre polynomials. This leads to a statement of our main result, which is to provide closed-form expressions for the amplitude equations valid in a neighbourhood of pattern formation bifurcations for different wavenumbers. 

    \subsection{A class of bulk-surface reaction-diffusion systems in a ball}
        Consider the ball $\Omega = \left\{\mbf x \in \mathbb R^3 : \lVert \mbf x \rVert \leq R \right\}$ which we refer to as the \textit{bulk} domain; its boundary $\Gamma = \partial\Omega = \left\{\mbf x \in \mathbb R^3 : \lVert \mbf x \rVert = R \right\}$ will be called the \textit{surface}. \rtext{Let $r = \norm{\mbf x}$, and co}nsider an $n$-component field of spatiotemporal unknowns, designating its bulk components using capital letters,  $ \rtext{U_1 (\mbf x, t), \ldots, U_n (\mbf x, t): \Omega \times \mathbb R_0^+ \to \mathbb R}$, and surface components using lowercase letters $\rtext{u_1(\mbf x, t), \ldots, u_n(\mbf x, t)}: \Gamma \times \mathbb R_0^+ \to \mathbb R$. In compact form, \rtext{following the presentation given in \cite{paquin-lefebvre},} we suppose $\mbf U = \left(U_1, \ldots, U_n\right)^\intercal$ and $\mbf u = \left(u_1, \ldots, u_n\right)^\intercal$ satisfy the following system
        \begin{align}
            \begin{cases}
                \begin{array}{rcl}
                    \partial_t \mbf U &=& - B \, \mbf U + \mathbb D_U \nabla^2_\Omega \, \mbf U, \qquad \rtext{\text{for } (\mbf x, t) \in \Omega \times \mathbb R_0^+},
                    \\[1ex]
                    \rtext{\mathbb D_U \left. \dfrac{\partial \mbf U}{\partial \mbf{\hat n}}\right|_{r = R}} &=& K\left(\mbf u-\left.\mbf U\right|_{\rtext{r = R}}\right), \qquad \rtext{\text{for } (\mbf x, t) \in \Gamma \times \mathbb R_0^+}
                    \\[1ex]
                    \partial_t \mbf u &=& \mbf g(\mbf u) - K \left(\mbf u - \left. \mbf U\right|_{\rtext{r = R}}\right) + \mathbb D_u \, \nabla^2_\Gamma \, \mbf u, \qquad \rtext{\text{for } (\mbf x, t) \in \Gamma \times \mathbb R_0^+}
                \end{array}
            \end{cases}
            \label{generalsystem}
        \end{align}
        where $B = \diag\left(\sigma_1, \ldots, \sigma_n\right) \in \left(\mathbb R_0^+\right)^{n \times n}$, $K = \diag\left(K_1, \ldots, K_n\right) \in \left(\mathbb R_0^+\right)^{n\times n}$ \rtext{is a matrix formed out of so-called Langmuir-rate constants {\it cf.}~\cite{paquin-lefebvre}}, $\mbf g: \mathbb R^n \to \mathbb R^n$ is a $\mathcal C^3$ nonlinear vector function, $\mbf{\hat n}$ is the outward unit vector to $\Omega$ on $\Gamma$, $\mathbb D_U = \diag\left(D_1, \ldots, D_n\right)$, $\mathbb D_u = \diag\left(d_1, \ldots, d_n\right)$\rtext{, $\bs \nabla_\Omega^2$ is the Laplace operator in $\Omega$, and $\bs \nabla_\Gamma^2$ is the Laplace-Beltrami operator on $\Gamma$}. We allow $B$, $\mathbb D_U$, $K$, and $\mbf g$ to depend on a parameter, $\mu$, defined as the unfolding parameter in the sequel. For simplicity \rtext{of notation}, we omit the parameter dependence of these functions. Note that the presence of the same factor $K \left(\mbf u - \left. \mbf U\right|_{\rtext{r = R}}\right)$ in the 2nd and 3rd equations of \eqref{generalsystem} leads to \rtext{a} consistent \rtext{use of the Langmuir rate equation \cite{Langmuir} in our} \rtext{Robin-type boundary conditions}, which couple \rtext{the} bulk and surface fields.

    \subsection{The ground state solution} 
        We seek a radially symmetric steady state for the coupled system. \rtext{We denote this symmetric steady state as the {\it ground state,} which we define as a time-independent, spatially inhomogeneous solution. To proceed,} let $r = \norm{\mbf x}$, and seek \rtext{a ground state solution} $\mbf P = \left(\mbf U^*, \mbf u^*\right)^\intercal = \left(\mbf U^*(r), \mbf u^*\right)^\intercal$ of the system, \rtext{which satisfies}  $\partial_t \mbf U^* = \mbf 0$ and $\partial_t \mbf u^* = \mbf 0$. In particular, passing to spherical coordinates, $\mbf U^*(r)$ must satisfy
        \begin{align}
    		r^2 \, \frac{\partial^2 \mbf U^*}{\partial r^2} + \rtext{2 r} \, \frac{\partial \mbf U^*}{\partial r} - r^2 \, \mathbb D_U^{-1} \, B \, \mbf U^*&=\mbf 0, \qquad \text{for } r\in (0,R). \label{eq:radial}
    	\end{align}
        We seek a solution to \eqref{eq:radial} of the form $\mbf U^* = i_0\left(\bs \omega \, r\right) \, \mbf U^\dagger$, where $\mbf U^\dagger\in \mathbb R^n$ is a constant vector, $\bs \omega = \diag\left(\omega_1, \ldots, \omega_n\right)$, with $\omega_p = \sqrt{\sigma_p/D_p}$, for $p = 1, \ldots, n$, and $i_0\left(\bs \omega \, r\right) = \diag\left(i_0\big(\omega_1 \, r\big), \ldots, i_0\big(\omega_n \, r\big)\right)$, where $i_0$ is a modified spherical Bessel function of the first kind \cite[eq. 10.2.12]{besselfunctions}. 
        
        To determine $\mbf U^\dagger$ and $\mbf u^*$, we use the \rtext{coupling} boundary conditions (the second set of equations on \eqref{generalsystem}). Thus, we obtain $\mathbb D_U \, i_0'(\bs \omega R) \, \bs \omega \, \mbf U^\dagger =K\left(\mbf u^* - i_0(\bs \omega R) \, \mbf U^\dagger\right)$, which can be rearranged into the form 
    	\begin{align}
            \mbf U^\dagger = S_0 \, \mbf u^*, \quad \mbox{where } S_0\coloneqq \left(\mathbb D_U \, i_0'(\bs \omega R) \, \bs \omega + K \, i_0(\bs \omega R) \right)^{- 1} K. \label{eq:S0}
        \end{align}
        Here, we use $i_0'\left(\bs \omega \, r\right)$ to denote the diagonal matrix with entries $i_0'\left(\omega_p \, r\right)$, $p = 1, \ldots, n$, and a prime means differentiation with respect to $r$.
        
        Finally, we use the third \rtext{set of equations on} \eqref{generalsystem} to obtain:
    	\begin{align}
    		\mbf g(\mbf u^*) - K\left(I - i_0(\bs \omega R) \, S_0\right)\mbf u^* = \mbf 0, \label{nonlineareq}
        \end{align}
        where $I$ is \rtext{the $n \times n$ identity matrix}. Therefore, a radially-symmetric steady state of the form $\mbf P = \left(\mbf U^*(r), \mbf u^*\right)^\intercal$ can be found whenever there is a solution $\mbf u^*$ to \eqref{nonlineareq}, by setting $\mbf U^* = i_0(\bs \omega r) \, S_0 \, \mbf u^*$, with $S_0$ given by \eqref{eq:S0}.
	
    \subsection{Linearised analysis about the ground state}
        Now, let $\jac \mbf f(\mbf P)$ denote the Jacobian matrix of \eqref{generalsystem} at $\mbf P$ in the absence of diffusion, i.e., $\jac \mbf f(\mbf P) = \begin{pmatrix}
            -B & \mbf 0_{n\times n}
            \s 
            K \, \mathbbm I|_{\rtext{r = R}} & -K + \jac \mbf g\left(\mbf u^*\right)
        \end{pmatrix}$. Here, $\jac \mbf g\left(\mbf u^*\right)$ denotes the Jacobian matrix of $\mbf g$ at $\mbf u^*$ and $\mathbbm I|_{\rtext{r = R}}$ means restriction onto \rtext{ $\Gamma$} such that $\mathbbm I|_{\rtext{r = R}}(\mbf U) = \mbf U|_{\rtext r = R}$. Also, let $\mathbb D$ denote the diffusion matrix of \eqref{generalsystem} given by $\mathbb D = \begin{pmatrix}
            \mathbb D_U & \mbf 0_{n\times n}
            \s 
            \mbf 0_{n\times n} & \mathbb D_u
	    \end{pmatrix}$. This implies that the Jacobian \rtext{operator} of the full \rtext{ linearised PDE} system at $\mbf P$ is given by 
        $$
            \mathcal L:= \jac \mbf f(\mbf P) + \mathbb D \, \bs \nabla^2, \quad \mbox{ where } \bs \nabla^2 = \left(\nabla_\Omega^2, \nabla_\Gamma^2\right)^\intercal.
        $$
        In what follows, we shall be interested in bifurcations \rtext{from the} ground state, \rtext{$\mbf P$}. \rtext{For the linear part of} such \rtext{an} analysis, \rtext{we need to find values of a free parameter at which the kernel of $\mathcal L$ has a simple zero eigenvalue.} \rtext{The problem is similar to the equivalent bifurcation problem for} reaction-diffusion systems on the sphere with a constant steady state $\mbf u^*$, \rtext{with the added complication that we need to deal with the compatibility of the Laplace and Laplace-Beltrami operators, and need to solve the Helmholz} equation in the bulk.

        \rtext{Specifically,} we are interested in the space of eigenfunctions of the operator $\bs \nabla^2$, which satisfy 
        \begin{align}
            \nabla_\Omega^2 \, \rtext{\mbf V}(r, \theta, \phi) = - k^2 \, \rtext{\mbf V}(r, \theta, \phi), \qquad \text{and} \qquad \nabla_\Gamma^2 \, \rtext{\mbf v}(\theta, \phi) = - \frac{\ell (\ell + 1)}{R^2} \, \rtext{\mbf v}(\theta, \phi), \label{eigenvalueproblem}
        \end{align}
        with eigenvalues $- k^2$ and $- \ell (\ell + 1)/R^2$ in the bulk and on the sphere, respectively. Specifically, we find that the eigenfunctions satisfy
    	\begin{align}
    		\begin{pmatrix}
    		    \mbf U(r, \theta, \phi)
                \\
                \mbf u(\theta, \phi)
    		\end{pmatrix} &= Y_\ell^m(\theta, \phi) \begin{pmatrix}
    		    j_\ell(kr) \, \bs{\hat \varphi}
                \\
                \bs \varphi
    		\end{pmatrix}, \label{eigenfunctions}
    	\end{align}
        where $Y_\ell^m(\theta, \phi) = P_\ell^m(\cos(\theta)) \, e^{im\phi}$, $j_\ell$ is a spherical Bessel function of the first kind, $P_\ell^m$ is an associated Legendre polynomial (normalized as in \cite[eqns.~(2.1.20-21)]{legendre-poly} so that $\int_0^\pi \left(P_\ell^m(\cos(\theta))\right)^2 \, \dd \theta = 1$, and $\bs{\hat \varphi}, \, \bs{\varphi}\in \mathbb R^n$ are constant vectors \cite{courant}. \rtext{We highlight that the eigenfunctions of the Laplace-Beltrami operator correspond to the unit sphere. Due to this scaling, the eigenvalue of the Laplace-Beltrami operator will need to be scaled by $R^2$ \cite{chaplain2001spatio}.}

        \rtext{Our task in what follows is to establish conditions under which there is a regular bifurcation point that is of codimension-one in the parameter space and to describe the nature of the non-trivial steady solutions that emerge. We shall find that such bifurcations may either be of transcritical or pitchfork type, see e.g.~\cite{Iooss}. As it is well known, in the case of a pitchfork, these may be super or subcritical, and the bifurcating solutions can be stable or unstable as solutions to the initial-value problem. As we shall see in the next section, there is additional complexity to the structure of the bifurcating solutions, owing to the $O(3)$-symmetry of the underlying bulk-surface PDEs} 
        
        \rtext{To study system \eqref{generalsystem}, based on similar studies on related problems \cite{Fahad,tirapegui,paquin-lefebvre,edgardodegenerate}, we use the approach of developing amplitude equations.} Specifically, we consider perturbations to the ground state of the form
        \begin{align}
    		\begin{pmatrix}
    			\mbf U
    			\\
    			\mbf u
    		\end{pmatrix} = \mbf P + \sum_{m = - \ell}^\ell A_m \begin{pmatrix}
    			i_\ell\left(\bs \omega \, r\right) \, \bs{\hat \varphi}
    			\\
    			\bs \varphi
    		\end{pmatrix}  Y_\ell^{\rtext{m}}(\theta,\phi)
            \label{ansatz}
    	\end{align}
        where $A_m = A_m(t)\in \mathbb C$ is a complex variable such that $A_{-m} = (-1)^m \, \overline{A_m}$ for each $-\ell \leq m \leq \ell$\rtext{, and $i_\ell(\cdot) = i^{- \ell} \, \rtext{ j_\ell(i \, \cdot)}$ is a modified spherical Bessel function of the first kind \cite{besselfunctions}}.

    \subsection{The main result}
        As is common for bifurcation problems on a sphere \cite{callahan,chossat,matthews2}, we shall find that odd (respectively, even) values of $\ell$ give rise to pitchfork (respectively, transcritical) pattern-formation bifurcations for the amplitudes of the bulk-surface \rtext{reaction-diffusion} system \eqref{generalsystem}. We shall now formalise this result and give expressions for amplitude equations we obtain at each bifurcation in terms of the coefficients of the underlying system. It is convenient to introduce some notation. For each positive integer (wavenumber) $\ell$, we define the matrix
        \begin{equation}
            \mathbb B_\ell : = K\left(i_\ell\left(\bs \omega \, R\right) \, S_\ell - I\right) + \jac \mbf g\left(\mbf u^*\right) - \frac{\ell (\ell + 1)}{R^2} \, \mathbb D_u, \label{eq:Bdef}
        \end{equation}
        where \: $S_\ell\coloneqq \left(\mathbb D_U \, i_\ell'(\bs \omega \, R) \, \bs \omega + K \, i_\ell(\bs \omega \, R)\right)^{-1} K$. Moreover, we introduce the following multilinear symmetric vector functions:
    	\begin{align}
    	    \mbf G_{1, 1}(\mbf a) &= \rtext{\left. \left(\sum_{p = 1}^n a_p \, \frac{\partial}{\partial \mu} \left(\left. \frac{\partial \rtext{\mbf g}}{\partial u_p}(\mbf u)\right|_{\mbf u = \mbf u^*(\mu)}\right)\right)\right|_{\mu = \mu^*},} \label{eqn:G11}
    	    \s
    		\mbf G_2\left(\mbf a, \mbf b\right) &= \rtext{\frac{1}{2} \, \left. \left(\sum_{1 \leq p, q \leq n} a_p \, b_q \, \left. \frac{\partial^2 \rtext{\mbf g}}{\partial u_p \, \partial u_q}(\mbf u)\right|_{\mbf u = \mbf u^*(\mu)}\right)\right|_{\mu = \mu^*},} \label{eqn:G2}
    		\s 
    		\mbf G_3\left(\mbf a, \mbf b, \mbf c\right) &= \rtext{\frac{1}{6} \, \left.\left(\sum_{1 \leq p, q, s \leq n} a_p \, b_q \, c_s \, \left. \frac{\partial^3 \rtext{\mbf g}}{\partial u_p \, \partial u_q \, \partial u_s}(\mbf u)\right|_{\mbf u = \mbf u^*(\mu)}\right)\right|_{\mu = \mu^*},} \label{eqn:G3}
    	\end{align}
        where $\mbf a = \left(a_1, \ldots, a_n\right)^\intercal \in \mathbb R^n$, $\mbf b = \left(b_1, \ldots, b_n\right)^\intercal \in \mathbb R^n$, $\mbf c = \left(c_1, \ldots, c_n\right)^\intercal \in \mathbb R^n$, and $\mu^*\in \mathbb R$. The main result of this study can  be stated as follows:
        \newtheorem{theorem}{Theorem}
    	\begin{theorem} \label{th:maintheorem}
    		Let $\mbf P = \left(\mbf U^*(r), \mbf u^*\right)^\intercal$ be a radially-symmetric steady state of \eqref{generalsystem} and $\mathbb B_\ell$, $\mbf G_{1,1}$, $\mbf G_2$, $\mbf G_3$ as defined in \eqref{eq:Bdef}-\eqref{eqn:G3}. If there exists $\mu^*\in \mathbb R$ and a \textit{critical wavenumber} $\ell$ such that \rtext{the following three conditions hold}
            \begin{equation}
                \det\left(\mathbb B_\ell\right) = 0, \quad \ker\left(\mathbb B_\ell\right) \cap \Im\left(\mathbb B_\ell\right) = \left\{\mbf 0\right\}, \quad \mbox{and} \quad C_{1, 1} \neq 0, \label{kerim}
            \end{equation}
            where
            \begin{align*}
                & C_{1,1} = 
                    \frac{1}{I_\ell} \left(\int_0^R \left(- \frac{\partial B}{\partial \mu} \, i_\ell(\bs \omega \, r) \, S_\ell \, \bs \varphi\right) \cdot \left(i_\ell(\bs \omega \, r) \, S_\ell \, \bs \psi \right) r^2 \, \dd r \right. \notag
                    \\
                    & \hspace{1cm} + R^2 \, \left(\mbf G_{1,1}(\bs \varphi) - \frac{\partial K}{\partial \mu} \, \left(\bs \varphi - i_\ell(\bs \omega \, R) \, \bs{\hat \varphi}\right) \right) \cdot \bs \psi \notag
                    \\
                    & \hspace{0.2cm} \left. - \frac{\ell(\ell + 1)}{R^2} \left(\int_0^R \left(\frac{\partial \mathbb D_U}{\partial \mu} \, i_\ell(\bs \omega \, r) \, S_\ell \, \bs \varphi\right) \cdot \left(i_\ell(\bs \omega \, r) \, S_\ell \, \bs \psi\right) r^2 \, \dd r + R^2 \, \left(\frac{\partial \mathbb D_u}{\partial \mu} \, \bs \varphi\right) \cdot \bs \psi\right)\right),
                \\
                &I_\ell = \int_0^R i_\ell(\bs \omega r) \, S_\ell \, \bs \varphi \cdot i_\ell(\bs \omega r) \, S_\ell \, \bs \psi \, r^2 \, \dd r + R^2 \, \bs \varphi \cdot \bs \psi, \notag
            \end{align*}
            with $\bs \varphi \in \ker\left(\mathbb B_\ell\right)\setminus \{\mbf 0\}$ and $\bs \psi \in \ker\left(\mathbb B_\ell^\intercal\right)\setminus \{\mbf 0\}$, then
    		\begin{enumerate}
                \item If $\ell$ is even, and the quadratic coefficients of the amplitude equations \eqref{quadratic_coef} are different from zero, then the system undergoes a codimension-one (Turing) transcritical bifurcation as $\mu$ varies through $\mu^*$.
                \item If $\ell$ is even and the quadratic coefficients of the amplitude equations \eqref{quadratic_coef} are equal to zero, then the system undergoes a codimension-two (Turing) transcritical bifurcation as $\mu$ varies through $\mu^*$.
                \item If $\ell$ is odd, and $\det \left(\mathbb B_p\right) \neq 0$ for every even integer $0 \leq p\leq 2 \ell$, then the system undergoes a (Turing) pitchfork bifurcation as $\mu$ varies through $\mu^*$.
    		\end{enumerate}
    		Furthermore, all of these bifurcations are described by a vector \\ \rtext{$\mbf A_\ell = \big(A_{- \ell}, A_{- \ell + 1}, \ldots, A_{\ell - 1}, A_\ell\big)^\intercal$ of $2 \ell + 1$} amplitudes that fulfill the following:
    		\begin{enumerate}
    			\item If $\ell$ is even and the quadratic coefficients of the amplitude equations, \eqref{quadratic_coef}, are different from zero, then
    			\begin{align}
    				\dot A_m &= C_{1, 1} \, \left(\mu - \mu^*\right) \, A_m + \sum_{\substack{-\ell\leq q_1 \leq q_2\leq \ell \s q_1 + q_2=m}} C_{q_1,q_2,m}^{[2]} A_{q_1}A_{q_2} + \mathcal O\big(\norm{\rtext{\mbf A_\ell}}^3\big), \label{eq:nf2}
    			\end{align}
    			for each  $m\in [-\ell, \ell]\cap \mathbb Z$, where
    			\begin{align}
    				C_{q_1, q_2, m}^{[2]} &= \frac{R^2}{I_\ell} \, \delta_{m, q_1 + q_2} \, d_{\ell, q_1, q_2}^{[2]} \, \mbf G_2\left(\bs \varphi, \bs \varphi\right) \cdot \bs \psi, \quad \rtext{\text{with}} \label{quadratic_coef}
                    \\
                    \rtext{d_{\ell, q_1, q_2}^{[2]}} &\rtext{= \int_{- 1}^1 P_\ell^{q_1}(x) \, P_\ell^{q_2}(x) \, P_\ell^{q_1 + q_2}(x) \, \dd x,} \notag
    			\end{align}
                and $\delta_{m, q_1 + q_2}$ is the \textit{Kronecker delta function} that is equal to 1 if $m = q_1 + q_2$ and 0 otherwise.

    			\item If $\ell$ is odd or $\ell$ is even and the quadratic coefficients of the amplitude equations \eqref{quadratic_coef} are equal to zero, then
    			\begin{align}
    				\dot A_m &= C_{1,1} \, \left(\mu - \mu^*\right) \, A_m + \sum_{\substack{-\ell\leq q_1\leq q_2\leq q_3\leq \ell \s q_1+q_2+q_3 = m}} C_{q_1,q_2,q_3,m}^{[3]} A_{q_1} A_{q_2} A_{q_3} + \mathcal O\big(\norm{\rtext{\mbf A_\ell}}^4\big), \label{eq:nf3}
    			\end{align}
    			for each $m\in [-\ell, \ell]\cap \mathbb Z$, where 
    			\begin{align*}
    				C_{q_1, q_2, q_3, m}^{[3]} &=\begin{multlined}[t][0cm]
        			    \frac{R^2}{I_\ell} \, \delta_{m, q_1 + q_2 + q_3} \left(2 \sum_{p = \abs{q_1 + q_2}}^{2 \, \ell} \tilde d^{[2]}_{p, q_1, q_2} \mbf G_2\left(\mbf u_{p, q_1, q_2}^{[2]}, \bs \varphi\right) \cdot \bs \psi \right.
                        \\
            			+ \left. \tilde d^{[3]}_{q_1, q_2, q_3} \mbf G_3\left(\bs \varphi, \bs \varphi, \bs \varphi\right) \cdot \bs \psi \vphantom{\sum_{p = \abs{q_1 + q_2}}^{2 \, \ell}}\right),
        			\end{multlined}
                \end{align*}
                with
                $$
                    \tilde d_{p, q_1, q_2}^{[2]} = \int_{-1}^1 P_\ell^{q_3}(x) \, P_p^{q_1 + q_2}(x) \, P_\ell^m(x) \, \dd x, 
                $$
                $$
                    \tilde d_{q_1, q_2, q_3}^{[3]} = \int_{-1}^1 P_\ell^{q_1}(x) \, P_\ell^{q_2}(x) \, P_\ell^{q_3}(x) \, P_\ell^m(x) \, \dd x,
                $$
                and $\mbf u_{p, q_1, q_2}^{[2]}$ solves
        		\begin{align*}
        			\mathbb B_p \, \mbf u_{p, q_1, q_2}^{[2]} = - d_{p, q_1, q_2}^{[2]} \, \mbf G_2\left(\bs \varphi,\bs \varphi\right),
                    \quad \text{with}
                    \s
        			d_{p, q_1, q_2}^{[2]} = \int_{-1}^1 P_\ell^{q_1}(x) \, P_\ell^{q_2}(x) \, P_p^{q_1 + q_2}(x) \, \dd x,
        		\end{align*}
        		for each set of integers $- \ell \leq p, m, q_1, q_2, q_3 \leq \ell$.
            \end{enumerate}
            In the above, all the coefficients $C_{1, 1}$, $C_{q_1, q_2, m}^{[2]}$, $C_{q_1, q_2, q_3, m}^{[3]}$ are evaluated at $\mu = \mu^*$.
    	\end{theorem}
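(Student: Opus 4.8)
The plan is to perform a weakly nonlinear reduction onto the kernel of $\mathcal{L}$ at $\mu=\mu^*$ via a multiple-scales expansion, with the amplitude equations extracted from a Fredholm solvability (orthogonality) condition against the adjoint kernel. First I would confirm that the $2\ell+1$ functions in \eqref{eigenfunctions}, with $\bs{\hat\varphi}=S_\ell\bs\varphi$ and $\bs\varphi\in\ker(\mathbb{B}_\ell)\setminus\{\mbf 0\}$, span $\ker(\mathcal{L})$: the steady bulk equation $-B\mbf U+\mathbb{D}_U\nabla_\Omega^2\mbf U=\mbf 0$ is satisfied identically by the modified spherical Bessel profile (since $D_p\omega_p^2=\sigma_p$), the Robin coupling reproduces exactly the relation $\bs{\hat\varphi}=S_\ell\bs\varphi$ that generalises \eqref{eq:S0}, and the surface equation collapses to $\mathbb{B}_\ell\bs\varphi=\mbf 0$ as in \eqref{eq:Bdef}. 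The hypothesis $\ker(\mathbb{B}_\ell)\cap\Im(\mathbb{B}_\ell)=\{\mbf 0\}$ forces the zero eigenvalue to be algebraically simple (no generalised eigenvector), so on each azimuthal sector indexed by $m$ the centre subspace is exactly one-dimensional and the full kernel is $(2\ell+1)$-dimensional.

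Next I would introduce a small amplitude $\ve$, write $(\mbf U,\mbf u)^\intercal=\mbf P+\ve\,\mbf w_1+\ve^2\,\mbf w_2+\cdots$, unfold the parameter as $\mu-\mu^*=\Ord{\ve}$ or $\Ord{\ve^2}$ according to whether quadratic resonances survive, and introduce a correspondingly slow time. At $\Ord{\ve}$ one recovers the ansatz \eqref{ansatz}, $\mbf w_1=\sum_m A_m\,(i_\ell(\bs\omega r)\bs{\hat\varphi},\,\bs\varphi)^\intercal\,Y_\ell^m$. At $\Ord{\ve^2}$ the forcing is the surface quadratic nonlinearity $\mbf G_2$ together with the $\mu$-derivatives of $B$, $K$, $\mathbb{D}_U$, $\mathbb{D}_u$ and $\mbf g$. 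The crucial algebraic ingredient is the reduction of products $Y_\ell^{q_1}Y_\ell^{q_2}$: integration in $\phi$ enforces the selection rule $m=q_1+q_2$ (hence the Kronecker deltas), while triple integrals of associated Legendre functions, $d^{[2]}_{p,q_1,q_2}$, project the product onto degree-$p$ harmonics for $|q_1+q_2|\le p\le 2\ell$.

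I would then split the second-order response into its resonant part ($p=\ell$), which must be removed by the solvability condition, and its non-resonant part, obtained by inverting $\mathbb{B}_p$ to give $\mbf u^{[2]}_{p,q_1,q_2}$; solvability in the pitchfork case is precisely the requirement $\det(\mathbb{B}_p)\neq0$ for even $0\le p\le 2\ell$. The parity $P_\ell^m(-x)=(-1)^{\ell+m}P_\ell^m(x)$ makes the integrand of $d^{[2]}_{\ell,q_1,q_2}$ have parity $(-1)^\ell$, so the quadratic coefficients \eqref{quadratic_coef} vanish identically for odd $\ell$, pushing the leading dynamics to cubic order \eqref{eq:nf3}, whereas for even $\ell$ they generically persist and yield the transcritical normal form \eqref{eq:nf2}. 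In each case the solvability condition is imposed in the bulk-surface bilinear pairing with weight $\int_0^R(\cdot)\,r^2\,\dd r$ in $\Omega$ plus $R^2(\cdot)$ on $\Gamma$; projecting the secular forcing against $(i_\ell(\bs\omega r)S_\ell\bs\psi,\,\bs\psi)^\intercal$ with $\bs\psi\in\ker(\mathbb{B}_\ell^\intercal)$ and normalising by $I_\ell$ delivers $C_{1,1}$, $C^{[2]}_{q_1,q_2,m}$ and $C^{[3]}_{q_1,q_2,q_3,m}$ in the stated form.

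The main obstacle I anticipate is identifying the correct adjoint pairing for the coupled operator. Because the bulk and surface fields communicate through Robin boundary conditions rather than through a bulk coupling term, one must integrate by parts in $\Omega$ and on $\Gamma$ and track the boundary fluxes so that the adjoint acts through $\mathbb{B}_\ell^\intercal$ on the surface and all boundary contributions cancel; this is what selects the weighted inner product defining $I_\ell$ and produces the combination $(\bs\varphi-i_\ell(\bs\omega R)\bs{\hat\varphi})$ multiplying $\partial K/\partial\mu$ in $C_{1,1}$. A secondary bookkeeping challenge is the cubic solvability condition, which simultaneously involves the fourfold Legendre integral $\tilde d^{[3]}_{q_1,q_2,q_3}$ (from $\mbf G_3$ on three copies of the kernel) and the mixed integral $\tilde d^{[2]}_{p,q_1,q_2}$ (feeding the non-resonant field $\mbf u^{[2]}_{p,q_1,q_2}$ back through $\mbf G_2$); keeping the selection rules and summation ranges of these overlapping triple and quadruple products consistent is the delicate part.
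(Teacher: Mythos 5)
Your proposal follows essentially the same route as the paper's proof: identification of the $(2\ell+1)$-dimensional kernel via the Bessel/Legendre ansatz with $\mathbb B_\ell\bs\varphi=\mbf 0$, expansion of the products $Y_\ell^{q_1}Y_\ell^{q_2}$ onto harmonics of degree $|q_1+q_2|\le p\le 2\ell$ (the paper's Lemma~\ref{lemma:linearexpansion}), Fredholm solvability against the adjoint kernel $\bigl(i_\ell(\bs\omega r)S_\ell\bs\psi,\,\bs\psi\bigr)^\intercal$ in the weighted bulk-plus-surface inner product, inversion of $\mathbb B_p$ for the non-resonant second-order response, the parity argument killing $d^{[2]}_{\ell,q_1,q_2}$ for odd $\ell$, and a separate cross-order computation for $C_{1,1}$. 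The only difference is cosmetic—you use an explicit $\ve$-scaled multiple-scales expansion where the paper uses the Elphick-et-al normal-form bookkeeping in powers of the amplitudes—and your two anticipated obstacles (the adjoint boundary-flux cancellation and the cubic selection-rule bookkeeping) are exactly the content of the paper's Lemma~\ref{lemma:adjoint} and Subsection~\ref{sub:order3}.
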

        The proof of Theorem \ref{th:maintheorem} \rtext{, contained in Section \ref{sec:main},  provides a method for the explicit construction of each of the normal form coefficients for any particular bulk-surface PDE of the general form \eqref{generalsystem}. However, we note that the form of the amplitude equations \eqref{eq:nf2} or \eqref{eq:nf3} is identical to those that are derived for analogous bifurcations of reaction-diffusion equations on the sphere. There, it is also known that due to the symmetry of the sphere, many of the normal form coefficients must actually be zero or trivially related to others. This is spelled out in the next section.}

\section{The dynamics of the normal forms}  \label{sec:explanation}
    Equations \eqref{eq:nf2} and \eqref{eq:nf3} have precisely the same form as the normal forms derived for Turing bifurcations in systems with $O(3)$-symmetry. We note that such normal forms were previously derived for PDEs posed on the surface of a sphere.  Callahan (\cite{callahan} (see also the earlier results by Chossat {\it et al} \cite{chossat}) has shown how the coefficients of the normal forms for different values of $\ell$ are consequent on the symmetry of the problem. From this, he was able to apply the equivariant branching Lemma \cite{golubitsky} to find the criticality (direction of bifurcation curves) and stability of each different kind of bifurcating pattern up to $\ell = 6$. This leads to the conjecture in \cite{callahan} that only 10 topologically distinct stable steady patterns can arise for $\ell \leq 6$; see Fig.~\ref{fig:Callahan_spherical_harmonics}.
    \begin{figure}
        \centering
        \begin{tikzpicture}
            \node (image) at (0,0) {
                \includegraphics[width = 0.98 \textwidth]{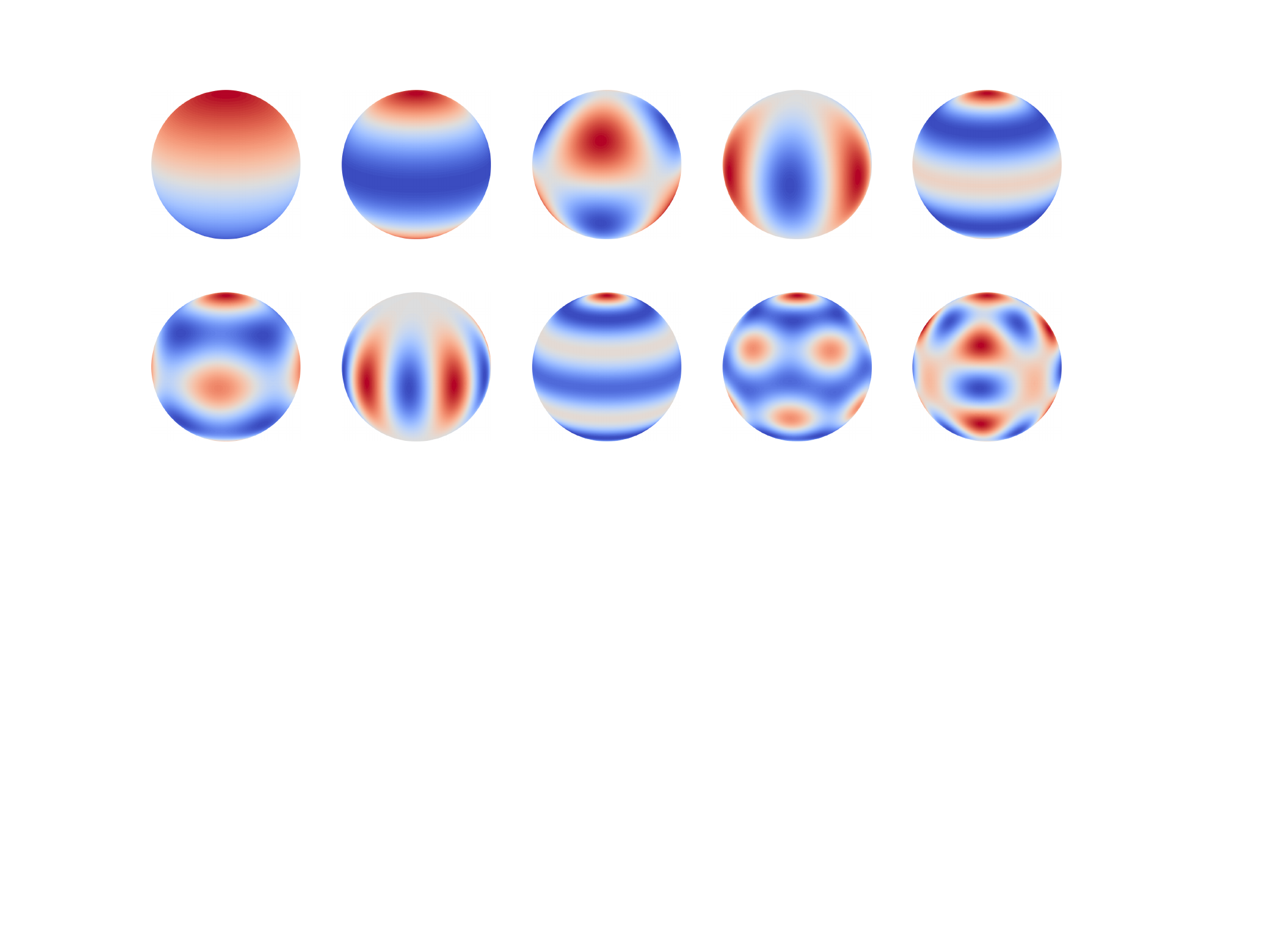}
            };
            \node (text) at (- 6.4, 3.2) {$(a)$};
            \node (text) at (- 3.25, 3.2) {$(b)$};
            \node (text) at (- 0.1, 3.2) {$(c)$};
            \node (text) at (3.15, 3.2) {$(d)$};
            \node (text) at (6.3, 3.2) {$(e)$};
            \node (text) at (- 6.4, - 0.1) {$(f)$};
            \node (text) at (- 3.25, - 0.1) {$(g)$};
            \node (text) at (- 0.1, - 0.1) {$(h)$};
            \node (text) at (3.15, - 0.1) {$(i)$};
            \node (text) at (6.3, - 0.1) {$(j)$};
        \end{tikzpicture}
        \caption{A collection of spherical harmonics and some of their linear combinations. Top row, from left to right: $Y_1^0(\theta, \phi)$,  $Y_2^0(\theta, \phi)$,  $Y_3^2(\theta, \phi)$, $Y_3^3(\theta, \phi)$, $Y_4^0(\theta, \phi)$. Bottom row, from left to right: $\sqrt{14} \, Y_4^0(\theta, \phi) + \sqrt{5} \, Y_4^4(\theta, \phi)$, $Y_5^5(\theta, \phi)$, $Y_6^0(\theta, \phi)$, $\sqrt{11} \, Y_6^0(\theta, \phi) + \sqrt{7} \, Y_6^5(\theta, \phi)$, $\sqrt{2} \, Y_6^0(\theta, \phi) + \sqrt{7} \, Y_6^4(\theta, \phi)$. These functions are reported in Figure 1 of \cite{callahan}, respectively, as Solution 1, 2, 4, 5, 6, 7, 9, 16, 17, and 18.}
        \label{fig:Callahan_spherical_harmonics}
    \end{figure}
    Other information is available in \cite{matthews,matthews2} for higher $\ell$, including numerical indications of complex spatio-temporal dynamics. 
    
    Nevertheless, a full \rtext{description} of the dynamics of the normal forms, even for the lowest values of $\ell$ is, as far as we are aware, unknown. \rtext{For later use when we compute the amplitude equations for example models, we summarise here what is already known for some low values of $\ell$, giving a few new insights and highlighting what remains to be investigated.}

    \subsection{The case $\ell = 1$}
        Here,  $2 \ell + 1 = 3$, so there are three \rtext{components} in the amplitude vector $\rtext{\mbf A_1} = \left(A_{- 1}, A_0, A_1\right)^\intercal$, such that, as in definition \eqref{ansatz}, $A_{- 1} = - \ol{A_1}$. Following the theory developed in \cite{callahan}, we expect to find the specific form of the amplitude equations:
        \begin{equation}
            \begin{cases}
                 \dot A_0 = A_0 \left(\varepsilon + a \, A_0^2 - 2 \, a \, A_{- 1} \, A_1\right) + \mathcal O\left(\norm{\rtext{\mbf A_1}}^4\right), 
                \\
                \dot A_1 = A_1 \left(\varepsilon + a \, A_0^2 - 2 \, a \, A_{- 1} \, A_1\right) + \mathcal O\left(\norm{\rtext{\mbf A_1}}^4\right), 
            \end{cases} \label{examplel1}
        \end{equation}
        with the conjugate equation for $A_{- 1}$. Here, there are two real parameters to be determined, a single criticality coefficient $a \in \mathbb R$ and an unfolding parameter $\varepsilon = C_{1, 1} \left(\mu - \mu^*\right)$, defined so that the origin is stable whenever $\varepsilon < 0$.

        Neglecting higher-order terms, \eqref{examplel1} has a family of steady states defined by $A_0 = A_1 = A_{- 1} = 0$ and $\varepsilon + a \, A_0^2 - 2 \, a \, A_{- 1} \, A_1 = 0$, and at least two invariant subspaces given by $A_0 = 0$, and  $A_{- 1} = A_1 = 0$. Only the latter is analyzed carefully in \cite{callahan}, referred to as the `primary branch'. However, the former is also an invariant manifold and, when $A_0 = 0$, \eqref{examplel1} becomes $\dot A_1 = A_1 \left(\varepsilon + 2 \, a \, \abs{A_1}^2\right)$, whilst when $A_{- 1} = A_1 = 0$, \eqref{examplel1} becomes $\dot A_0 = A_0 \left(\varepsilon + a \, A_0^2\right)$. These systems are the normal forms of simple rotational pitchfork bifurcations, which turn out to be supercritical (resp.~subcritical) if and only if the third-order coefficient is negative (resp.~positive) (see e.g.~\cite{strogatz}).
        
        On the other hand, looking at the structure of \eqref{examplel1}, even for the planar reduction in which $A_{- 1} = - \ol{A_1}$, we have a one-dimensional continuum of steady states given by $\varepsilon + a \, A_0^2 - 2 \, a \, A_{- 1} \, A_1 = 0$. To analyze \rtext{the} stability of this family, let $z = \varepsilon + a \, A_0^2 - 2 \, a \, A_{-1} \, A_1 = \varepsilon + a \, A_0^2 + 2 \, a \, \abs{A_1}^2$. Therefore, \rtext{when differentiating this new variable with respect to $t$, we obtain}
        \begin{align*}
            \dot z &= 2 \, a \, A_0 \, \dot A_0 - 2 \, a \, \left(\dot A_{-1} \, A_1 + A_{-1} \, \dot A_1\right) \: = 2 \, a \, z \left(A_0^2 + 2 \, \abs{A_1}^2\right) + \mathcal O\left(\norm{\rtext{\mbf A_1}}^5\right)
            \\
            &= 2 \, z  (z - \eps) + \mathcal O\left(\norm{\rtext{\mbf A_1}}^5\right).
        \end{align*}
        \rtext{Thus}, this family of steady states is stable for $\eps > 0$, when the origin is unstable, but has a small basin of attraction in the direction of positive $z$. Thus, the full dynamics of \eqref{examplel1} appears to be non-trivial.

    \subsection{The case $\ell = 2$}
        We now have $2 \ell + 1 = 5$ components in the amplitude vector $\rtext{\mbf A_2} = \left(A_{- 2}, A_{- 1}, A_0, A_1, A_2\right)^\intercal$, with $A_{- 1} = - \ol{A_1}$, and $A_{- 2} = \ol{A_2}$. Again, by the theory developed in \cite{callahan}, the amplitude equations should have the form 
        \begin{equation}
            \begin{cases}
                \dot A_0 = A_0 \left(\varepsilon - 2 \, a \, A_0\right) + 2 \, a \, A_{- 1} \, A_1 + 4 \, a \, A_{- 2} \, A_2 + A_0 \, \mathcal P_2\rtext{\left(\mbf A_2\right)} + \mathcal O\left(\norm{\rtext{\mbf A_2}}^4\right), 
                \\
                \dot A_1 = A_1 \left(\varepsilon - 2 \, a \, A_0\right) + 2 \, \sqrt{6} \, a \, A_{- 1} \, A_2 + A_1 \, \mathcal P_2\rtext{\left(\mbf A_2\right)} + \mathcal O\left(\norm{\rtext{\mbf A_2}}^4\right), 
                \\
                \dot A_2 = A_2 \left(\varepsilon + 4 \, a \, A_0\right) - \sqrt{6} \, a \, A_1^2 + A_2 \, \mathcal P_2\rtext{\left(\mbf A_2\right)} + \mathcal O\left(\norm{\rtext{\mbf A_2}}^4\right),
            \end{cases}
            \label{examplel2}
        \end{equation}
        where $\mathcal P_2\rtext{\left(\mbf A_2\right)} = - b \, A_0^2 + 2 \, b A_{- 1} \, A_1 - 2 \, b \, A_{- 2} \, A_2$, in addition to the conjugate equations for $A_{- 1}$ and $A_{- 2}$. The equations now depend on two criticality parameters $a, b \in \mathbb R$, apart from the unfolding parameter $\varepsilon$. Finding all invariant subspaces is trickier than \eqref{examplel1}. However, some general results can still be obtained.
        \newtheorem{lemma}{Lemma}
        \begin{lemma} \label{Lemma1}
            When neglecting cubic and higher-order terms in \eqref{examplel2}, the following are invariant subspaces
            \begin{align}
                A_{- 1} = A_1 = 0 \qquad \text{and} \qquad \abs{A_2}^2 = \frac{3}{2} \, A_0^2, \label{firstinvman}
                \\
                A_0 = 0 \qquad \text{and} \qquad \abs{A_1}^2 = 2 \, \abs{A_2}^2, \label{secondinvman}
            \end{align}
            while
            \begin{align}
                4 \, A_0 \left(- \, A_0 + \frac{\varepsilon}{a}\right) + 24 \, \abs{A_2}^2 = \frac{\varepsilon^2}{a^2}, \qquad 2 \, A_0 \left(4 \, A_0 - \frac{\varepsilon}{a}\right) + 12 \, \abs{A_1}^2 = \frac{\varepsilon^2}{a^2}, \notag
                \\
                \text{and} \qquad A_2 \left(\varepsilon + 4 \, a \, A_0\right) - \sqrt{6} \, a \, A_1^2 = 0 \label{steadystateman}
            \end{align}
            give a manifold composed of steady states of  \eqref{examplel2}.
        \end{lemma}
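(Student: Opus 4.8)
The plan is to exploit the reality conditions $A_0 = \ol{A_0}$, $A_{-1} = -\ol{A_1}$ and $A_{-2} = \ol{A_2}$ to rewrite the quadratic truncation of \eqref{examplel2} (cubic and higher terms dropped, as in the hypothesis) entirely in terms of the real quantity $A_0$, the complex amplitudes $A_1,A_2$, and the moduli $\abs{A_1}^2,\abs{A_2}^2$. Since $A_{-1}A_1 = -\abs{A_1}^2$ and $A_{-2}A_2 = \abs{A_2}^2$, the first equation collapses to $\dot A_0 = \varepsilon A_0 - 2aA_0^2 - 2a\abs{A_1}^2 + 4a\abs{A_2}^2$, and similarly $\dot A_1 = A_1(\varepsilon - 2aA_0) - 2\sqrt6\,a\,\ol{A_1}A_2$ and $\dot A_2 = A_2(\varepsilon + 4aA_0) - \sqrt6\,a\,A_1^2$. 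All three assertions then become statements about this reduced flow, and I would prove each by checking that the vector field is tangent to the relevant zero set, recording each $\tfrac{d}{dt}\abs{A_j}^2$ as $2\,\mathrm{Re}(\ol{A_j}\dot A_j)$ so that the quadratic cancellations are transparent.

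For the invariant subspace \eqref{firstinvman}, I first note that $\{A_1 = A_{-1} = 0\}$ is invariant because every term of $\dot A_1$ carries a factor $A_1$ or $\ol{A_1}$; restricted there the flow is $\dot A_0 = \varepsilon A_0 - 2aA_0^2 + 4a\abs{A_2}^2$ and $\dot A_2 = (\varepsilon + 4aA_0)A_2$. Setting $f = \abs{A_2}^2 - \tfrac32 A_0^2$ and using $\tfrac{d}{dt}\abs{A_2}^2 = 2(\varepsilon + 4aA_0)\abs{A_2}^2$, I expect the explicit cancellation $\dot f = (2\varepsilon - 4aA_0)f$, so $f\equiv 0$ is invariant. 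The subspace \eqref{secondinvman} is analogous: on $\{A_0 = 0\}$ one has $\dot A_0 = 2a(2\abs{A_2}^2 - \abs{A_1}^2)$, which vanishes exactly on $\abs{A_1}^2 = 2\abs{A_2}^2$, while computing $\tfrac{d}{dt}\big(\abs{A_1}^2 - 2\abs{A_2}^2\big)$ the phase-dependent cross terms $\mathrm{Re}(\ol{A_1}^2 A_2)$ cancel and leave $2\varepsilon\big(\abs{A_1}^2 - 2\abs{A_2}^2\big)$, confirming tangency of the field to both defining relations simultaneously.

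For the steady-state manifold \eqref{steadystateman} I would impose $\dot A_0 = \dot A_1 = \dot A_2 = 0$. The third relation in \eqref{steadystateman} is precisely $\dot A_2 = 0$. To obtain the two real relations I plan to eliminate the phases: multiplying $\dot A_1 = 0$ by $\ol{A_1}$ and $\dot A_2 = 0$ by $\ol{A_2}$ and combining them to cancel the common term $A_1^2\ol{A_2}$ yields the real identity $\abs{A_1}^2(\varepsilon - 2aA_0) = 2\abs{A_2}^2(\varepsilon + 4aA_0)$. Together with $\dot A_0 = 0$ this is a linear system for $\abs{A_1}^2$ and $\abs{A_2}^2$ in terms of $A_0$; solving it should reproduce exactly $24\abs{A_2}^2 = (\varepsilon/a - 2A_0)^2$ and $12\abs{A_1}^2 = \varepsilon^2/a^2 + 2(\varepsilon/a)A_0 - 8A_0^2$, i.e.\ the first two equations of \eqref{steadystateman}. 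Conversely, substituting these back and using $\dot A_2 = 0$ to express $A_2$ through $A_1^2$ recovers $\dot A_1 = 0$, so the solution set of \eqref{steadystateman} consists genuinely of equilibria.

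The main obstacle I anticipate is bookkeeping in part 3 rather than any conceptual difficulty: $\dot A_1 = 0$ is a genuinely complex constraint, so the phase must be removed carefully, and the elimination (and the linear solve) is valid only away from the degenerate loci $a = 0$, $A_1 = 0$, $\varepsilon + 4aA_0 = 0$ and $A_0 = 0$. These excluded cases are exactly where the steady-state family touches the invariant subspaces \eqref{firstinvman}--\eqref{secondinvman}, so I would treat them separately and verify that the closure remains a manifold of steady states. The invariance computations of parts 1 and 2 are routine once the reality conditions have been applied, the only subtlety being the systematic use of $2\,\mathrm{Re}(\ol{A_j}\dot A_j)$ to expose the cancellations.
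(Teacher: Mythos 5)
Your proposal is correct and follows essentially the same route as the paper's own proof: the same reduction via the reality conditions, the same factorisations $\frac{\dd}{\dd t}\left(\abs{A_2}^2 - \tfrac{3}{2}A_0^2\right) = 2\left(\varepsilon - 2aA_0\right)\left(\abs{A_2}^2 - \tfrac{3}{2}A_0^2\right)$ and $\frac{\dd}{\dd t}\left(\abs{A_1}^2 - 2\abs{A_2}^2\right) = 2\varepsilon\left(\abs{A_1}^2 - 2\abs{A_2}^2\right)$ on the respective subspaces, and for \eqref{steadystateman} the same verification (subtracting the first two relations gives $\dot A_0 = 0$, the third relation is $\dot A_2 = 0$, and substituting $A_2$ from it into $\dot A_1$ together with the second relation gives $\dot A_1 = 0$). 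Your extra forward derivation of the first two relations by phase elimination, and your explicit flagging of the degenerate loci such as $\varepsilon + 4aA_0 = 0$, are harmless additions that the paper leaves implicit.
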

        The proof of this lemma is given in Supplementary Materials \ref{appendix_proof_of_Lemma}.
    
        Rather than consider the full implications of these invariant manifolds, we can focus on the existence and stability of `primary' steady states. Setting $A_1 = A_2 = 0$ in System \eqref{examplel2}, we have $\dot A_0 = A_0 \left(\varepsilon - 2 \, a \, A_0 - b \, A_0^2\right)$. Thus, nontrivial states are given by $A_{0, \pm} = \frac{- a \pm \sqrt{a^2 + b \, \varepsilon}}{b}$. Now, we can easily delineate stability based on the signs of $a$ and $b$ \cite{callahan}. If $b < 0$, there is no stable small-amplitude pattern. If $b > 0$, then either $A_{0, +}$ or $A_{0, -}$ emerges stably from the bifurcation point, according to whether $a < 0$ or $a > 0$, respectively. See Figure \ref{fig:bifdiagl2}.

        \begin{figure}
            \centering
            \begin{tikzpicture}
                    \node[anchor=south west,inner sep=0] (image) at (0,0) {\includegraphics[width = \textwidth]{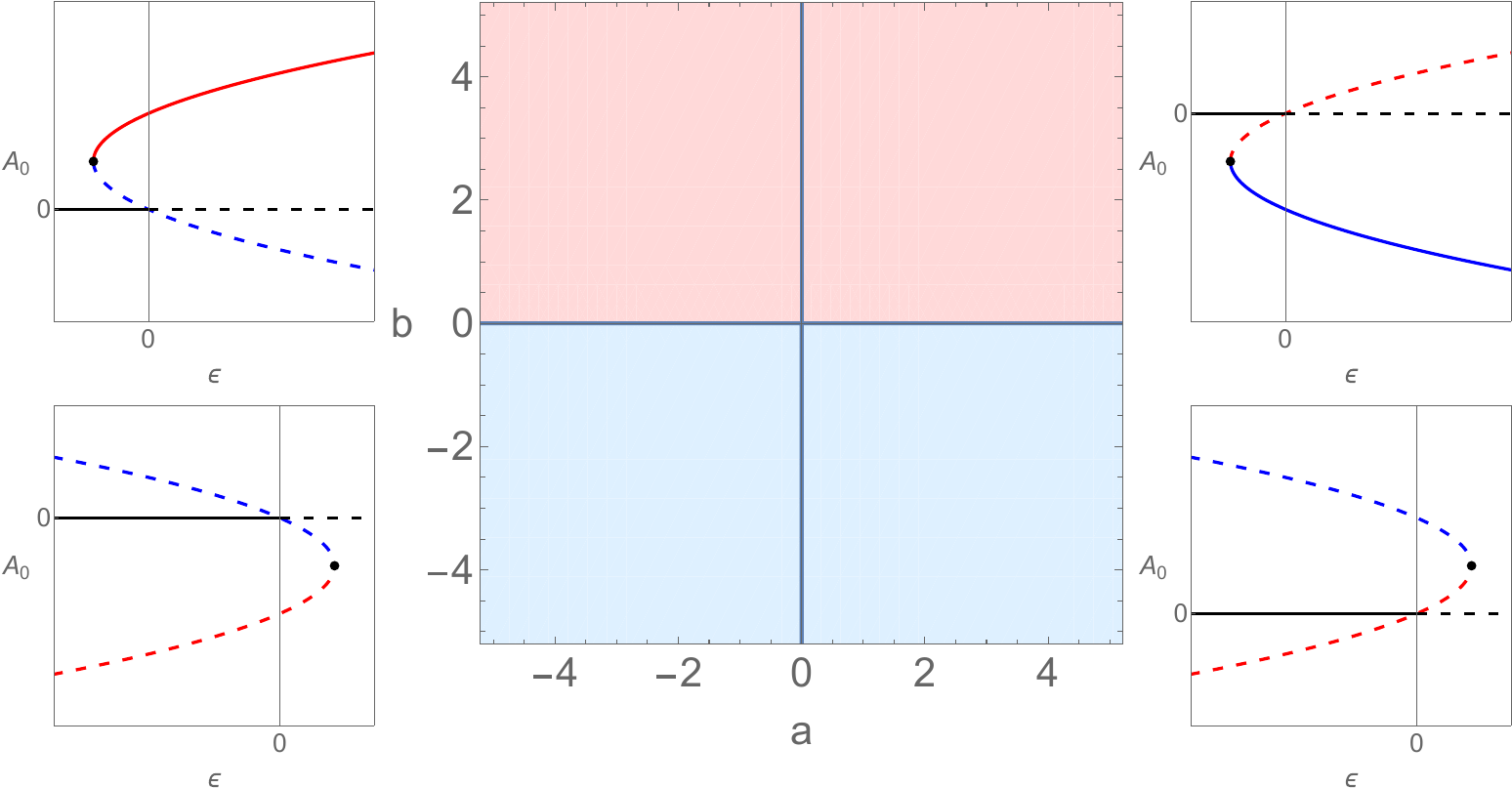}};
                    \node at (6.9, 6.9) {\circled{1}};
                    \node at (1.1, 5.7) {\small \circled{1}};
                    \node at (10.6, 6.9) {\circled{2}};
                    \node at (13.5, 7.95) {\small \circled{2}};
                    \node at (6.9, 3.2) {\circled{3}};
                    \node at (3.6, 3.6) {\small \circled{3}};
                    \node at (10.6, 3.2) {\circled{4}};
                    \node at (15.95, 1.3) {\small \circled{4}};
                \end{tikzpicture}
            \caption{Partial bifurcation diagram for the transcritical bifurcation of the origin of the normal form \eqref{examplel2} that arises when $\ell = 2$. Blue (respectively red) curves represent $A_{0, -}$ (resp. $A_{0, +}$). Moreover, dashed (resp.~continuous) lines represent unstable (resp.~stable) steady states.}
            \label{fig:bifdiagl2}
        \end{figure}
        
    \subsection{The case $\ell = 3$}
        Here, there are $2 \ell + 1 = 7$ components in the amplitude vector $\rtext{\mbf A_3} = \left(A_{- 3}, A_{- 2}, A_{- 1}, A_0, A_1, A_2, A_3\right)^\intercal$, with $A_{- 3} = - \ol{A_3}$, $A_{- 2} = \ol{A_2}$, and $A_{- 1} = - \ol{A_1}$. Furthermore, in accordance with the theory developed in \cite[Table B.2]{chossat} (with the correction of typos in the coefficients of $A_{- 2} A_1^2$ and $A_{- 2} A_0 A_3$ reported in \cite{callahan}), we expect the amplitude equations to take the form
        \begin{align}
            &\dot A_0 = A_0 \left(\varepsilon + \left(- 18 \, a + b\right) \, A_0^2 + \left(36 \, a - 2 \, b\right) \, A_{- 1} \, A_1 \right. \notag
            \\
            & \hspace{1cm} \left. + \left(-60 \, a + 2 \, b\right) \, A_{- 2} \, A_2 - 2 \, b \, A_{- 3} \, A_3\right) + 15 \, \sqrt{2} \, a \, \left(A_{- 3}\, A_1 \, A_2 + A_{- 2} \, A_{- 1} \, A_3\right) \notag
            \\
            & \hspace{1cm} + \sqrt{30} \, a \left(A_{- 2} \, A_1^2 + A_{- 1}^2 \, A_2\right) + \mathcal O\left(\norm{\rtext{\mbf A_3}}^4\right), \notag
            \\
            & \dot A_1 = A_1 \left(\varepsilon + \left(- 18 \, a + b\right) A_0^2 + \left(41 \, a - 2 \, b\right) A_{- 1} \, A_1 + \left(- 35 \, a + 2 \, b\right) A_{- 2} \, A_2 \right. \notag
            \\ 
            & \hspace{1cm} \left. + \left(15 \, a - 2 \, b\right) A_{- 3} \, A_3\right) \notag
            + 5 \, \sqrt{15} \, a \, A_{- 3} \, A_2 \, A_2 - 15 \, \sqrt{2} \, a \, A_{- 2} \, A_0 \, A_3
            \\
            & \hspace{1cm} + 6 \, \sqrt{15} \, a \, A_{- 1}^2 \, A_3 - 2 \, \sqrt{30} \, a \, A_{- 1} \, A_0 \, A_2 + \mathcal O\left(\norm{\rtext{\mbf A_3}}^4\right), \notag
            \\
            & \dot A_2 = A_2 \left(\varepsilon + \left(- 30 \, a + b\right) \, A_0^2 + \left(35 \, a - 2 \, b\right) A_{- 1} \, A_1 + \left(- 20 \, a + 2 \, b\right) A_{- 2} \, A_2 \right. \label{examplel3}
            \\ 
            &  \hspace{1cm} \left. + \left(45 \, a - 2 \, b\right) A_{- 3} \, A_3\right)  + \sqrt{30} \, a \, A_0 \, A_1^2 + 15 \, \sqrt{2} \, a \, A_{- 1} \, A_0 \, A_3  \notag
            \\
            & \hspace{1cm} - 10 \, \sqrt{15} \, a \, A_{- 2} \, A_1 \, A_3 + \mathcal O\left(\norm{\rtext{\mbf A_3}}^4\right), \notag
            \\
            & \dot A_3 = A_3 \left(\varepsilon + b \, A_0^2 + \left(15 \, a - 2 \, b\right) A_{- 1} \, A_1 + \left(- 45 \, a + 2 \, b\right) A_{- 2} \, A_2 \right. \notag
            \\
            & \hspace{1cm} \left. + \left(45 \, a - 2 \, b\right) A_{- 3} \, A_3\right)  + 5 \, \sqrt{15} \, a \, A_{- 1} \, A_2^2 - 15 \, \sqrt{2} \, a \, A_0 \, A_1 \, A_2 \notag
            \\
            & \hspace{1cm} + 2 \, \sqrt{15} \, a \, A_1^3 + \mathcal O\left(\norm{\rtext{\mbf A_3}}^4\right). \notag
        \end{align}
        We now have two cubic parameters $a, b \in \mathbb R$, in addition to the unfolding parameter $\varepsilon$. In \cite{callahan,chossat}, analysis can be found for some particular bifurcation branches according to different symmetry subgroups of $O(3)$. Specifically, according to the theory developed by these authors, three steady states become relevant to study. Those are the ones that follow $O(2)^-$, $\mbf O$, and $D_6^d$ symmetries, which correspond to the cases in which $A_{- 3} = A_{- 2} =  A_{- 1} = A_1 = A_2 = A_3 = 0$, $A_{- 3} =  A_{- 1} = A_0 = A_1 = A_3 = 0$, and $A_{- 2} =  A_{- 1} = A_0 = A_1 = A_2 = 0$, respectively \cite{callahan}. These steady states are defined, respectively, by $\varepsilon = (18 \, a - b) \, A_0^2$, $\varepsilon = 2 \, (10 \, a - b) \, A_{- 1} \, A_1$, and $\varepsilon = (45 \, a - 2 \, b) \, A_{- 3} \, A_3$. Furthermore, the steady-state associated with the symmetry $O(2)^-$ is always unstable, whilst the second one can be stable only if $10 \, a - b > 0$, and $a > 0$. Last but not least, the third one bifurcates stably only if $45 \, a - 2 \, b > 0$, and $a < 0$. Figure \ref{fig:bifdiagl3} shows the different regions one can find depending on the values of $a$ and $b$ in Equation \eqref{examplel3}. The full nonlinear dynamics of the 7-dimensional normal form is likely to be somewhat complicated.

        \begin{figure}
            \centering
            \begin{tikzpicture}
                \node[anchor=south west,inner sep=0] (image) at (0,0) {\includegraphics[width = \textwidth]{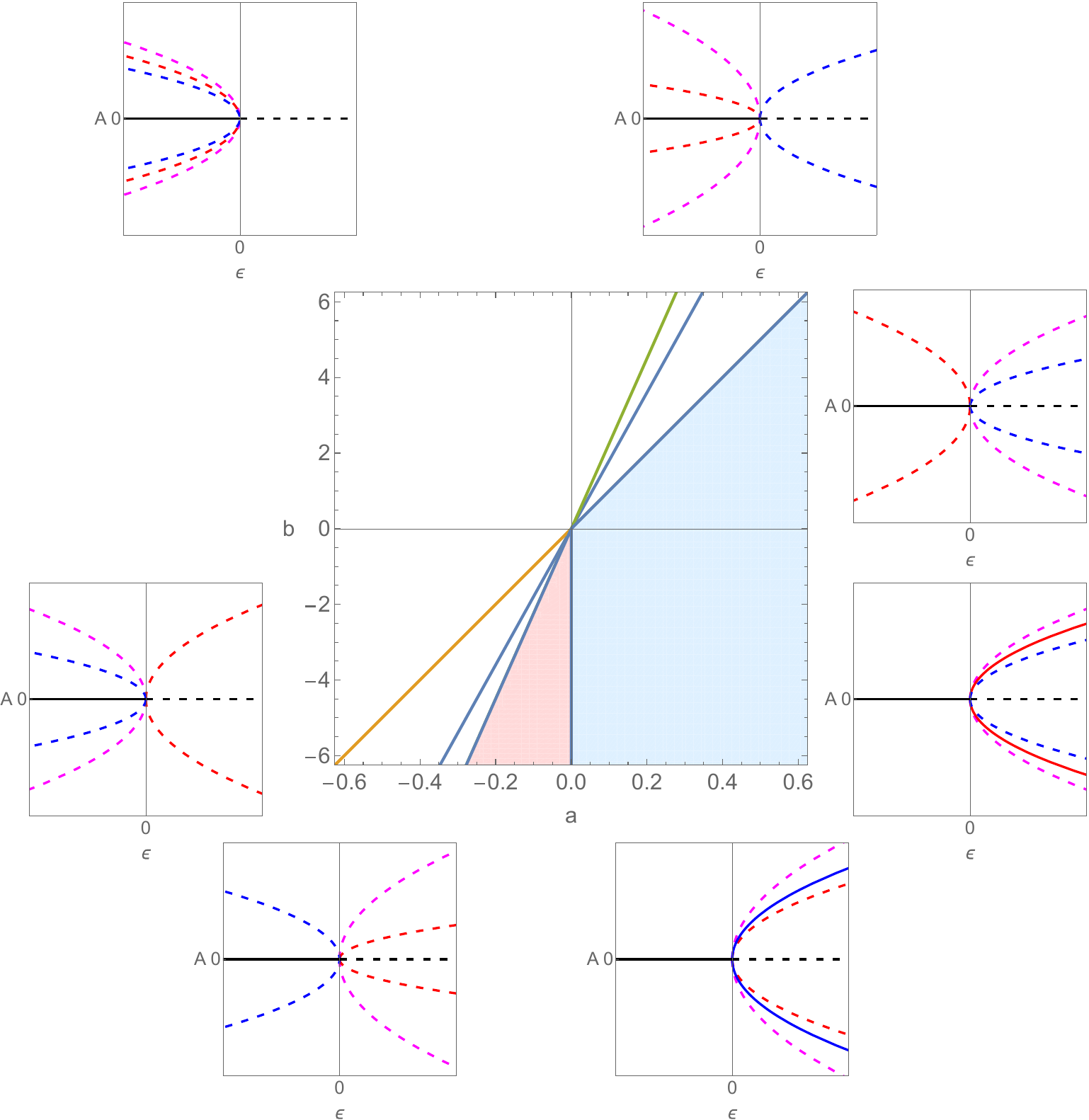}};
                \node at (10.5, 7.0) {\circled{1}};
                \node at (13.85, 7.25) {\small \circled{1}};
                \node at (10.7, 11.6) {\small \circled{2}};
                \node at (14.2, 12.2) {\small \circled{2}};
                \draw [-stealth](10.3, 12.2) -- (10.8, 13.3);
                \node at (12.3, 16.5) {\small \circled{3}};
                \node at (6.8, 10.7) {\small \circled{4}};
                \node at (4.5, 16.1) {\small \circled{4}};
                \node at (6.3, 5.9) {\small \circled{5}};
                \node at (2.9, 7.75) {\small \circled{5}};
                \draw [stealth-](6.35, 4.25) -- (7.0, 5.65);
                \node at (4.3, 3.8) {\small \circled{6}};
                \node at (8.1, 6.1) {\small \circled{7}};
                \node at (10.2, 3.4) {\small \circled{7}};
            \end{tikzpicture}
            \caption{Partial bifurcation diagram for the pitchfork bifurcation of the normal form \eqref{examplel3} that arises when $\ell = 3$. Magenta, red, and blue curves represent the steady states with symmetries $O(2)^-$, $\mbf O$, and $D_6^d$, respectively. Moreover, dashed (resp.~continuous) lines represent unstable (resp.~stable) steady states. In the $a-b$ bifurcation diagram, the straight lines are given by: $18 \, a - b = 0$, $ 10 \, a - b = 0$ and $45 \, a - 2 \, b = 0$.}
            \label{fig:bifdiagl3}
        \end{figure}
        
    \subsection{The case $\ell \geq 4$}
        When $\ell \geq 4$ we have that the amplitude vector has $2 \ell + 1$ components, $\rtext{\mbf A_\ell} = \left(A_{- \ell}, A_{-\ell + 1}, \ldots, A_{- 1}, A_0, A_1, \ldots, A_{\ell - 1}, A_\ell\right)^\intercal$, with $A_{- j} = (- 1)^j \, \ol{A_j}$, for each $j \in \{-\ell, - \ell + 1, \ldots, -1, 0, 1, \ldots, \ell - 1, \ell\}$, which requires analysis of dynamics in a $2\ell + 1$-dimensional phase space.
        
        For the specific case $\ell = 4$, \rtext{according to \cite[Table B.1]{chossat},} even just up \rtext{to} quadratic terms we have
        \begin{equation}
            \label{examplel4}
            \begin{cases}
                \dot A_0 = A_0 \left(\varepsilon - 18 \, a \, A_0\right) + 18 \, a \, A_{-1} \, A_1 + 22 \, a \, A_{-2} \, A_2
                \\
                \hspace{2cm} - 42 \, a \, A_{-3} \, A_3 -28 \, a \, A_{-4} \, A_4 
                 + \mathcal O\left(\norm{\rtext{\mbf A_4}}^3\right), 
                \\
                \dot A_1 = A_1 \left(\varepsilon - 18 \, a \, A_0\right) + 12 \, \sqrt{10} \, a \, A_{-1} \, A_2
                \\ 
                \hspace{2cm} - 2 \, \sqrt{70} \, a \, A_{-2} \, A_3 - 14 \, \sqrt{10} \, a \, A_{-3} \, A_4 
                 + \mathcal O\left(\norm{\rtext{\mbf A_4}}^3\right), 
                \\
                \dot A_2 = A_2 \left(\varepsilon + 22 \, a \, A_0\right) - 6 \, \sqrt{10} \, a \, A_1^2
                \\ 
                \hspace{2cm} + 2 \, \sqrt{70} \, a \, A_{-1} \, A_3 - 6 \, \sqrt{70} \, a \, A_{-2} \, A_4 
                 + \mathcal O\left(\norm{\rtext{\mbf A_4}}^3\right),
                \\
                \dot A_3 = A_3 \left(\varepsilon + 42 \, a \, A_0\right) - 14 \, \sqrt{10} \, a \, A_{-1} \, A_4 - 2 \, \sqrt{70} \, a \, A_1 \, A_2 + \mathcal O\left(\norm{\rtext{\mbf A_4}}^3\right), 
                \\
                \dot A_4 = A_4 \left(\varepsilon - 28 \, a \, A_0\right) + 14 \, \sqrt{10} \, a \, A_1 \, A_3 - 3 \, \sqrt{70} \, a \, A_2^2 + \mathcal O\left(\norm{\rtext{\mbf A_4}}^3\right).
            \end{cases}
        \end{equation}
        The full equations up to and including cubic terms are given in the Supplementary Materials \rtext{\ref{ap:order3ell=4}}, where the cubic terms depend on two real parameters, $b$ and $c$. Again, a complete analysis of such a 9-dimensional normal form is beyond the scope of this article.

        Callahan \cite{callahan} goes further and supplies expressions for all the coefficients up to  $\ell = 6$. He also establishes the general principles for deciding which steady states can bifurcate stably. Figure \ref{fig:Callahan_spherical_harmonics} shows the only primary steady states that Callahan argues can be stable for low amplitude, up to $\ell = 6$. The stability of these patterns depends explicitly on the values of the normal form coefficients, for which he derives specific conditions. In addition, Matthews \cite{matthews,matthews2} finds explicit conditions for certain dihedral-symmetry states with large $\ell$ to bifurcate stably.
        
        It is important here to understand one consequence of Theorem \ref{th:maintheorem} (as already shown in the context of $O(3)$-equivariant bifurcation theory in \cite{matthews2}); when $\ell$ is even and the second-order coefficients are different from zero, we need only to compute one scalar to obtain the leading-order unfolding of the transcritical bifurcation. On the other hand, when the second-order coefficients are zero, independently of the parity of $\ell$, the cubic terms depend on $\left \lfloor \ell/3\right \rfloor + 1$ independent coefficients that must be computed to determine the unfolding of the pitchfork bifurcation \cite{chossat,callahan}. The main contribution of the Theorem is to derive expressions that enable us to compute such coefficients explicitly for the class of BS-RDEs considered here, and thus make predictions about which small-amplitude patterns should appear under parameter variation.

\section{Calculation of the amplitude equations} \label{sec:main}
    We provide a proof of Theorem \ref{th:maintheorem} using \rtext{the adaptation of the method of multiple scales for computation of normal forms at bifurcations} introduced by Elphick \textit{et al}~\rtext{and adapted to the context of infinite-dimensional spatial operators by a number of authors \cite{tirapegui,paquin-lefebvre,Fahad,edgardodegenerate}. The method supposes that for each $\ell$, the normal form can be expressed using the ansatz \eqref{ansatz} as an ordinary differential equation (ODE) for the complex amplitude vectors $\mbf A_\ell$. The right-hand side of these ODEs is determined as an asymptotic expansion in a small book-keeping parameter that captures the powers of components of $|\mbf A_\ell|$. \rtext{The} coefficients of these expansion\rtext{s} are obtained order-by-order using the solvability conditions derived using the Fredholm alternative. The first three orders are dealt with in Subsections} \ref{sub:order1}, \ref{sub:order2}, and \ref{sub:order3} respectively. There, we suppose that all functions are evaluated precisely at the bifurcation point $\mu = \mu\rtext{^*}$. Subsection \ref{sec:unfolding} \rtext{then} deals with the unfolding of the bifurcation. That is, the computation of the leading-order coefficient of $\abs{\mu - \mu^*}$ in the normal form. 
 
    Suppose that, at the bifurcation point $\mu = \mu^*$, the ground state $\mbf P$ fulfills the conditions stated in Theorem \ref{th:maintheorem}. We proceed to build the normal form using ansatz \eqref{ansatz}.
    In particular, we shall prove that $\rtext{\mbf A_\ell}$ solves an equation of the form
	\begin{align}
		\partial_t \rtext{\mbf A_\ell} &= \sum_{m_1 = -\ell}^\ell \mbf h_{m_1}^{[1]}(A_{m_1}) + \sum_{-\ell \leq m_1, m_2\leq \ell} \mbf h_{m_1, m_2}^{[2]}(A_{m_1}, A_{m_2}) + \ldots, \label{eqA}
	\end{align}
	where each function $\mbf h_{m_1, \ldots, m_p}^{[p]}\left(A_{m_1}, \ldots, A_{m_p}\right)$ corresponds to a linear function in $\prod_{j = 1}^p A_{m_j}$. To find these functions, we use the ansatz
	\begin{align*}
		\begin{pmatrix}
			\mbf U
			\s 
			\mbf u
		\end{pmatrix} = \mbf P + \sum_{q_1 = -\ell}^\ell \mbf W_{q_1}^{[1]}(A_{q_1}) + \sum_{-\ell\leq q_1, q_2\leq \ell} \mbf W_{q_1, q_2}^{[2]}(A_{q_1}, A_{q_2}) + \ldots,
	\end{align*}
    where we need to find the vector functions $\mbf W_{q_1}^{[1]}, \mbf W_{q_1, q_2}^{[2]}, \ldots$ such that $\rtext{\mbf A_\ell}$ solves an equation of the form \eqref{eqA}. We proceed order by order.
    
	\subsection{Order 1} \label{sub:order1}
        At this order, \eqref{generalsystem} becomes
		\begin{align}
			\sum_{q_1 = -\ell}^\ell \partial_{A_{q_1}} \mbf W_{q_1}^{[1]} \sum_{m_1 = -\ell}^\ell \, h_{m_1, q_1}^{[1]} &= \sum_{q_1 = -\ell}^\ell \left(\jac \mbf f(\mbf P) + \mathbb D \, \bs \nabla^2\right) \mbf W_{q_1}^{[1]}, \label{firstordereq}
		\end{align}
		where $q_1$ in the sub-script of $h_{m_1, q_1}^{[1]}$ indicates the component of $\mbf h_{m_1}^{[1]}$. Therefore, \rtext{as our analysis is based on an ansatz and $\det\left(\mathbb B_\ell\right) = 0$, then we note that taking $h_{m_1, q_1}^{[1]}\equiv 0$ for every $- \ell \leq m_1, q_1 \leq \ell$ will let us define a nontrivial solution ---as explained below--- and keep carrying out our analysis. With this,} \eqref{firstordereq} becomes:
		\begin{align}
			\sum_{q_1 = - \ell}^\ell \left(\jac \mbf f(\mbf P) + \mathbb D \, \bs \nabla^2\right) \mbf W_{q_1}^{[1]} &= \mbf 0. \label{jacobiankernel}
		\end{align}
		Here, note that the sum on the left-hand side is taken over all the amplitudes, which multiply orthogonal functions. Therefore, we must set each component of the sum to zero independently. Thus, for each $q_1$, we must find $\mbf W_{q_1}^{[1]} = \left(\mbf U(r, \theta, \phi), \mbf u(\theta, \phi)\right)^\intercal$ such that
		\begin{align*}
			\left(\jac \mbf f(\mbf P) + \mathbb D \, \bs \nabla^2\right) \begin{pmatrix}
			    \mbf U
                \\
                \mbf u
			\end{pmatrix} &= \mbf 0.
		\end{align*}
		Now, note that the first $n$ equations can be written as $- B \, \mbf U + \mathbb D_U \, \nabla_\Omega^2 \, \mbf U = \mbf 0$, from where we can choose the following non-zero solution $\mbf U = i_\ell\left(\bs \omega \, r\right) \, P_\ell^m(\cos(\theta)) \, e^{im\phi} \, \bs{\hat \varphi}$, where $\bs{\hat \varphi}\in \mathbb R^n\setminus \{\mbf 0\}$ is a constant vector, and $i_\ell\left(\bs \omega \, r\right) = \diag\left(i_\ell\big(\omega_1 \, r\big), \ldots, i_\ell\big(\omega_n \, r\big)\right)$, with $i_\ell$ being a modified spherical Bessel function of the first kind. Therefore, if we take $\mbf u = P_\ell^m(\cos(\theta)) \, e^{im\phi} \, \bs \varphi$, the boundary conditions imply that $\bs{\hat \varphi} = S_\ell \, \bs \varphi$. Thus, the last $n$ equations of \eqref{jacobiankernel} become
		\begin{align*}
            \rtext{\mathbb B_\ell} \, \bs \varphi &= \mbf 0.
		\end{align*}
		This equation has a one-parameter family of nontrivial solutions as, by assumption, \rtext{$\det\left(\mathbb B_\ell\right) = 0$}.
		We can choose $\bs \varphi$ to be any non-zero vector in the kernel \rtext{of $\mathbb B_\ell$}. With this, we define the non-zero vector
		\begin{align*}
			\mbf W_{q_1}^{[1]} = \begin{pmatrix}
				\mbf W_{q_1, 1}^{[1]}
				\\[1.5ex]
				\mbf W_{q_1, 2}^{[1]}
			\end{pmatrix} = A_{q_1}\begin{pmatrix}
				i_\ell\left(\bs \omega \, r\right) \, \bs{\hat \varphi}
				\s 
				\bs \varphi
			\end{pmatrix} P_\ell^{q_1}(\cos(\theta)) \, e^{i q_1 \phi},
		\end{align*}
		where
		\begin{align*}
			\rtext{\mathbb B_\ell} \, \bs \varphi &= \mbf 0, \qquad \text{and} \qquad \bs{\hat \varphi} = S_\ell \, \bs \varphi.
		\end{align*}
    
	\subsection{Order 2} \label{sub:order2}
        At this order, \eqref{generalsystem} becomes
		\begin{align}
			\sum_{b = -\ell}^\ell \partial_{A_b} \mbf W_b^{[1]} \sum_{- \ell \leq q_1, q_2 \leq \ell} h_{q_1, q_2, b}^{[2]} &= \sum_{-\ell\leq q_1, q_2\leq \ell} \left(\jac \mbf f(\mbf P) + \mathbb D \, \bs \nabla^2\right)\mbf W_{q_1, q_2}^{[2]}	\notag
            \\
            & + \sum_{- \ell \leq q_1, q_2 \leq \ell} \begin{pmatrix}
				\mbf 0_n
				\s
				\mbf G_2\left(\mbf W_{q_1, 2}^{[1]}, \mbf W_{q_2, 2}^{[1]}\right)
			\end{pmatrix}, \label{secondordereq}
		\end{align}
        which is equivalent to
		\begin{multline*}
			\sum_{- \ell \leq q_1, q_2 \leq \ell}\left(\jac \mbf f(\mbf P) + \mathbb D \, \nabla^2\right)\mbf W_{q_1, q_2}^{[2]} = \sum_{b = - \ell}^\ell \partial_{A_b} \mbf W_b^{[1]} \sum_{- \ell \leq q_1, q_2 \leq \ell} h_{q_1, q_2, b}^{[2]}
			\s 
			- \sum_{- \ell \leq q_1, q_2 \leq \ell} A_{q_1} A_{q_2} \, P_\ell^{q_1}(\cos(\theta)) \, P_\ell^{q_2}(\cos(\theta)) \, e^{i\left(q_1 + q_2\right)\phi} \, \begin{pmatrix}
				\mbf 0_n
				\s 
				\mbf G_2\left(\bs \varphi, \bs \varphi\right)
			\end{pmatrix}.
		\end{multline*}
        Here, the subscript $b$ on $h$ refers to the coefficient of the corresponding quadratic term $A_{q_1} A_{q_2}$. 
  
        Before continuing, we need to choose an inner product to establish solvability conditions using the Fredholm alternative. We make the following natural choice:  suppose for $\mbf a_1$, $\mbf a_2$, $\mbf b_1$, $\mbf b_2\in \mathbb C^n$, $f_1, g_1\in \mathcal L^2\left(\Omega;\mathbb R\right)$, and $f_2, g_2\in \mathcal L^2\left(\Gamma;\mathbb R\right)$, then we define
		\begin{align}
            & \left \langle \begin{pmatrix}
				f_1 \, \mbf a_1
				\s 
				f_2 \, \mbf a_2
    		\end{pmatrix}, \begin{pmatrix}
    				g_1 \, \mbf b_1
    				\s 
    				g_2 \, \mbf b_2
    	    \end{pmatrix} \right \rangle := \frac{1}{2 \, \pi} \left(\iiint_\Omega f_1(x, y, z) \, \mbf a_1 \, { \cdot} \, \overline{g_1(x, y, z) \, \mbf b_1} \, \dd \Omega\right. \notag
            \\
            & \hspace{1cm} \left. + R^2 
            { \iint_\Gamma f_2(x,y,z) \, \mbf a_2 \cdot \, \overline{g_2(x,y,z) \, \mbf b_2} \, \dd \Gamma}\right) \notag
            \\
			& \hspace{1cm} = \frac{1}{2 \, \pi} \left(\int_0^R \int_0^{2\pi}\int_0^\pi f_1(r,\theta,\phi) \, \mbf a_1 \cdot \overline{g_1(r,\theta,\phi) \, \mbf b_1} \, r^2 \sin(\theta) \, \dd \theta \, \dd \phi \, \dd r\right.
			\notag \\
			& \hspace{1cm} \left. + R^2 \int_0^{2\pi}\int_0^\pi f_2(\theta,\phi) \, \mbf a_2 \cdot \overline{g_2(\theta,\phi) \, \mbf b_2} \, \sin(\theta) \, \dd \theta \, \dd \phi\right), \label{innerp}
		\end{align}
		where $\cdot$ denotes the  usual scalar product of $n$-component vectors.
		
        Now, we want to expand $P_\ell^{q_1}(\cos(\theta)) \, P_\ell^{q_2}(\cos(\theta))$ in terms of the family of functions $\left\{P_p^{q_1 + q_2}(\cos(\theta))\right\}_{p = \abs{q_1 + q_2}}^\infty$. To do this, consider the following result.
		\begin{lemma} \label{lemma:linearexpansion}
			For every $- \ell \leq q_1, q_2 \leq \ell$, the function $P_\ell^{q_1}(x) \, P_\ell^{q_2}(x)$ can be written as a linear combination of $\left\{P_p^{q_1 + q_2}(x)\right\}_{p = \abs{q_1 + q_2}}^{2 \ell}$.
		\end{lemma}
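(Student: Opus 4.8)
The plan is to reduce the claim to an elementary statement about polynomials by exploiting the factored structure of the associated Legendre functions. Recall that for $m \geq 0$ one has the Rodrigues-type representation $P_\ell^m(x) \propto (1-x^2)^{m/2}\, \frac{d^m}{dx^m} P_\ell(x)$, where $P_\ell$ is the Legendre polynomial of degree $\ell$; hence $P_\ell^m(x) = (1-x^2)^{m/2}\, R(x)$ with $R$ a polynomial of degree exactly $\ell - m$. For negative order one uses $P_\ell^{-m} \propto P_\ell^{m}$, so up to a nonzero constant every factor may be replaced by the one of nonnegative order. I would first record these facts, noting that the precise normalisation is irrelevant here since the lemma only concerns spans.

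Setting $m_i = |q_i|$ and $M = |q_1 + q_2|$, I would write $P_\ell^{q_1}(x)\,P_\ell^{q_2}(x) \propto (1-x^2)^{(m_1+m_2)/2}\, R_1(x)\,R_2(x)$ with $\deg R_i = \ell - m_i$. The key algebraic observation is that $(m_1+m_2)-M$ is a \emph{nonnegative even} integer, equal to $0$ when $q_1$ and $q_2$ share a sign and to $2\min(m_1,m_2)$ when they have opposite signs. Writing this difference as $2s$, I would factor $(1-x^2)^{(m_1+m_2)/2} = (1-x^2)^{M/2}(1-x^2)^{s}$ and absorb the integer power $(1-x^2)^{s}$ into the polynomial part, obtaining $P_\ell^{q_1} P_\ell^{q_2} \propto (1-x^2)^{M/2}\, R(x)$ where $R := (1-x^2)^{s} R_1 R_2$. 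A short degree count then gives $\deg R = 2\ell - (m_1+m_2) + 2s = 2\ell - M$ in both the same-sign and the opposite-sign cases.

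To finish, I would note that each target function factors the same way: $P_p^{q_1+q_2}(x) \propto (1-x^2)^{M/2}\, Q_p(x)$ with $\deg Q_p = p - M$. As $p$ runs over $M, M+1, \ldots, 2\ell$, the polynomials $Q_M, Q_{M+1}, \ldots, Q_{2\ell}$ have pairwise distinct degrees $0, 1, \ldots, 2\ell - M$, so they are linearly independent and span the space of polynomials of degree at most $2\ell - M$. Since $R$ lies in that space, expanding $R$ in this basis and multiplying through by $(1-x^2)^{M/2}$ expresses $P_\ell^{q_1} P_\ell^{q_2}$ as a linear combination of $\{P_p^{q_1+q_2}\}_{p=M}^{2\ell}$, which is exactly the claim.

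I expect the main obstacle to be the bookkeeping in the opposite-sign case: verifying that the surplus factor $(1-x^2)^{s}$ carries an integer (not half-integer) exponent, and that, after absorbing it, the degree of $R$ comes out to $2\ell - M$ rather than something larger, so that the upper index $2\ell$ is genuinely sufficient and no term of higher degree is needed. An alternative, more conceptual route would identify $P_\ell^{q_i}(\cos\theta)\, e^{iq_i\phi}$ with spherical harmonics of degree $\ell$ and invoke the Clebsch--Gordan (Gaunt) decomposition of $Y_\ell^{q_1} Y_\ell^{q_2}$ into harmonics of degrees $0,\ldots,2\ell$ and fixed order $q_1+q_2$; but the polynomial argument above is self-contained and avoids introducing angular-momentum machinery.
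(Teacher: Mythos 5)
Your proof is correct and follows the same core strategy as the paper's: factor out $(1-x^2)^{M/2}$ with $M = \abs{q_1+q_2}$, observe that what remains on both sides is polynomial, and use the fact that the reduced target functions have pairwise distinct degrees $0, 1, \ldots, 2\ell - M$, hence span the polynomials of degree at most $2\ell - M$. The difference lies in how the signs of $q_1, q_2$ are handled, and here your version is the more careful one. The paper disposes of signs by invoking $P_\ell^{-m}(x) = (-1)^m \, P_\ell^m(x)$ and assuming ``without loss of generality'' that $q_1, q_2 > 0$; but replacing $q_i$ by $\abs{q_i}$ changes the order of the target family from $\abs{q_1+q_2}$ to $\abs{q_1} + \abs{q_2}$, and these differ when $q_1$ and $q_2$ have opposite signs, so that reduction as stated only covers the same-sign case. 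Your explicit bookkeeping --- noting that $\left(\abs{q_1}+\abs{q_2}\right) - \abs{q_1+q_2} = 2s$ is a nonnegative even integer, absorbing the integer power $(1-x^2)^{s}$ into the polynomial factor, and checking that the degree still comes out to exactly $2\ell - M$ in both cases --- is precisely what is needed to close that gap, and you correctly flagged it as the delicate point. The alternative Gaunt/Clebsch--Gordan route you mention would also work and is how the result is usually quoted in the spherical-harmonics literature, but the self-contained polynomial argument matches the spirit of the paper's proof while repairing its sign reduction.
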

		\begin{proof}
			As the associated Legendre polynomials fulfill $P_\ell^{- m}(x) = (- 1)^m \, P_\ell^m(x)$, then we assume, without loss of generality, that $q_1, q_2 > 0$. By definition, for $\ell \in \mathbb Z$ with $0 \leq m \leq \ell$, $P_\ell^m(x) = \mathcal N_{\ell, m} \, \left(1 - x^2\right)^{m/2}\frac{\dd^{\ell + m}}{\dd x^{\ell + m}}\left(x^2 - 1\right)^\ell$, where $\mathcal N_{\ell, m} \in \mathbb R$ is a constant that normalizes the associated Legendre polynomial. Note that
	      \begin{align*}
				\frac{P_\ell^{q_1}(x) \, P_\ell^{q_2}(x)}{\left(1 - x^2\right)^{\left(q_1 + q_2\right)/2}} = \mathcal N_{\ell, q_1} \, \mathcal N_{\ell, q_2} \, \frac{\dd^{\ell + q_1}}{\dd x^{\ell + q_1}}\left(x^2 - 1\right)^\ell \, \frac{\dd^{\ell + q_2}}{\dd x^{\ell + q_2}}\left(x^2 - 1\right)^\ell,
			\end{align*}
			is a polynomial of degree $2 \ell - \left(q_1 + q_2\right)$, and
			\begin{align}
				\frac{P_p^{q_1 + q_2}(x)}{\left(1 - x^2\right)^{\left(q_1 + q_2\right)/2}} &= \mathcal N_{p, q_1 + q_2} \, \frac{\dd^{p + q_1 + q_2}}{\dd x^{p + q_1 + q_2}}\left(x^2 - 1\right)^p, \label{addition}
			\end{align}
			is a polynomial of degree $p - \left(q_1 + q_2\right)$, for $p \geq q_1 + q_2$. Moreover, note that $P_p^{q_1 + q_2} \equiv 0$ when $\abs{q_1 + q_2} > p$. Hence, this implies that the family of polynomials $\left\{\frac{P_p^{q_1 + q_2}(x)}{\left(1 - x^2\right)^{\left(q_1 + q_2\right)/2}}\right\}_{p = \abs{q_1 + q_2}}^{2 \ell}$ forms a basis of the space of polynomials of degree $2 \ell - \left(q_1 + q_2\right)$. Thus, $P_\ell^{q_1}(x) \, P_\ell^{q_2}(x)$ can be expanded as a linear combination of $\left\{P_p^{q_1 + q_2}(x)\right\}_{p = \abs{q_1 + q_2}}^{2 \ell}$, which concludes the proof.
		\end{proof}
		Now, thanks to Lemma \ref{lemma:linearexpansion} and the inner product \eqref{innerp}, we have
		\begin{align*}
			P_\ell^{q_1}(\cos(\theta)) \, P_\ell^{q_2}(\cos(\theta)) &= \sum_{p = \abs{q_1 + q_2}}^{2 \ell} d_{p, q_1, q_2}^{[2]} \, P_p^{q_1 + q_2}(\cos(\theta)),
		\end{align*}
		where
		\begin{align*}
			d_{p, q_1, q_2}^{[2]} &= \int_0^\pi P_\ell^{q_1}(\cos(\theta)) \, P_\ell^{q_2}(\cos(\theta)) \, P_p^{q_1 + q_2}(\cos(\theta))\sin(\theta) \, \dd \theta
            \\
            &= \int_{-1}^1 P_\ell^{q_1}(x) \, P_\ell^{q_2}(x) \, P_p^{q_1 + q_2}(x) \, \dd \theta,
		\end{align*}
        for every integer $\abs{q_1 + q_2} \leq p \leq 2 \ell$. Owing to the parity of the associated Legendre polynomials, it follows that $d_{p, q_1, q_2}^{[2]} = 0$ for every odd $p$.
		
		With this, equation \eqref{secondordereq} becomes
		\begin{multline}
			\sum_{-\ell\leq q_1, q_2\leq \ell} \left(\jac \mbf f(\mbf P) + \mathbb D \, \nabla^2\right)\mbf W_{q_1, q_2}^{[2]} = \sum_{b = -\ell}^\ell \partial_{A_b} \mbf W_b^{[1]} \sum_{- \ell \leq q_1, q_2 \leq \ell} h_{q_1, q_2, b}^{[2]}
			\s 
			- \sum_{-\ell\leq q_1, q_2\leq \ell} A_{q_1} A_{q_2} \sum_{p = \abs{q_1 + q_2}}^{2 \ell} d_{p, q_1, q_2}^{[2]} \, P_p^{q_1 + q_2}(\cos(\theta)) \, e^{i\left(q_1 + q_2\right)\phi} \, \begin{pmatrix}
				\mbf 0_n
				\s 
				\mbf G_2\left(\bs \varphi,\bs \varphi\right)
			\end{pmatrix}. \label{secondordersimplified}
		\end{multline}
		Here, there might be secular terms if $\ell$ is even \cite{matthews}, in which case there is a solvability condition that can be used to determine $\mbf h_{q_1, q_2}^{[2]}$. To find it, we can use the Fredholm alternative with the vector functions $\bs \psi_m^*$, for $m\in \mathbb Z$, that span
        \\
        $\ker \left(\left(\jac \mbf f(\mbf P) + \mathbb D \, \bs \nabla^2\right)^*\right)$. We have the following result.
        \begin{lemma}\label{lemma:adjoint}
			With the inner product defined by \eqref{innerp}, we have that
			\begin{align*}
				\left(\jac \mbf f(\mbf P) + \mathbb D \, \bs \nabla^2\right)^* = \begin{pmatrix}
					-B + \mathbb D_U \, \nabla_\Omega^2 & \mbf 0_{n\times n}
					\s 
					K \, \rtext{\left. \mathbbm I\right|_{\rtext{r = R}}} & - K + \jac \mbf g\left(\mbf u^*\right)^\intercal + \mathbb D_u^\intercal \, \nabla_\Gamma^2
				\end{pmatrix}.
			\end{align*}
		\end{lemma}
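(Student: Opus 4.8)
The plan is to verify the adjoint identity $\innerp{\mathcal L \Phi, \Psi} = \innerp{\Phi, \mathcal L^* \Psi}$ directly, for $\Phi = (\mbf U, \mbf u)^\intercal$ in the domain of $\mathcal L$ and an arbitrary smooth test pair $\Psi = (\mbf V, \mbf v)^\intercal$. Writing $\mathcal L$ as the block operator with bulk row $-B\,\mbf U + \mathbb D_U \nabla_\Omega^2 \mbf U$ and surface row $K\,\mbf U|_{r=R} - K\mbf u + \jac\mbf g(\mbf u^*)\mbf u + \mathbb D_u \nabla_\Gamma^2 \mbf u$, I would move each operator off $\Phi$ and onto $\Psi$ term by term using the inner product \eqref{innerp}. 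The diagonal matrices $B$, $K$, $\mathbb D_U$, $\mathbb D_u$ are symmetric and transpose to themselves, whereas $\jac\mbf g(\mbf u^*)$ transposes nontrivially, which is exactly what produces the entry $\jac\mbf g(\mbf u^*)^\intercal$ in the $(2,2)$ block of the claimed $\mathcal L^*$.

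The core of the argument is the treatment of the two Laplacians. First I would apply Green's second identity to the bulk term, which sends $\iiint_\Omega (\mathbb D_U \nabla_\Omega^2 \mbf U)\cdot\overline{\mbf V}\,\dd\Omega$ to $\iiint_\Omega \mbf U\cdot(\mathbb D_U\nabla_\Omega^2\overline{\mbf V})\,\dd\Omega$ plus a boundary integral over $\Gamma$ of $(\mathbb D_U \partial_{\mbf{\hat n}}\mbf U)\cdot\overline{\mbf V} - \mbf U\cdot(\mathbb D_U\partial_{\mbf{\hat n}}\overline{\mbf V})$. The crucial bookkeeping point is that this boundary integral carries the physical area element $R^2\sin\theta\,\dd\theta\,\dd\phi$, so it matches exactly the $R^2$-weighting of the surface part of \eqref{innerp}; without this the coupling terms would not balance. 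The Laplace--Beltrami operator, acting on the closed surface $\Gamma$ with no boundary, is self-adjoint and contributes the term $\mathbb D_u^\intercal \nabla_\Gamma^2$ with no extra boundary contribution, while the $(1,1)$ block $-B + \mathbb D_U\nabla_\Omega^2$ of $\mathcal L^*$ drops out immediately from the bulk integral.

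It then remains to match the surface terms and the residual boundary integral. Substituting the primal Robin condition $\mathbb D_U\,\partial_{\mbf{\hat n}}\mbf U|_{r=R} = K(\mbf u - \mbf U|_{r=R})$ into the boundary integral, the piece $K\mbf u\cdot\overline{\mbf V}|_{r=R}$ reproduces precisely the off-diagonal coupling, i.e.\ the entry $K\,\mathbbm I|_{r=R}$ in the $(2,1)$ block of $\mathcal L^*$ paired against $\mbf u$. All remaining boundary integrals carry the common factor $\mbf U|_{r=R}$, and for the adjoint identity to hold for every admissible $\Phi$ these must cancel; collecting them shows this is equivalent to imposing on $\mbf V$ the condition $\mathbb D_U\,\partial_{\mbf{\hat n}}\mbf V|_{r=R} = K(\mbf v - \mbf V|_{r=R})$, the same Robin condition as for $\mbf U$ (a consequence of the symmetry of $K$ and $\mathbb D_U$). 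This both identifies the domain of $\mathcal L^*$ and completes the verification. I expect the main obstacle to be precisely this boundary-term bookkeeping: one must track the $R^2$ weights so that the cross term reproduces $K\,\mathbbm I|_{r=R}$ and the residual $\mbf U|_{r=R}$-terms vanish under the correct adjoint boundary condition, rather than attempt to read off $\mathcal L^*$ formally block by block.
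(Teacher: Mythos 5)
Your proposal is correct and follows essentially the same route as the paper: Green's identity for the bulk Laplacian, self-adjointness of the Laplace--Beltrami operator on the closed surface, and cancellation of the remaining boundary terms via the Robin coupling conditions together with the symmetry of $K$ and $\mathbb D_U$, with the $R^2$ weighting in \eqref{innerp} matching the area element exactly as you note. The only difference is presentational: the paper verifies that the difference of the two inner products (with the claimed adjoint) reduces to boundary terms that vanish when both arguments satisfy the Robin condition, whereas you derive the adjoint boundary condition constructively as the requirement that the residual $\mbf U|_{r = R}$ terms cancel, which makes explicit the identification of the adjoint domain that the paper's proof leaves implicit.
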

		\begin{proof}
			By the definition of the adjoint matrix
			\begin{multline*}
				\innerp{\left(\jac \mbf f(\mbf P) + \mathbb D \, \bs \nabla^2\right)^* \begin{pmatrix}
						\mbf{\hat U}
						\s 
						\mbf{\hat u}
					\end{pmatrix}, \begin{pmatrix}
						\mbf U
						\s 
						\mbf u
				\end{pmatrix}} = \frac{1}{2 \, \pi} \left(\iiint_\Omega \left(- B \, \mbf{\hat U}\cdot \mbf{\bar U} + \mathbb D_U \, \nabla_\Omega^2 \, \mbf{\hat U} \cdot \mbf{\bar U}\right) \, \dd \Omega\right.
				\s 
				\left. + R^2 \iint_\Gamma \left(K \left.\mbf{\hat U}\right|_{\rtext{r = R}} \cdot \mbf{\bar u} - K \, \mbf{\hat u} \cdot \mbf{\bar u} + \jac \mbf g\left(\mbf u^*\right)^\intercal \, \mbf{\hat u} \cdot \mbf{\bar u} + \mathbb D_u^\intercal \, \nabla_\Gamma^2 \, \mbf{\hat u} \cdot \mbf{\bar u}\right) \dd \Gamma\right),
			\end{multline*}
			and
			\begin{multline*}
				\innerp{\begin{pmatrix}
						\mbf{\hat U}
						\s 
						\mbf{\hat u}
					\end{pmatrix}, \left(\jac \mbf f(\mbf P) + \mathbb D \, \bs \nabla^2\right)\begin{pmatrix}
						\mbf U
						\s 
						\mbf u
				\end{pmatrix}} = \frac{1}{2 \, \pi} \left(\iiint_\Omega \left(- B \, \mbf{\bar U} \cdot \mbf{\hat U} + \mathbb D_U \, \nabla_\Omega^2 \, \mbf{\bar U} \cdot \mbf{\hat U}\right)\dd \Omega\right.
				\s 
				\left. + R^2 \iint_\Gamma \left(K \left. \mbf{\bar U}\right|_{\rtext{r = R}} \cdot \mbf{\hat u} - K \, \mbf{\bar u} \cdot \mbf{\hat u} + \jac \mbf g\left(\mbf u^*\right) \, \mbf{\bar u} \cdot \mbf{\hat u} + \mathbb D_u \, \nabla_\Gamma^2 \, \mbf{\bar u} \cdot \mbf{\hat u}\right) \dd \Gamma\right).
			\end{multline*}
			Therefore,
			\begin{multline*}
				\innerp{\left(\jac \mbf f(\mbf P) + \mathbb D \, \bs \nabla^2\right)^* \begin{pmatrix}
						\mbf{\hat U}
						\s 
						\mbf{\hat u}
					\end{pmatrix}, \begin{pmatrix}
						\mbf U
						\s 
						\mbf u
				\end{pmatrix}} - \innerp{\begin{pmatrix}
						\mbf{\hat U}
						\s 
						\mbf{\hat u}
					\end{pmatrix}, \left(\jac \mbf f(\mbf P) + \mathbb D \, \bs \nabla^2\right)\begin{pmatrix}
						\mbf U
						\s 
						\mbf u
				\end{pmatrix}}
				\s 
				= \frac{1}{2 \, \pi}\left(\iint_\Gamma \left(\mathbb D_U  \left. \frac{\dd \mbf{\hat U}}{\dd \nu} \right|_{\rtext{r = R}} \cdot \left. \mbf{\bar U} \right|_{\rtext{r = R}} - \mathbb D_U \left. \frac{\dd \mbf{\bar U}}{\dd \nu}\right|_{\rtext{r = R}} \cdot \left. \mbf{\hat U}\right|_{\rtext{r = R}}\right) \dd \Omega\right.
				\s 
				\left. + R^2 \iint_\Gamma \left(K \left. \mbf{\hat U}\right|_{\rtext{r = R}} \cdot \mbf{\bar u} - K \left. \mbf{\bar U}\right|_{\rtext{r = R}} \cdot \mbf{\hat u}\right) \dd \Gamma\right) = 0,
			\end{multline*}
			which proves the result.
		\end{proof}
        Now, for $m\in \mathbb Z$, $\bs \psi_m^*$ solves the following equation $\left(\jac \mbf f(\mbf P) + \mathbb D \, \bs \nabla^2\right)^* \bs \psi_m^*=\mbf 0$. Furthermore, by Lemma \ref{lemma:adjoint}, we know that $\bs \varphi_m^*$ has the following form\rtext{:}
        \\
        $\bs \psi_m^* = \begin{pmatrix}
            i_\ell\left(\bs \omega r\right) \, \bs{\hat \psi}
            \s 
            \bs \psi
        \end{pmatrix}P_\ell^m(\cos(\theta)) \, e^{im\phi}$, for each integer $- \ell \leq m \leq \ell$, where
		\begin{align*}
			\left(K \left(i_\ell\left(\bs \omega R\right) \, S_\ell - I \right) + \jac \mbf g\left(\mbf u^*\right)^\intercal - \frac{\ell (\ell + 1)}{R^2} \, \mathbb D_u^\intercal\right) \bs \psi &= \mbf 0, \qquad \text{and} \qquad \bs{\hat \psi}= S_\ell \, \bs \psi.
		\end{align*}
		Therefore, if we split equation \eqref{secondordereq} according to the amplitudes and apply the inner product with respect to a single function $\bs \psi_m^*$, we \rtext{obtain} the following expression
        \begin{multline*}
			\innerp{\left(\jac \mbf f(\mbf P) + \mathbb D \, \bs \nabla^2\right) \mbf W_{q_1, q_2}^{[2]}, \, \bs{\psi}_m^*} = \innerp{\begin{pmatrix}
                i_\ell\left(\bs \omega r\right) \, \bs{\hat \varphi}
                \s 
                \bs \varphi
            \end{pmatrix} P_\ell^m(\cos(\theta)) \, e^{i m \phi} \, h_{q_1, q_2, m}^{[2]}, \, \bs{\psi}_m^*}
			\s 
			- A_{q_1} A_{q_2} \, \innerp{\sum_{p = \abs{q_1 + q_2}}^{2 \ell} d_{p, q_1, q_2}^{[2]} \, P_p^{q_1 + q_2}(\cos(\theta)) \, e^{i\left(q_1 + q_2\right)\phi} \, \begin{pmatrix}
				\mbf 0_n
				\s 
				\mbf G_2\left(\bs \varphi,\bs \varphi\right)
			\end{pmatrix}, \, \bs{\psi}_m^*},
		\end{multline*}
		which can be expanded as
        \begin{align*}
            & \innerp{\left(\jac \mbf f(\mbf P) + \mathbb D  \bs \nabla^2\right) \mbf W_{q_1, q_2}^{[2]}, \bs \psi_m^*}
            \\
			& =
            \left(\int_0^R i_\ell(\bs \omega r) S_\ell \, \bs \varphi \cdot i_\ell(\bs \omega r)  S_\ell  \bs \psi \, r^2  \dd r + R^2  \bs \varphi \cdot \bs \psi\right)
			- A_{q_1} A_{q_2} \, R^2 \, \delta_{m, q_1 + q_2} \; 
            \times
            \\
            & \mbf G_2\left(\bs \varphi, \bs \varphi\right) \cdot \bs \psi \sum_{p = \abs{q_1 + q_2}}^{2 \ell} d_{p, q_1, q_2}^{[2]}  \int_0^\pi P_p^{q_1 + q_2}(\cos(\theta)) P_\ell^m(\cos(\theta)) \sin(\theta) \dd \theta.
        \end{align*}
		Therefore, when we set this expression to zero, we see that $h_{q_1, q_2, m}^{[2]} = C_{q_1, q_2, m}^{[2]} \, A_{q_1} A_{q_2}$, where
		\begin{align}
			C_{q_1, q_2, m}^{[2]} &= \frac{R^2}{I_\ell} \, \delta_{m, q_1 + q_2} \, d_{\ell, q_1, q_2}^{[2]} \, \mbf G_2\left(\bs \varphi, \bs \varphi\right) \cdot \bs \psi, \qquad \text{with}\label{secondordercoef}
			\\
			I_\ell &= \int_0^R i_\ell(\bs \omega r) \, S_\ell \, \bs \varphi \cdot i_\ell(\bs \omega r) \, S_\ell \, \bs \psi \, r^2 \, \dd r + R^2 \, \bs \varphi \cdot \bs \psi. \notag
		\end{align}
		These coefficients are sufficient to classify generic transcritical bifurcations that occur when $\ell$ is even. However, for odd $\ell$ or if the \rtext{quadratic} coefficient vanishes for even $\ell$, we need to go further to compute cubic terms.
        To simplify the computation of cubic terms, we assume that $h_{q_1, q_2, m}^{[2]} = 0$. Under this assumption, the solution to \eqref{secondordersimplified} that is orthogonal to $\ker\left(\jac \mbf f(\mbf P) + \, \mathbb D \, \bs \nabla^2\right)$ is given by
		\begin{align*}
			\mbf W_{q_1, q_2}^{[2]} &= \begin{pmatrix}
				\mbf W_{q_1, q_2, 1}^{[2]}
				\\[1.5ex]
				\mbf W_{q_1, q_2, 2}^{[2]}
			\end{pmatrix}
            \\
            &= A_{q_1} A_{q_2} \left(\sum_{p = \abs{q_1 + q_2}}^{2 \ell} \begin{pmatrix}
					i_p\left(\bs \omega r\right) \, \mbf U_{p, q_1, q_2}^{[2]}
					\s 
					\mbf u_{p, q_1, q_2}^{[2]}
				\end{pmatrix}P_p^{q_1 + q_2}(\cos(\theta)) \, e^{i\left(q_1 + q_2\right)\phi}\right),
		\end{align*}
		where $\mathbb B_p \, \mbf u_{p, q_1, q_2}^{[2]} = - d_{p, q_1, q_2}^{[2]} \, \mbf G_2\left(\bs \varphi,\bs \varphi\right)$, and $\mbf U_{p, q_1, q_2}^{[2]} = S_p \, \mbf u_{p, q_1, q_2}^{[2]}$, for each integer $\abs{q_1 + q_2} \leq p \leq 2 \ell$.
		
		Note that these systems of equations have a solution for every $\abs{q_1 + q_2} \leq p \leq 2 \ell$ since $d_{p, q_1, q_2}^{[2]} = 0$ for every odd $p$ and, if $\ell$ is odd, we stated in Theorem \ref{th:maintheorem} that $\mathbb B_p$ is invertible when $p$ is even. On the other hand, if $\ell$ is even, then although $\mathbb B_\ell$ is non-invertible, the right-hand side belongs to its image when $p = \ell$.
  
        With this, we are ready to proceed to the next order.
  
	\subsection{Order 3} \label{sub:order3}
        At this order, equation \eqref{generalsystem} becomes
		\begin{multline*}
			\sum_{b = - \ell}^\ell \partial_{A_b} \mbf W_b^{[1]} \, \sum_{- \ell \leq m_1, m_2, m_3 \leq \ell} h_{m_1, m_2, m_3, b}^{[3]}
			\s 
			= \sum_{- \ell \leq q_1, q_2, q_3 \leq \ell}\left(\jac \mbf f(\mbf P) + \mathbb D \, \bs \nabla^2\right)\mbf W_{q_1, q_2, q_3}^{[3]} + 2 \, \sum_{- \ell \leq q_1, q_2, q_3 \leq \ell} \begin{pmatrix}
				\mbf 0_n
				\s 
				\mbf G_2\left(\mbf W_{q_1, q_2, 2}^{[2]}, \mbf W_{q_3, 2}^{[1]}\right)
			\end{pmatrix}
			\s
			+ \sum_{- \ell \leq q_1, q_2, q_3 \leq \ell} \begin{pmatrix}
				\mbf 0_n
				\s 
				\mbf G_3\left(\mbf W_{q_1, 2}^{[1]}, \mbf W_{q_2, 2}^{[1]}, \mbf W_{q_3, 2}^{[1]}\right)
			\end{pmatrix}.
		\end{multline*}
		This entails
        \begin{align*}
            & \sum_{-\ell\leq q_1, q_2, q_3\leq \ell}\left(\jac \mbf f(\mbf P) + \mathbb D \, \bs \nabla^2\right)\mbf W_{q_1,q_2,q_3}^{[3]} = \sum_{b = -\ell}^\ell \partial_{A_b} \mbf W_b^{[1]} \, \sum_{-\ell \leq m_1, m_2, m_3 \leq \ell} h_{m_1, m_2, m_3, b}^{[3]}
			\\ 
			&\hspace{0.25cm} - 2  \sum_{-\ell\leq q_1, q_2, q_3 \leq \ell} A_{q_1} A_{q_2} A_{q_3}  P_\ell^{q_3}(\cos(\theta)) e^{i q_3 \phi}\sum_{p = \abs{q_1 + q_2}}^{2 \ell} P_p^{q_1 + q_2}(\cos(\theta)) \, e^{i \left(q_1 + q_2\right) \phi} \times
            \\
            &\hspace{1cm} \begin{pmatrix}
				\mbf 0_n
				\s 
			    \mbf G_2\left(\mbf u_{p, q_1, q_2}^{[2]}, \bs \varphi\right)
			\end{pmatrix}
				- \sum_{- \ell \leq q_1, q_2, q_3 \leq \ell} A_{q_1} A_{q_2} A_{q_3} \, P_\ell^{q_1}(\cos(\theta)) \times
                \\ 
                &\hspace{1.5cm} P_\ell^{q_2}(\cos(\theta)) \, P_\ell^{q_3}(\cos(\theta)) \, e^{i\left(q_1 + q_2 + q_3\right)\phi} \begin{pmatrix}
				\mbf 0_n
				\s
				\mbf G_3\left(\bs \varphi, \bs \varphi, \bs \varphi\right)
			\end{pmatrix}.
	  \end{align*}
		Therefore, by splitting this equation according to the amplitudes, we see that, for every $- \ell \leq q_1, q_2, q_3 \leq \ell$,
		\begin{multline}
			\left(\jac \mbf f(\mbf P) + \mathbb D \, \bs \nabla^2\right)\mbf W_{q_1, q_2, q_3}^{[3]} = \sum_{b = -\ell}^\ell \begin{pmatrix}
				i_\ell\left(\bs \omega \, r\right) \, \bs{\hat \varphi}
				\s 
				\bs \varphi
			\end{pmatrix} P_\ell^b(\cos(\theta)) \, e^{ib\phi} \, h_{q_1, q_2, q_3, b}^{[3]}
			\s 
			- 2 \, P_\ell^{q_3}(\cos(\theta)) \sum_{p = \abs{q_1 + q_2}}^{2 \ell} P_p^{q_1 + q_2}(\cos(\theta)) \, e^{i\left(q_1 + q_2 + q_3\right) \phi}\begin{pmatrix}
				\mbf 0_n
				\s 
				\mbf G_2\left(\mbf u_{p, q_1, q_2}^{[2]}, \bs \varphi\right)
			\end{pmatrix}
			\s 
			- P_\ell^{q_1}(\cos(\theta)) \, P_\ell^{q_2}(\cos(\theta)) \, P_\ell^{q_3}(\cos(\theta)) \, e^{i\left(q_1 + q_2 + q_3\right)\phi}\begin{pmatrix}
				\mbf 0_n
				\s 
				\mbf G_3\left(\bs \varphi, \bs \varphi, \bs \varphi\right)
			\end{pmatrix}. \label{thirdorderequation}
		\end{multline}
		Thus, when we apply the inner product with $\bs \psi_m^*$ in \eqref{thirdorderequation}, we \rtext{obtain} the following expression
        \begin{align*}
            & \innerp{\left(\jac \mbf f(\mbf P) + \mathbb D \bs \nabla^2\right)\mbf W_{q_1, q_2, q_3}^{[3]}, \bs \psi_m^*}
            \\
            & = \sum_{b = -\ell}^\ell \innerp{\begin{pmatrix}
				i_\ell\left(\bs \omega r\right) \, \bs{\hat \varphi}
				\s 
				\bs \varphi
			\end{pmatrix} P_\ell^b(\cos(\theta)) \, e^{ib\phi} \, h_{q_1, q_2, q_3, b}^{[3]}, \, \bs \psi_m^*}
			\\
			& - 2 \, \innerp{P_\ell^{q_3}(\cos(\theta)) \sum_{p = \abs{q_1 + q_2}}^{\rtext{2 \, \ell}} P_p^{q_1 + q_2}(\cos(\theta)) \, e^{i\left(q_1 + q_2 + q_3\right) \phi}\begin{pmatrix}
				\mbf 0_n
				\s 
				\mbf G_2\left(\mbf u_{p, q_1, q_2}^{[2]}, \bs \varphi\right)
			\end{pmatrix}, \, \bs \psi_m^*}
			\\
			& - \innerp{P_\ell^{q_1}(\cos(\theta)) \, P_\ell^{q_2}(\cos(\theta)) \, P_\ell^{q_3}(\cos(\theta)) \, e^{i\left(q_1 + q_2 + q_3\right)\phi}\begin{pmatrix}
				\mbf 0_n
				\s 
				\mbf G_3\left(\bs \varphi, \bs \varphi, \bs \varphi\right)
			\end{pmatrix}, \, \bs \psi_m^*},
        \end{align*}
		which can be expanded as
        \begin{align*}
            & \innerp{\left(\jac \mbf f(\mbf P) + \mathbb D \bs \nabla^2\right)\mbf W_{q_1, q_2, q_3}^{[3]}, \bs \psi_m^*}
            \\
            &= 
			\left(\int_0^R i_\ell\left(\bs \omega r\right) S_\ell \, \bs \varphi \cdot i_\ell\left(\bs \omega r\right) S_\ell \, \bs \psi \, r^2  \, \dd r + R^2 \, \bs \varphi \cdot \bs \psi\right)
            \, h_{q_1, q_2, q_3, m}^{[3]} - 2 \, \delta_{m, q_1 + q_2 + q_3} \, R^2 \times 
            \\ 
            & \sum_{p = \abs{q_1 + q_2}}^{2 \ell} \mbf G_2\left(\mbf u_{p, q_1, q_2}^{[2]}, \bs \varphi\right) \cdot \bs \psi \int_0^\pi P_\ell^{q_3}(\cos(\theta)) \, P_p^{q_1 + q_2}(\cos(\theta)) \, P_\ell^{m}(\cos(\theta)) \, \sin(\theta) \, \dd \theta
            \\
            & - \delta_{m, q_1 + q_2 + q_3}  R^2 \mbf G_3\left(\bs \varphi, \bs \varphi, \bs \varphi\right) \cdot \bs \psi \times
            \\ 
            &\int_0^\pi P_\ell^{q_1}(\cos(\theta))  P_\ell^{q_2}(\cos(\theta))  P_\ell^{q_3}(\cos(\theta))  P_\ell^{m}(\cos(\theta)) \sin(\theta) \dd \theta.   
        \end{align*}
		Finally, the solvability condition comes from equating this expression to zero, which leads us to conclude that $h_{q_1, q_2, q_3, m}^{[3]}\rtext{\left(\mbf A_\ell\right)} = C_{q_1, q_2, q_3, m}^{[3]} \, A_{q_1} A_{q_2} A_{q_3}$, where
		\begin{align}
            C_{q_1, q_2, q_3, m}^{[3]} &= \frac{R^2}{I_\ell} \, \delta_{m, q_1 + q_2 + q_3} \left(2 \,  \sum_{p = \abs{q_1 + q_2}}^{2 \ell} \mbf G_2\left(\mbf u_{p, q_1, q_2}^{[2]}, \bs \varphi\right) \cdot \bs \psi \int_{- 1}^1 P_\ell^{q_3}(x) \, P_p^{q_1 + q_2}(x) \times \right. \notag
            \\
    		& \left. P_\ell^m(x) \, \dd x \vphantom{\sum_{p = \abs{q_1 + q_2}}^{2 \ell}} + \mbf G_3\left(\bs \varphi, \bs \varphi, \bs \varphi\right) \cdot \bs \psi \int_{- 1}^1 P_\ell^{q_1}(x) \, P_\ell^{q_2}(x) \, P_\ell^{q_3}(x) \, P_\ell^m(x) \, \dd x\right), \label{thirdordercoef}
        \end{align}


    \subsection{Unfolding} \label{sec:unfolding}
        The calculation so far has been performed under the assumption that we are precisely at the bifurcation point $\mu = \mu^*$. Now, we recall that, for the first part of the proof, we set \rtext{$\mu = \mu^*$ t}o study what happens for small $\mu - \mu^*$; we now want $\rtext{\mbf A_\ell}$ to solve a differential equation of the following form
		\begin{align*}
			\partial_t \rtext{\mbf A_\ell} = \sum_{m_1 = - \ell}^\ell \mbf h_{m_1}^{[1, 1]}\left(A_{m_1}, \mu\right) &+ \sum_{m_1 = -\ell}^\ell \mbf h_{m_1}^{[1, 0]}\left(A_{m_1}, \mu\right)
            \\ 
            & \hspace{1cm} + \sum_{- \ell \leq m_1 \leq m_2 \leq \ell} \mbf h_{m_1, m_2}^{[2, 0]}\left(A_{m_1}, A_{m_2}, \mu\right) + \ldots,
		\end{align*}
        where the superscripts stand for the degree of each polynomial in $\rtext{\mbf A_\ell}$ and $\mu - \mu^*$, respectively. In particular, note that for each $p = 1, 2, 3$, $\mbf h_{m_1, m_2, \ldots}^{[p, 0]}\left(A_{m_1}, A_{m_2}, \ldots, \mu\right) = \mbf h_{m_1, m_2, \ldots}^{[p]}\left(A_{m_1}, A_{m_2}, \ldots\right)$, which corresponds to one of the terms we calculated in the preceding subsections. We extend this by considering a change of variables with the following form
		\begin{align*}
			\begin{pmatrix}
				\mbf U
				\s 
				\mbf u
			\end{pmatrix}  = \mbf P + \sum_{q_1 = -\ell}^\ell \mbf W_{q_1}^{[1, 1]}\left(A_{q_1}, \mu\right) & + \sum_{q_1 = -\ell}^\ell \mbf W_{q_1}^{[1, 0]}\left(A_{q_1}, \mu\right) \notag
            \\
            & \hspace{1cm} + \sum_{-\ell\leq q_1, q_2\leq \ell}\mbf W_{q_1, q_2}^{[2, 0]}\left(A_{q_1}, A_{q_2}, \mu\right) + \ldots, 
		\end{align*}
        where we only consider the leading-order unfolding term (see e.g.~\cite{Murdock}\rtext{)}, and the terms $\mbf W_{q_1, q_2, \ldots}^{[p, 0]}\left(A_{q_1}, A_{q_2}, \ldots, \mu\right)$ for $p = 1, 2, 3$, were already found. Thus, we only need to find $\mbf h_{m_1}^{[1, 1]}\left(A_{m_1}, \mu\right)$, which we do by considering only the terms in the Taylor expansion of \eqref{generalsystem} that have a product between $\mu$ and one amplitude $A_p$, for $- \ell\leq p\leq \ell$ and $\mu$. Carefully summing such terms leads to the following expression
		\begin{multline}
			\sum_{b = -\ell}^\ell \partial_{A_b} \mbf W_b^{[1, 0]} \sum_{m_1 = -\ell}^\ell h_{m_1, b}^{[1, 1]} = \sum_{q_1 = -\ell}^\ell\left(\jac \mbf f(\mbf P) + \mathbb D \, \bs \nabla^2\right)\mbf W_{q_1}^{[1,1]}
			\s 
			+ \rtext{\left(\mu - \mu^*\right) \,} \sum_{q_1 = -\ell}^\ell \begin{pmatrix}
				- \rtext{\dfrac{\partial B}{\partial \mu}} \mbf W_{q_1, 1}^{[1, 0]}
				\\[1.5ex]
				\mbf G_{1, 1}\left(\mbf W_{q_1, 2}^{[1, 0]}\right) - \rtext{\dfrac{\partial K}{\partial \mu}} \left(\mbf W_{q_1, 2}^{[1, 0]} - \left. \mbf W_{q_1, 1}^{[1, 0]}\right|_{\rtext{r = R}}\right)
			\end{pmatrix}
            \s
            - \frac{\ell (\ell + 1)}{R^2} \rtext{\, \left(\mu - \mu^*\right) \,} \frac{\partial \mathbb D}{\partial \mu} \sum_{q_1 = -\ell}^\ell \mbf W_{q_1}^{[1, 0]}. \label{cross-ordereq}
		\end{multline}
        Next, note that \eqref{cross-ordereq} can be split for each amplitude, which means that, for each $- \ell \leq q_1 \leq \ell$
	  \begin{multline*}
			\left(\jac \mbf f(\mbf P) + \mathbb D \, \bs \nabla^2\right)\mbf W_{q_1}^{[1, 1]} = \sum_{b = -\ell}^\ell \begin{pmatrix}
				i_\ell\left(\bs \omega r\right) \, \bs{\hat \varphi}
				\s 
				\bs \varphi
			\end{pmatrix} P_\ell^b(\cos(\theta)) \, e^{ib\phi} \, h_{q_1, b}^{[1, 1]}
			\s 
			- A_{q_1} \, \rtext{\left(\mu - \mu^*\right)} \, \begin{pmatrix}
				- \rtext{\dfrac{\partial B}{\partial \mu}} \, i_\ell\left(\bs \omega \, r\right) \, \bs{\hat \varphi}
				\\[1.5ex]
				\mbf G_{1, 1}\left(\bs \varphi\right) - \rtext{\dfrac{\partial K}{\partial \mu}} \, \left(\bs \varphi - i_\ell(\bs \omega \, R) 
                \, \bs{\hat \varphi}\right)
			\end{pmatrix} P_\ell^{q_1}(\cos(\theta)) \, e^{iq_1 \phi}
			\s 
			+ \frac{\ell (\ell + 1)}{R^2} \, A_{q_1} \, \rtext{\left(\mu - \mu^*\right)} \, \frac{\partial \mathbb D}{\partial \mu} \, \begin{pmatrix}
				i_\ell(\bs \omega \, r) \, \bs{\hat \varphi} 
				\s 
				\bs \varphi
			\end{pmatrix} P_\ell^{q_1} (\cos(\theta)) \, e^{i q_1 \phi}.
		\end{multline*}
        Again, in this case, we have a solvability condition that can be used to find $h_{q_1, b}^{[1, 1]}$ for every pair of integers $- \ell \leq b, q_1 \leq \ell$. In particular, we use the Fredholm alternative with the vector functions $\bs \psi_m^*$. Note that
		\begin{multline*}
			\innerp{\left(\jac \mbf f(\mbf P) + \mathbb D \, \bs \nabla^2\right)\mbf W_{q_1}^{[1,1]}, \, \bs \psi_m^*} = \innerp{\begin{pmatrix}
				i_\ell\left(\bs \omega \, r\right) \, \bs{\hat \varphi}
				\s 
				\bs \varphi
			\end{pmatrix} P_\ell^m(\cos(\theta)) \, e^{i m \phi} \, h_{q_1, m}^{[1, 1]}, \, \bs \psi_m^*}
			\s 
			- A_{q_1} \, \rtext{\left(\mu - \mu^*\right)} \, \innerp{\begin{pmatrix}
				- \rtext{\dfrac{\partial B}{\partial \mu}} \, i_\ell\left(\bs \omega \, r\right) \, \bs{\hat \varphi}
				\\[1.5ex]
				\mbf G_{1, 1}\left(\bs \varphi\right) - \rtext{\dfrac{\partial K}{\partial \mu}} \, \left(\bs \varphi - i_\ell(\bs \omega \, R) \, \bs{\hat \varphi}\right)
			\end{pmatrix} P_\ell^{q_1}(\cos(\theta)) \, e^{iq_1 \phi}, \, \bs \psi_m^*}
			\s 
			+ \frac{\ell (\ell + 1)}{R^2} \, A_{q_1} \, \rtext{\left(\mu - \mu^*\right) \,} \innerp{\frac{\partial \mathbb D}{\partial \mu} \, \begin{pmatrix}
				i_\ell\left(\bs \omega r\right) \, \bs{\hat \varphi}
				\s 
				\bs \varphi
			\end{pmatrix} P_\ell^{q_1}(\cos(\theta)) \, e^{i q_1 \phi}, \, \bs \psi_m^*}.
		\end{multline*}
		Equating this expression to zero, we \rtext{obtain} $h_{q_1, m}^{[1, 1]}\left(A_m, \mu\right) = \rtext{\left(\mu - \mu^*\right)} \, C_{1, 1} \, \delta_{q_1, m} \,  A_m$, where
        \begin{align}
            \label{crossordercoef}
			C_{1, 1} &= \frac{1}{I_\ell} \left(\int_0^R \left(- \frac{\partial B}{\partial \mu} \, i_\ell(\bs \omega \, r) \, S_\ell \, \bs \varphi\right) \cdot \left(i_\ell(\bs \omega \, r) \, S_\ell \, \bs \psi \right) r^2 \, \dd r\right.
            \\ 
            & + R^2 \, \left(\mbf G_{1, 1}(\bs \varphi) - \frac{\partial K}{\partial \mu} \, \left(\bs \varphi - i_\ell(\bs \omega \, R) \, \bs{\hat \varphi}\right)\right) \cdot \bs \psi - \frac{\ell (\ell + 1)}{R^2} \times \notag
            \\
			& \hspace{1cm} \left. \left(\int_0^R \left(\frac{\partial \mathbb D_U}{\partial \mu} \, i_\ell(\bs \omega \, r) \, S_\ell \, \bs \varphi\right) \cdot \left(i_\ell(\bs \omega \, r) \, S_\ell \, \bs \psi\right) r^2 \, \dd r + R^2 \, \left(\frac{\partial \mathbb D_u}{\partial \mu} \, \bs \varphi\right) \cdot \bs \psi\right)\right). \notag 
        \end{align}
        Hence, we \rtext{obtain} the desired result: the amplitude equation is given by
		\begin{align*}
			\partial_t \rtext{\mbf A_\ell} &= C_{1, 1} \, \left(\mu - \mu^*\right) \,  \rtext{\mbf A_\ell} + \sum_{- \ell \leq q_1, q_2\leq \ell} \mbf C_{q_1, q_2}^{[2]} \, A_{q_1} A_{q_2} + h.o.t.,
		\end{align*}
		if $\ell$ is even and the quadratic coefficients are non-zero, and
		\begin{align*}
			\partial_t \rtext{\mbf A_\ell} &= C_{1,1} \, \left(\mu - \mu^*\right) \,  \rtext{\mbf A_\ell} + \sum_{-\ell\leq q_1, q_2, q_3\leq \ell} \mbf C_{q_1, q_2, q_3}^{[3]} \, A_{q_1} A_{q_2} A_{q_3} + h.o.t.,
		\end{align*}
		if $\ell$ is odd or $\ell$ is even and the quadratic coefficients are zero.

\section{Numerical examples}\label{sec:examples}
    In this section, we present the numerical solutions of two different systems of type \eqref{generalsystem} and calculate the corresponding amplitude equations given in Theorem \ref{th:maintheorem}. As described in detail in Supplementary Materials \ref{sec:bsfem}, we use the Implicit-Explicit (IMEX) bulk-surface finite element method (see \cite{cusseddu2019, elliott2013finite, madzvamuse2015, macdonald2016computational,  madzvamuse2016bulk}, and references therein), which we implemented using the open-source finite element package, FEniCS \cite{alnaes2015fenics,logg2012automated}. All calculations of the coefficients of the amplitude equations were carried out with a Python code, which is made available \cite{amplitude-eq}. Furthermore, in the same repository, a Mathematica script is available to plot bifurcation diagrams and evaluate the coefficients of the amplitude equations at different bifurcation points.
  
    For each equation, we compute the amplitude equations and provide numerical justification of the predictions of the theory through \rtext{the} computation of the simplest patterned states. Exploring the complete dynamics of the amplitude equations and comparing it to the dynamics of full BS-RDE\rtext{s} is beyond the scope of the current study.
    
    \subsection{Brusselator} \label{sub:Brusselator} 
        As a first example, we consider a bulk-surface system with so-called Brusselator kinetics \cite{Brusselator}. See \cite{Fahad} and references therein for some of the consequences on localised pattern formation in the case of sub-critical Turing bifurcations on the real line. One motivation of the present work is to look for analogous structures in the bulk-surface framework. We assume that all nonlinear reactions take place on the surface, while we have linear decaying kinetics in the bulk. 
   
        Specifically, we consider the following bulk-surface system
        \begin{align*}
            \partial_t U &= - \sigma_1 \, U + D_1 \, \nabla^2_\Omega \, U, \quad \text{in } \Omega,
            \\
            \partial_t V &= - \sigma_2 \, V + D_2 \, \nabla^2_\Omega \, V, \quad \text{in } \Omega,
            \\
            \partial_t u &= \frac{\alpha}{\delta} - \gamma \, u + u^2 \, v - K_1\left(u - \left. U \right|_{\rtext{r = R}}\right) + \delta^2 \, \nabla_\Gamma^2 u, \quad \text{on } \Gamma,
            \s 
            \partial_t v &= (\gamma - 1) \, u - u^2 \, v - K_2 \left(v - \left. V\right|_{\rtext{r = R}}\right) + \nabla_\Gamma^2 v, \quad \text{on } \Gamma,
        \end{align*}
        subject to the Robin boundary conditions
        \begin{align*}
            D_1 \left. \frac{\partial U}{\partial \mbf{\hat n}}\right|_{\rtext{r = R}} = K_1 \left(u - \left. U\right|_{\rtext{r = R}}\right), \quad \text{and } \quad 
            D_2 \left. \frac{\partial V}{\partial \mbf{\hat n}}\right|_{\rtext{r = R}} = K_2 \left(v - \left. V\right|_{\rtext{r = R}}\right), \quad \text{on } \Gamma.
        \end{align*}		
        We start by looking for radially symmetric steady states. In order to use the same notation we have used throughout this article, let us define $\mbf U = (U, V)^\intercal$, $\mbf u = (u, v)^\intercal$, $\mbf g(\mbf u) = \begin{pmatrix}
            \dfrac{\alpha}{\delta} - \gamma \, u + u^2 \, v
            \\[1ex]
            (\gamma - 1) \, u - u^2 \, v
        \end{pmatrix}$, $K = \begin{pmatrix}
            K_1 & 0
            \\
            0 & K_2
        \end{pmatrix}$, and $\mathbb D_u = \begin{pmatrix}
            \delta^2 & 0
            \\
            0 & 1
        \end{pmatrix}$.
        Then, the radially-symmetric steady state is given by $\mbf P = \left(\mbf U^*(r),\mbf u^*\right)$, where $\mbf U^*(r) = i_0(\bs \omega r) \, S_0 \, \mbf u^*$, and $\mbf g(\mbf u) - K \left(I - i_0(\bs \omega r) \, S_0 \right) \mbf u^* = \mbf 0$. We fix the non-dimensional parameters given in Table \ref{tab:fixedparBrus}, and leave $\gamma$ and $\delta$ as free parameters.
        \begin{table}
            \centering
            \begin{tabular}{|c|c|c|c|c|c|c|c|c|}
                \hline
                Parameter & R & $\sigma_1$ & $\sigma_2$ & $\alpha$ & $K_1$ & $K_2$ & $D_1$ & $D_2$
                \\
                \hline
                Value & 1 & 1 & 0.1 & 1 & 0.005 & 1 & 1 & 1
                \\
                \hline
            \end{tabular}
            \caption{Parameter values fixed for the analysis of the Brusselator.}
            \label{tab:fixedparBrus}
        \end{table}
        Under this choice of parameter values, the system has only one radially symmetric steady state $(\mbf U^*(r), \mbf u^*)$. For this steady state, we get the bifurcation curves shown in Figure \ref{fig:simplebrusbif} in the $\gamma - \delta$ plane. These are the curves determined by the condition $\rtext{\det \left(\mathbb B_\ell\right)} = 0$ for different values of $\ell$, as shown in the figure. At each point, these curves give rise to the amplitude equations stated in Theorem \ref{th:maintheorem}. Note that the radially symmetric steady state is stable only in the region below and to the left of all the bifurcation curves shown in the figure. To pose the amplitude equations, we use $\varepsilon = C_{1, 1} \left(\mu - \mu^*\right)$ to simplify notation. We have computed the amplitude equations for the bifurcations occurring at each of the bifurcation points identified by solid dots in the figure. In each case, the coefficients were computed using the code in \cite{amplitude-eq} for which each nonlinear term can take an arbitrary value and they turned out to obey the same form as shown by the theory (see Sec. 3.1 and \cite{callahan}) without making that assumption {\it a priori}.
        
        \begin{figure}
            \begin{subfigure}[c]{0.48\textwidth}
                \centering
                \begin{tikzpicture}
                    \node[anchor=south west,inner sep=0] (image) at (0,0) {\includegraphics[width = \textwidth]{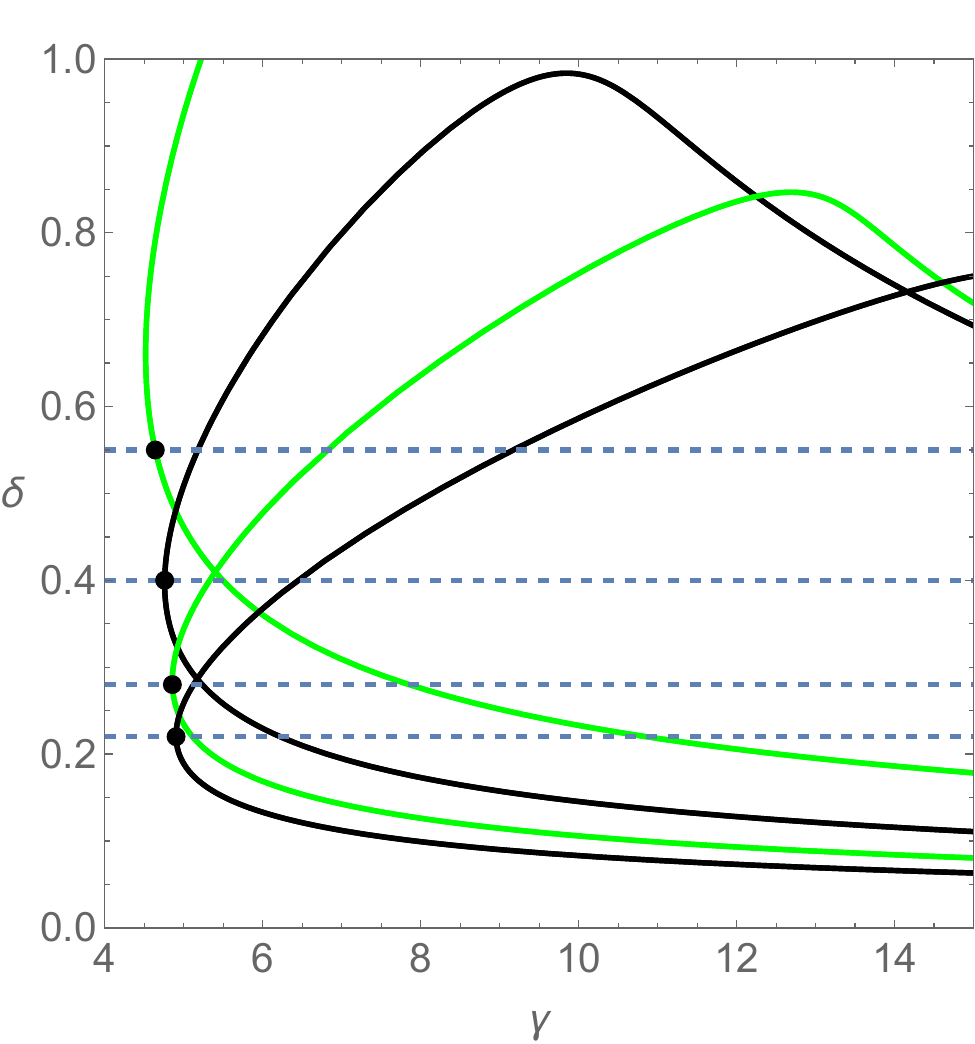}};
                    \node at (1.6, 8.3) {$\ell$ = 1};
                    \node at (4.6, 8.3) {$\ell$ = 2};
                    \node at (6.6, 7.2) {\small{$\ell$ = 3}};
                    \node at (6.5, 5.5) {\small{$\ell$ = 4}};
                \end{tikzpicture}
                \caption{}
                \label{fig:simplebrusbif}
            \end{subfigure}
            \hfill
            \begin{subfigure}[c]{0.48\textwidth}
                \centering
                \begin{tikzpicture}
                    \node[anchor=south west,inner sep=0] (image) at (0,0) {\includegraphics[width=\textwidth]{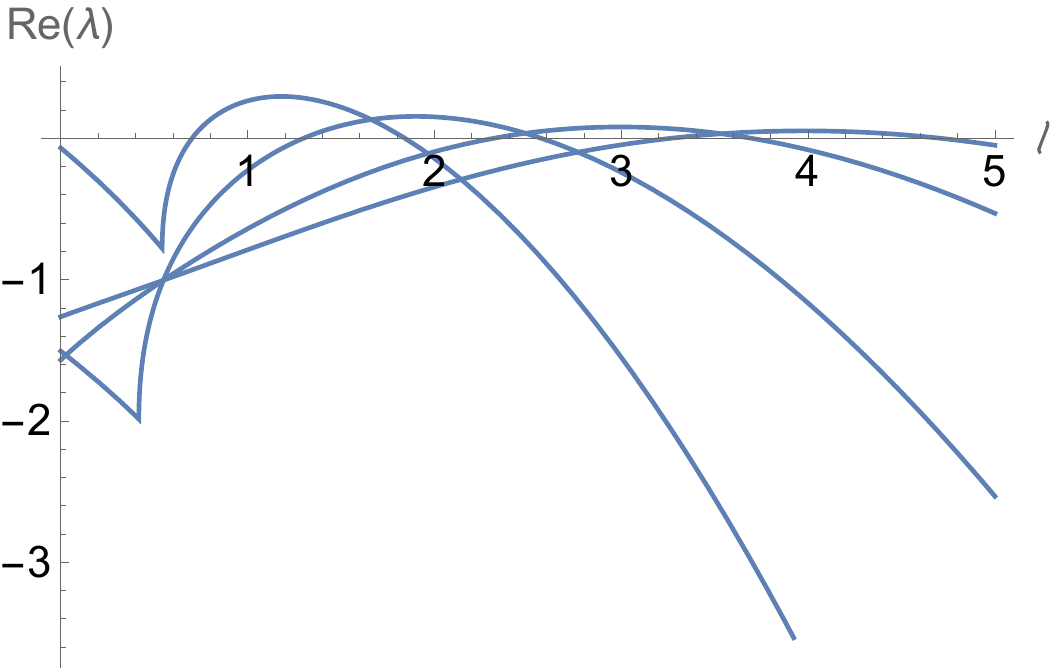}};
                \end{tikzpicture}
                \caption{}
                \label{fig:disprell=1}
            \end{subfigure}
            \caption{(a) Bifurcation curves at which $\mbf P$ undergoes a Turing bifurcation for different values of $\gamma$ and $\delta$ in the Brusselator, with the parameters in Table \ref{tab:fixedparBrus} fixed. Green (resp. black) curves are (Turing) pitchfork (resp. (Turing) transcritical) bifurcations in the corresponding amplitude equations. (b) Dispersion relation for the Brusselator when the parameters in Table \ref{tab:fixedparBrus} are fixed, $\gamma = 5$, and $\delta \in \{0.55, 0.4, 0.28, 0.22\}$. \rtext{Here, $\Re(\lambda)$ represents the real part of the eigenvalue of $\mathbb B_\ell$ with \rtext{the largest} real part, and} the unstable value of $\ell$ increases as $\delta$ decreases, according to panel (a).}
        \end{figure}

        \paragraph{The case $\ell = 1$} When $(\gamma, \delta) = (4.639433, 0.55)$, $\mbf P$ goes through a (Turing) pitchfork bifurcation for $\ell = 1$. The amplitude equations were computed to take the form given in \eqref{examplel1}, with  $a = - 0.380329$ and $C_{1, 1} = 0.730658$ for $\mu = \gamma$. The manifold of steady states analysed in Sec.~\ref{sec:explanation} is given by
        \begin{align}
            \varepsilon = 0.380329 \left(A_0^2 + 2 \, \abs{A_1}^2\right). \label{invman}
        \end{align}
        The fact that $a < 0$ implies that this family is stable. Therefore, because of the study in \cite{callahan} and the equivariant branching lemma \cite{golubitsky}, we know there must exist a stable small-amplitude pattern right after the bifurcation point. 
      
        Now, if we set $(\gamma, \delta) = (5, 0.55)$, then we get the dispersion relation shown in Figure \ref{fig:disprell=1} that has $\ell = 1$ as an unstable mode. Furthermore, after setting a random initial condition close to the steady state and integrating the system {up to $t = 200$}, we find the pattern shown in the top-left panel of Figure \ref{fig:Brusselator_l=1,2,3,4}, which corresponds to solution $(a)$ in Figure \ref{fig:Callahan_spherical_harmonics}.
        \begin{figure}
            \centering
            \begin{subfigure}[b]{\textwidth}
                \centering
                \includegraphics[width=0.8\textwidth]{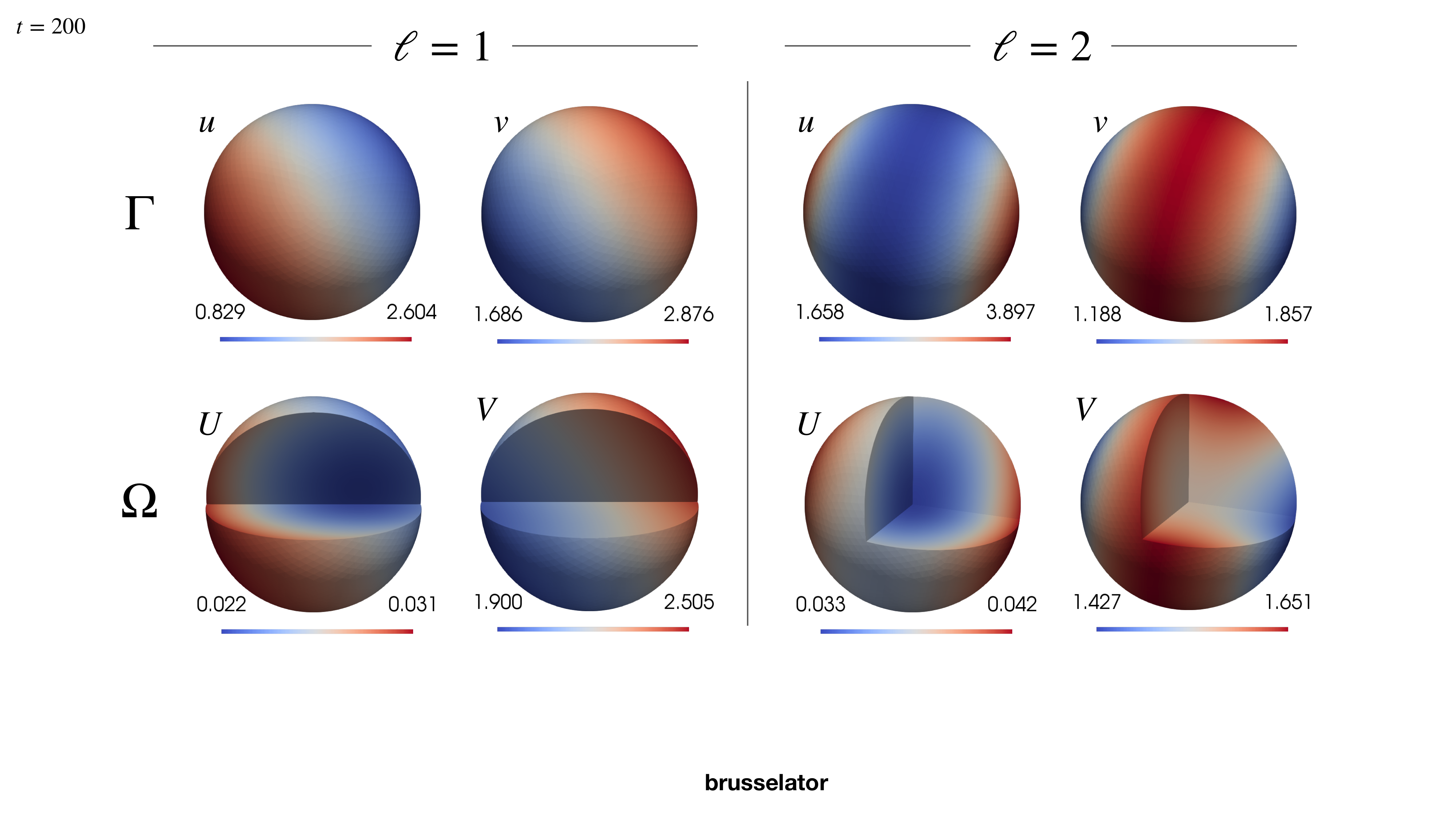}
            \end{subfigure}
            \hfill
            \begin{subfigure}[b]{\textwidth}
                \centering
                \includegraphics[width=0.8\textwidth]{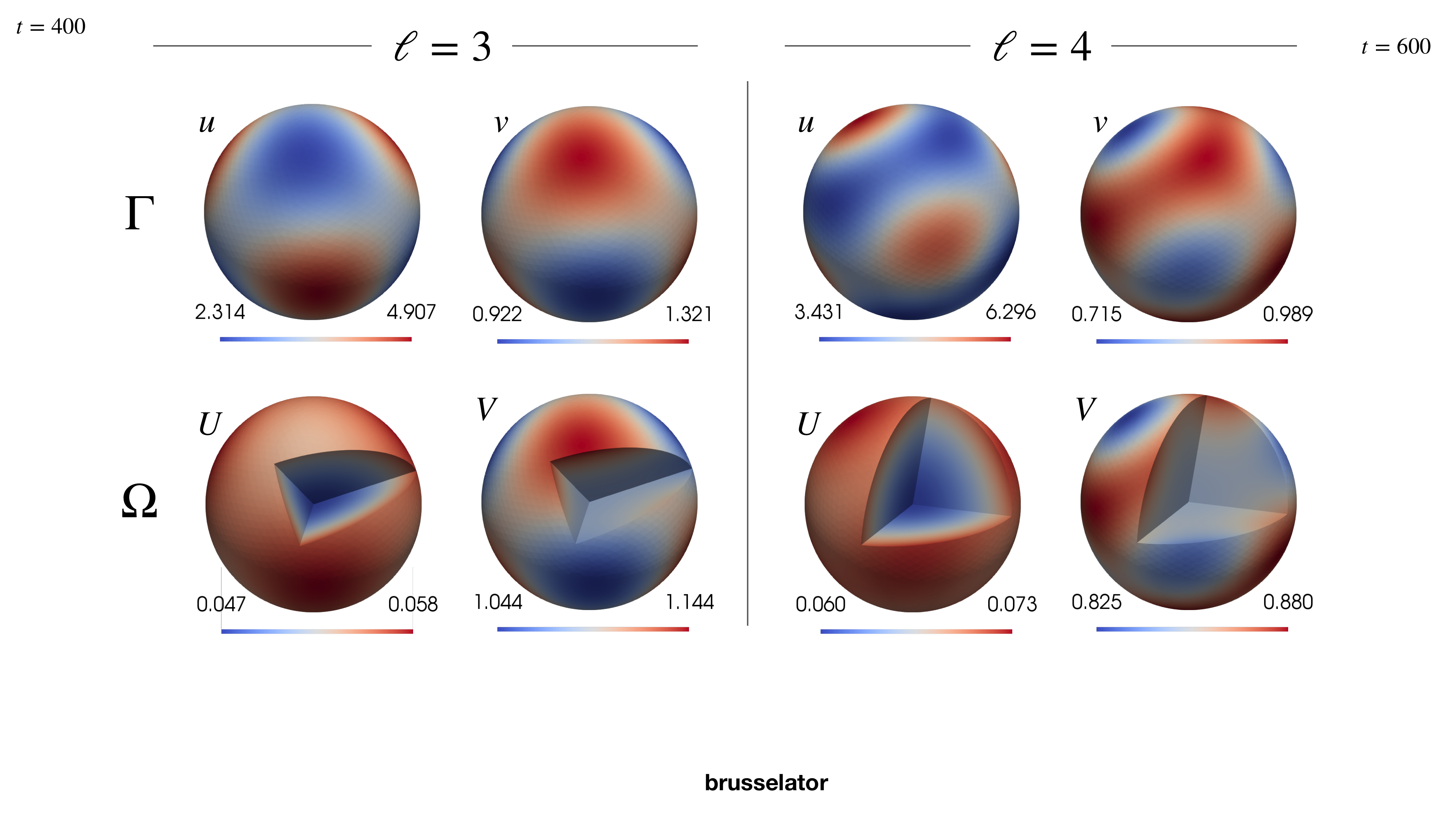}
            \end{subfigure}
            \caption{The numerical solutions $(u, v, U, V)$ of the Brusselator system, for four different parameter sets, corresponding to the choices $\ell = 1, 2, 3, 4$, at the corresponding final times, respectively, $t = 200$; $t = 200$; $t = 400$ and $t = 600$.}
            \label{fig:Brusselator_l=1,2,3,4}
        \end{figure}

        \paragraph{The case $\ell = 2$} When $(\gamma, \delta) = (4.758051, 0.4)$, $\mbf P$ goes through a Turing transcritical bifurcation for $\ell = 2$. Here, the amplitude equations are given by \eqref{examplel2} with $a = - 0.0132424$ and $C_{1, 1} = 0.620246$ for $\mu = \gamma$. As we can see, $a$ has a small magnitude. In fact, there is a nearby codimension-two point for which $a = 0$ at
        \\
        $(\gamma, \delta) = (4.76748, 0.376823)$. At such point, we find $b = 0.173063 > 0$. Therefore, the point marked in Figure \ref{fig:simplebrusbif} for $\ell = 2$ belongs to Region 1 in the diagram shown in Figure \ref{fig:bifdiagl2}. Thus, we expect to see a stable finite-amplitude steady state just after the bifurcation. Once again, we note that all coefficients were obtained using the code in \cite{amplitude-eq}. They follow the same structure as in \cite{callahan} for the parameter values $a$ and $b$ provided.
  
        After setting $(\gamma, \delta) = (5, 0.4)$, we get the dispersion relation shown in the top-right panel of Figure \ref{fig:disprell=1} with $\ell = 2$ as the unstable mode. After integrating the system up to $t = 200$, with a random initial condition close to the steady state, we obtained the pattern shown in the top-right panel of Figure \ref{fig:Brusselator_l=1,2,3,4}. Indeed, the transient numerical simulation leading to this stable pattern was found to show evidence of an initial jump to an intermediate weakly unstable pattern of smaller amplitude. See Supplementary Materials for graphs of transient simulations. Note that the stable pattern comprises two spots on the sides of the sphere together with a stripe in the middle. This corresponds to spherical harmonic $Y_2^0$ (panel $(b)$ in  Fig.~\ref{fig:Callahan_spherical_harmonics}) as predicted by the theory.

        \paragraph{The case $\ell = 3$} When $(\gamma, \delta) = (4.855527, 0.28)$, $\mbf P$ goes through a (Turing) pitchfork bifurcation. Here, the amplitude equations are found to be given by \eqref{examplel3}
        with $a = 0.00128584$, $b = - 0.078502$, and $C_{1, 1} = 0.551339$ when $\mu = \gamma$. Therefore, we are in Region 1 of the diagram shown in Figure \ref{fig:bifdiagl3}, and we expect to see a stable low-amplitude steady state after the bifurcation that resembles spherical harmonic $Y_3^2$ (panel $(c)$ in Fig.~\ref{fig:Callahan_spherical_harmonics}). We note that $a$ is close to zero. Thus, if we were to move the parameters a little bit into the region where $a$ is negative (Region 7), then we \rtext{would find} a different stable pattern (solution not shown, but it resembles $Y_3^3$, the pattern found for $\ell = 3$ in Fig.~\ref{fig:Fahad_l=1,2,3_t=170}).

        Indeed, setting $(\gamma, \delta) = (5, 0.28)$, which is just beyond the bifurcation point, we find the dispersion relation shown in Figure \ref{fig:disprell=1} with the mode $\ell = 3$ being unstable. Furthermore, after integrating the system with a random initial condition close to the steady state up to $t = 400$, we found the pattern shown in the bottom-left panel of Figure \ref{fig:Brusselator_l=1,2,3,4}, which indeed corresponds to $Y_3^2$ as predicted. 
        \paragraph{The case $\ell = 4$} When $(\gamma, \delta) = (4.902346, 0.22)$, $\mbf P$ goes through a (Turing) transcritical bifurcation. In this case, the amplitude equations up to second order are given by \eqref{quadratic_coef} with $a = - 0.000140359$ and $C_{1, 1} = 0.521154$ for $\mu = \gamma$. Again, in this case, we find that $a$ is small, and we are close to a codimension-two point for which $a = 0$, which we find to occur at $(\gamma, \delta) = (4.90393, 0.216563)$.
            
        To determine the kind of pattern we expect to see requires the evaluation of cubic coefficients, which depend on two independent parameters, $b$ and $c$ (see \cite{callahan,chossat} and Supplementary Materials). We found at the codimension-two point that $b = 0.0000758108$ and $c = - 0.0480261$. This can be shown to imply (based on \cite[Tables 4-5]{callahan}) that pattern $(f)$ in Fig.~\ref{fig:Callahan_spherical_harmonics} will be the stable branch that emerges from the bifurcation.

        To confirm this weakly nonlinear prediction, we set the parameters $(\gamma, \delta) = (5, 0.22)$ for which we get the dispersion relation shown in Figure \ref{fig:disprell=1} with the mode $\ell = 4$ being unstable. Moreover, after setting a random initial condition close to the steady state and integrating the system up to $t = 400$, we got the pattern shown in the bottom-right panel of Figure \ref{fig:Brusselator_l=1,2,3,4}. This pattern also comprises spots and stripes as the one in the top-right panel of Figure \ref{fig:Brusselator_l=1,2,3,4}. It corresponds to pattern $(f)$ in Figure \ref{fig:Callahan_spherical_harmonics}, as predicted by the theory.

    \subsection{Cell polarity model} \label{sub:cellpolarity}
        Since we have developed the methodology in a general way, it is useful to consider an example with more than two components. To that end, we present a model analyzed in Chapter 6 of \cite{Fahad2022PhD}. This is a small modification, through the presence of source and loss terms, to the model developed by Abley {\it et al}  \cite{abley-intracellular} for cellular pattern formation. Here, there are four fields, each of which represents the concentration of a small G-protein, known as an ROP in plants. These ROPs occur in two families with concentrations $(u, v)$ and $(w, z)$ respectively, each of which has an active ($u$ and $w$)  and an inactive ($v$ and $z$) form. The inactive form is mostly present in the cell body, and the active form is on the cell membrane, where the reaction kinetics occurs. Different patterns of the active form are thought to provide the precursor to differential growth; see e.g.~\cite{beta,MeronIssue}. \rtext{We consider the case where there are source and loss terms, in contrast to other cell-polarity BS-RDE models that show species conservation \cite{ratz2015turing,ratz2014symmetry,cusseddu2019}.}
 
        \rtext{Most} mathematical studies of related systems typically take a homogenization approach within a single lower-dimensional domain, which represents both the bulk and the surface. Here, the system will be posed as
        \begin{align*}
            \partial_t \mbf U &= - B \, \mbf U + \mathbb D_U \, \nabla_\Omega^2 \mbf U, \quad \text{in } \Omega,
            \\
            \mathbb D_U \, \rtext{\left. \frac{\partial \mbf U}{\partial \mbf{\hat n}}\right|_{r = R}} &= K \left(\mbf u - \left.\mbf U\right|_{\rtext{r = R}}\right), 
            \\
            \partial_t \mbf u &= \mbf g(\mbf u) - K \left(\mbf u - \left. \mbf U \right|_{\rtext{r = R}}\right) + \mathbb D_u \, \nabla_\Gamma^2 \, \mbf u, \quad \text{on } \Gamma, \quad \text{ where}
        \end{align*}
        $\mbf U = (U, V, W, Z)^\intercal$, $\mbf u = (u, v, w, z)^\intercal$, $B = \diag\left(\sigma_1, \sigma_2, \sigma_3, \sigma_4\right)$,
        \\
        $\mathbb D_U = \diag\left(D_1, D_2, D_3, D_4\right)$,
        $K = \diag\left(K_1, K_2, K_3, K_4\right)$,
        \\
        $\mbf g(\mbf u) = \begin{pmatrix}
            F(u, v, w, z) - \kappa \, \xi \, u
            \\
            - F(u, v, w, z) + \kappa \, \theta
            \\
            G(u, v, w, z) - \kappa \, \xi \, w
            \\
            -G(u, v, w, z) + \kappa \, \theta
        \end{pmatrix}$, $\mathbb D_u = \diag\left(\delta_1^2, 1, \delta_2^2, 1\right)$, and
        \\
        $F(u ,v, w, z) = \left(\eta  u^2 + \rho\right) v - \left(\alpha w + \mu\right) u$, and $G(u, v, w, z) = \left(\eta w^2 + \rho\right) z - \left(\alpha u + \mu\right) w$.

        We find that although \rtext{the} calculation of the coefficients of the amplitude equations takes more time, the normal forms have the same form as the ones in the previous example, as predicted by the equivariant bifurcation theory. For illustration, we consider the parameter values shown in Table \ref{tab:fixedparFahad}, which are indicative rather than precise values for any particular biological system.
        \begin{table}
            \centering
            \begin{tabular}{|c|c|c|c|c|c|c|c|c|c|c|}
                \hline
                Parameter & $R$ & $\sigma_1$ & $\sigma_2$ & $\sigma_3$ & $\sigma_4$ & $\alpha$ & $\eta$ & $\theta$ & $\mu$ & $\xi$
                \\
                \hline
                Value & 1 & 100 & 1 & 100 & 1 & 0.05 & 3.6 & 5.5 & 0.5 & 2.7
                \\
                \hline
                \hline
                Parameter & $\rho$ & $K_1$ & $K_2$ & $K_3$ & $K_4$ & $D_1$ & $D_2$ & $D_3$ & $D_4$ & $\delta_2$
                \\
                \hline
                Value & 0.06 & 1 & 100 & 1 & 100 & 1 & 1 & 1 & 1 & 1
                \\
                \hline
            \end{tabular}
            \caption{Parameter values fixed for the analysis of the cell-polarity model.}
            \label{tab:fixedparFahad}
        \end{table}
        Here, we leave $\delta_1$ and $\kappa$ as free parameters. As usual, we start by computing bifurcation curves and dispersion relations for different values of $\ell$. Figure \ref{fig:bifdiagfahad} shows the corresponding neutral stability curves in the $\left(\kappa, \delta_1\right)$ plane.
        
        \begin{figure}
            \centering
            \begin{subfigure}[c]{0.48\textwidth}
                \centering
                \begin{tikzpicture}
                    \node[anchor=south west,inner sep=0] (image) at (0,0) {\includegraphics[width = \textwidth]{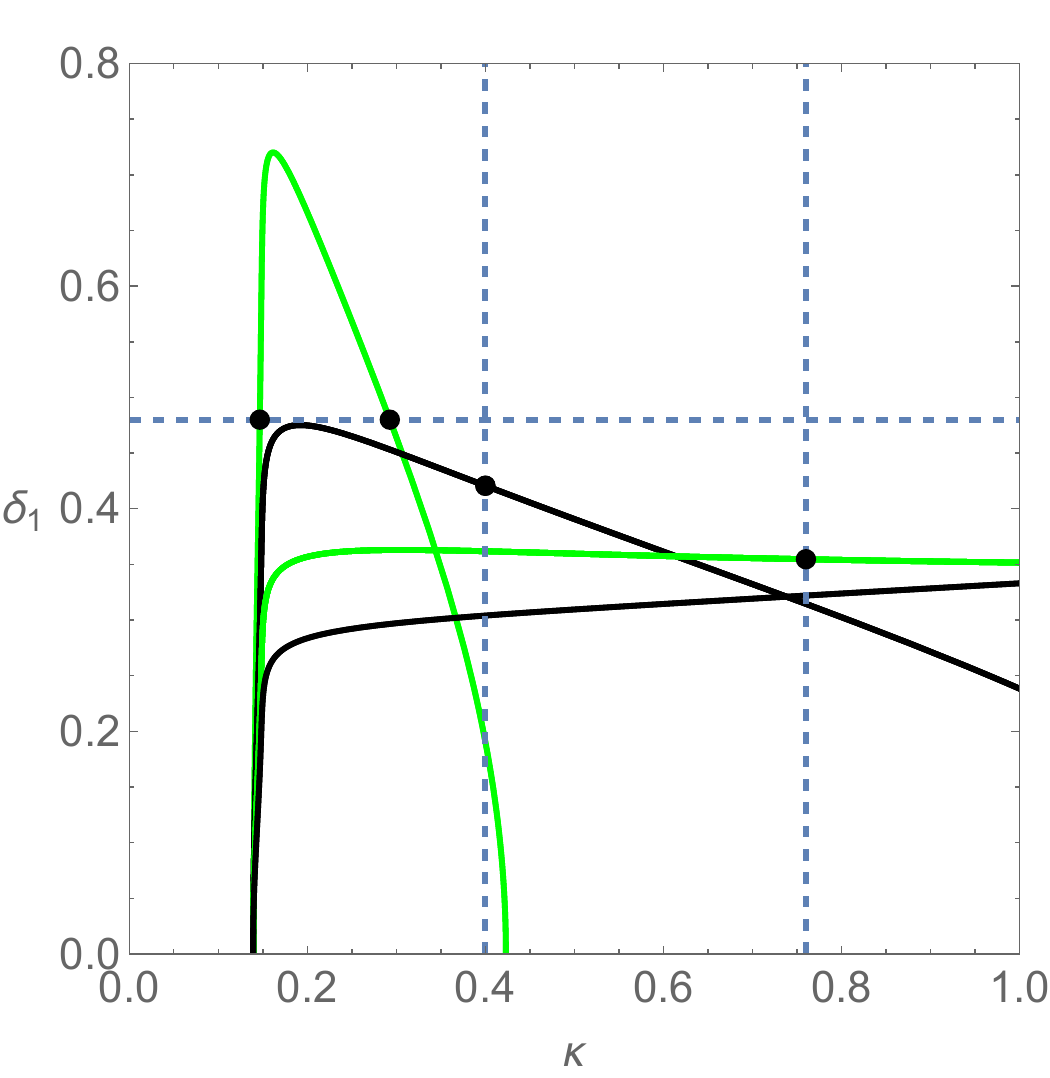}};
                    \node at (2.0, 7.3) {\small $\ell$ = 1};
                    \node at (3.8, 4.8) {\small $\ell$ = 2};
                    \node at (6.2, 4.2) {\small $\ell$ = 3};
                    \node at (4.4, 3.25) {\small $\ell$ = 4};
                \end{tikzpicture}
                \caption{}
                \label{fig:bifdiagfahad}
            \end{subfigure}
            \hfill
            \begin{subfigure}[c]{0.48\textwidth}
                \centering
                \includegraphics[width = \textwidth]{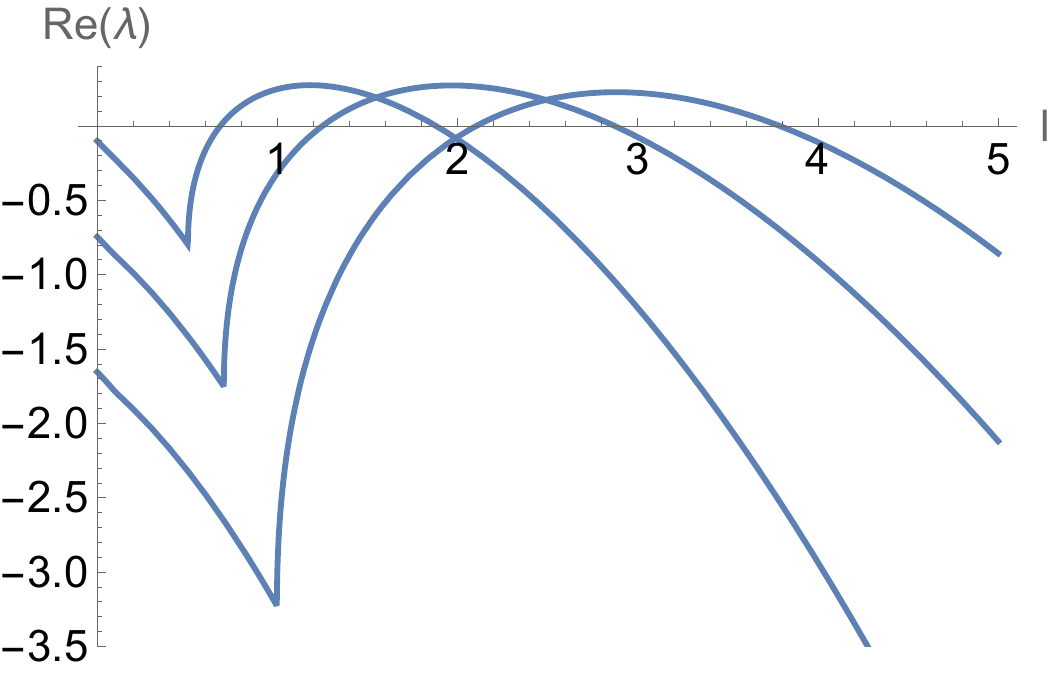}
                \caption{}
                \label{fig:disprelfahad}
            \end{subfigure}
            \caption{(a) Bifurcation curves for the cell polarity model at one steady state when the parameters in Table \ref{tab:fixedparFahad} are fixed. Green (resp. black) curves are pitchfork (resp. transcritical) bifurcations in the corresponding amplitude equations. (b) Dispersion relations for $\left(\kappa, \delta_1\right) \in \{(0.25, 0.48), (0.4, 0.37), (0.76, 0.33)\}$. \rtext{Here, $\Re(\lambda)$ represents the real part of the eigenvalue of $\mathbb B_\ell$ with \rtext{ largest} real part.}}
        \end{figure}


        \paragraph{The case $\ell = 1$} When $\left(\kappa, \delta_1\right) = (0.292672, 0.48)$, there is a bifurcation for $\ell = 1$, with the amplitude equations given by \eqref{examplel1} with $a = 1.6352 > 0$ and $C_{1, 1} = - 6.30308$ for $\mu = \kappa$. On the other hand, when $\left(\kappa, \delta_1\right) = (0.146644, 0.48)$, the same happens with $a = 160.42 > 0$ and $C_{1, 1} = 112.596$ for $\mu = \kappa$. This implies that we do not expect to see a stable pattern when integrating the solution when $\delta_1 = 0.48$ and $0.146644 < \kappa < 0.48$ is sufficiently close to these bifurcation points.
        Figure \ref{fig:disprelfahad} shows the dispersion relation obtained for $\left(\kappa, \delta_1\right) = (0.25, 0.48)$, which is in the interior of the region in which the $\ell = 1$ mode is the only unstable one. Furthermore, when integrating the system from a random perturbation to the ground state, we do not see a stable steady state. Instead, the solution seems to converge to a periodic orbit. One state of the solution is shown in the first and second rows of Figure \ref{fig:Fahad_l=1,2,3_t=170} for a fixed time instant. It is clear from the temporal evolution that the simulation quickly settles to a limit cycle (see the $L^2$-norm of Figure \ref{fig:fahadl=1}, shown within the supplementary material and the associated videos there).

        \paragraph{The case $\ell = 2$} When $\left(\kappa, \delta_1\right) = (0.4, 0.420644)$, there is a bifurcation for $\ell = 2$ that gives rise to amplitude equations given by \eqref{examplel2} with $a = 0.0414126$ and $C_{1, 1} = - 5.81929$ for $\mu = \delta_1$. Here, we are indeed close to a codimension-two bifurcation point. In fact, there is a codimension-two point for $\left(\kappa, \delta_1\right) = (0.362182, 0.432158)$, at which $b = 1.15796 > 0$. Therefore, we expect to see a stable moderate-amplitude steady state close to the bifurcation. The dispersion relation obtained for $\left(\kappa, \delta_1\right) = (0.4, 0.37)$ is shown in Figure \ref{fig:disprelfahad}. Furthermore, the pattern obtained after integrating the system at these parameter values from a random perturbation to the steady state $\mbf P$ is shown in the third and fourth rows of Figure \ref{fig:Fahad_l=1,2,3_t=170}.

        \paragraph{The case $\ell = 3$} When $\left(\kappa, \delta_1\right) = (0.76, 0.354603)$, there is a bifurcation associated with $\ell = 3$ that gives rise to amplitude equations that look like \eqref{examplel3} with $a = - 0.0252879$, $b = - 1.03659$ and $C_{1, 1} = - 9.44413$ for $\mu = \delta_1$. Therefore, we are in Region 7 on the diagram shown in Figure \ref{fig:bifdiagl3}, which makes us expect pattern $(d)$ in Figure \ref{fig:Callahan_spherical_harmonics} to become stable right after the bifurcation. The dispersion relation obtained for $\left(\kappa, \delta_1\right) = (0.76, 0.33)$ is shown in Figure \ref{fig:disprelfahad}. The pattern obtained after integrating the system at these parameter values from a random small perturbation to the steady state $\mbf P$ is shown in the last two rows of Figure \ref{fig:Fahad_l=1,2,3_t=170}.
        
        \begin{figure}
            \centering
            \begin{subfigure}[b]{\textwidth}
                \centering
                \includegraphics[width=0.8\textwidth]{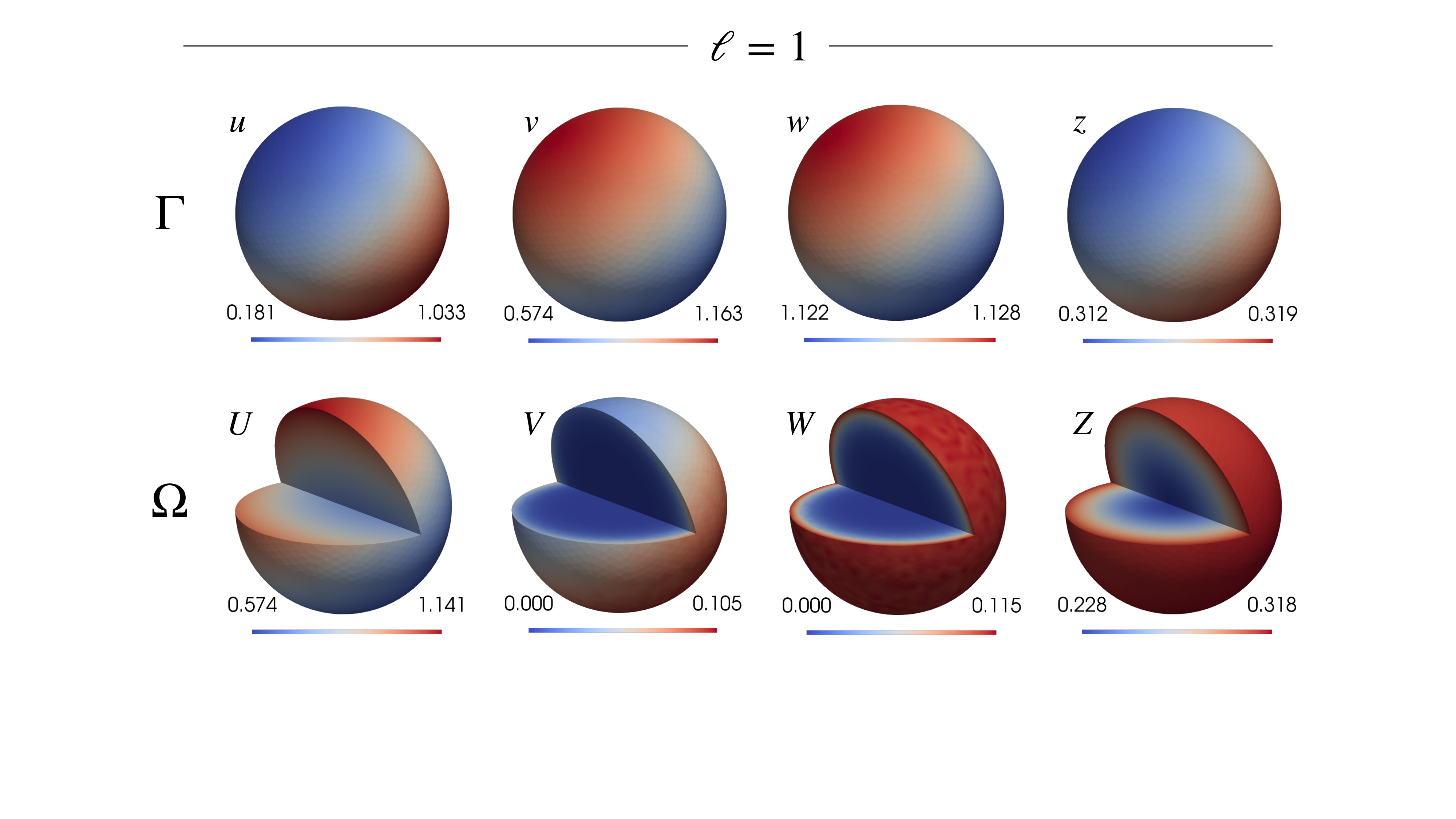}
            \end{subfigure}
            \hfill
            \begin{subfigure}[b]{\textwidth}
                \centering
                \includegraphics[width=0.8\textwidth]{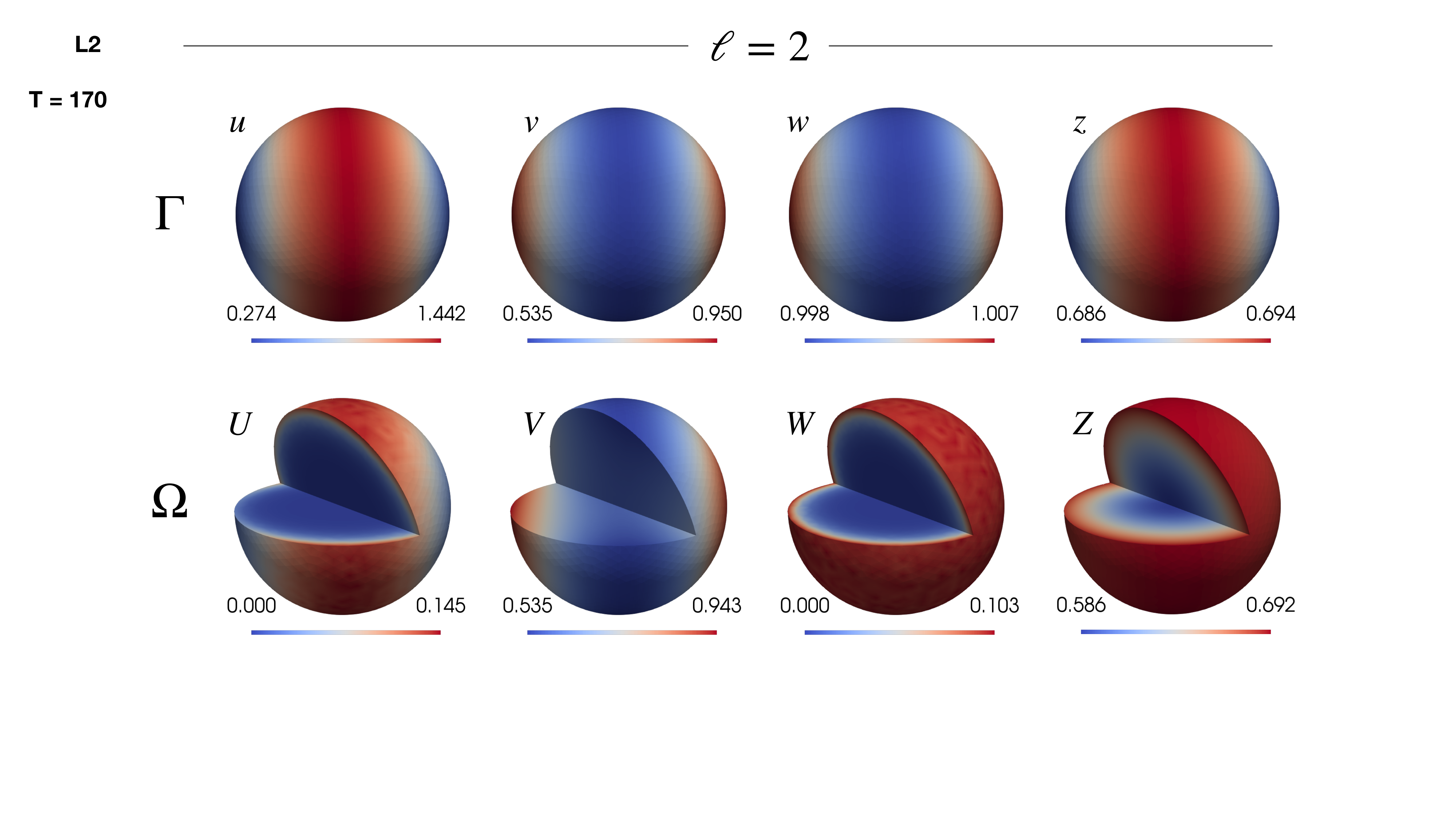}
            \end{subfigure}
            \begin{subfigure}[b]{\textwidth}
                \centering
                \includegraphics[width=0.8\textwidth]{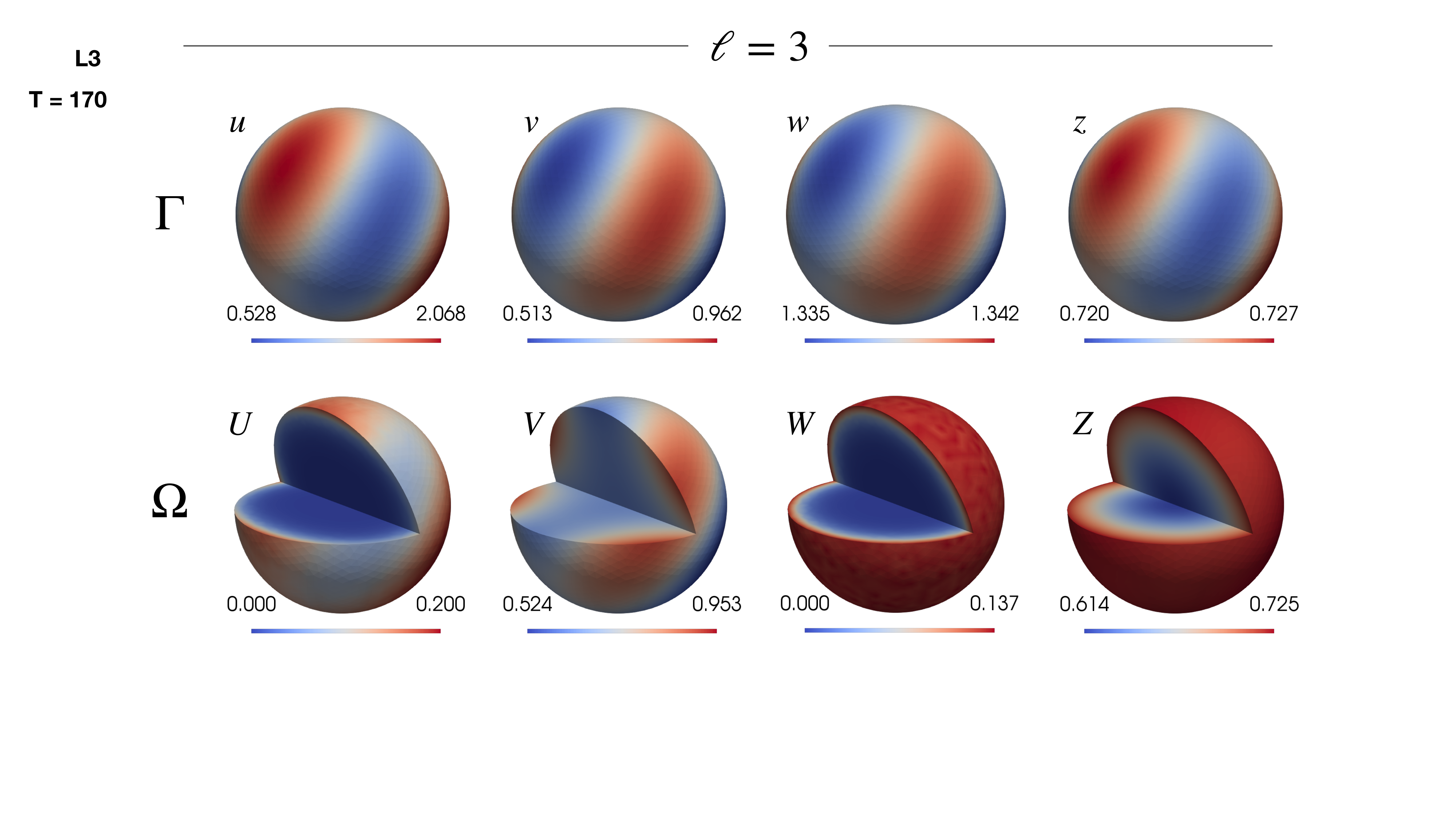}
            \end{subfigure}
            \caption{The numerical solutions of the cell polarity system at time $t = 170$, for three different parameter sets, corresponding to the choices $\ell = 1, 2, 3$. For each case, the surface solutions $u, v, w, z$ are plotted in the first row, while the bulk solutions $U, V, W, Z$ are in the second one. Note that the solution for $\ell = 1$ is not at steady state, whereas the other two cases are (see text for details).}
            \label{fig:Fahad_l=1,2,3_t=170}
        \end{figure}

\section{Conclusion}\label{sec:conclusions}
    In this paper, we have provided a method for computing the linear and nonlinear parameters necessary for analysing (Turing) transcritical and pitchfork bifurcations for a broad class of bulk-surface reaction-diffusion systems posed on a sphere. In particular, we consider the physically realistic case where the nonlinear interaction kinetics occurs on the surface, with linear diffusion and decay in the bulk, coupled by a boundary condition that has a natural interpretation in terms of concentration-driven flux. 

    The case of the sphere with linear kinetics in the bulk was tractable because of the compatibility of the eigenfunctions of the Laplace and Laplace-Beltrami operators in the bulk and on the surface. This enables nonlinear terms arising from products to be expressed in terms of the same eigenbases, leading to closed-form expressions for the required amplitude-equation coefficients. \rtext{This, in turn, enables us to find closed-form solvability conditions in Sec.~4 to calculate the normal-form coefficients. The method then proceeds as it would for a problem without the complexity of having separate bulk and surface PDEs, although some of the expressions become rather cumbersome. To aid the reader, the first author \cite{amplitude-eq} has provided both symbolic and numerical code for evaluating the necessary coefficients for any example system of the form \eqref{generalsystem}.}

    \rtext{We note that other assumptions on the form of coupling between bulk and surface concentrations are possible; for example, in principle, we could include nonlinearty in the coupling boundary conditions. Other possibilities might be to allow the matrix $K$ in \eqref{generalsystem} to be singular. Such a decoupling would effectively allow a different number of species to evolve in the bulk and on the surface. Another possible simplification could be to take the limit that some of the bulk diffusion constants tend to infinity (as in \cite{ratz2014symmetry}). In principle, nonlinearity could be included in both the bulk PDEs as well, although care would need to be taken in the form of coupling boundary conditions. All such considerations are left for future work}.

    Compared \rtext{to} bifurcations in planar geometries, the amplitude equations are a little harder to analyze in the case studied because of the $O(3)$-symmetry of the sphere \cite{golubitsky}. Without making an {\it a priori} assumption, we find that the coefficients follow the normal forms predicted by the theory of such symmetry-breaking \cite{callahan,chossat}. Our main contribution then is to show how to calculate such normal forms for our class of bulk-surface systems. Moreover, we have provided software implementation of this general method \cite{amplitude-eq}. We have also backed up our theory with computational results using a state-of-the-art finite element method. 

    Note, however, that we have not fully explored the stability of steady solutions that can exist, especially for higher than $\ell = 1, 2$, as the number of components of the amplitude equations becomes prohibitive. Nevertheless, we have been able to calculate whether the primary bifurcation is super\rtext{-} or sub\rtext{-}critical. There has been recent interest in sub-critical bifurcations and how they can lead to localised structures (see e.g.~\rtext{\cite{Fahad,edgardodegenerate,chapman,burke2007,Villar-Sepulveda-beyond}}) and references therein. The analogy of such states in bulk-surface systems remains unexplored. 


    Perhaps some of the spatio-temporal behaviour, including the temporally periodic orbit we observed for $\ell = 1$ in the cell polarity model, may be explainable by studying the complete dynamics of the relevant normal form. Also, we established that the normal forms arising at each kind of bifurcation have several symmetries, with consequent invariant subspaces. \rtext{A more complete investigation of the dynamics of these normal forms is left for future work.}
    
    \rtext{For $\ell > 1$, the full dynamics of each normal form is likely to be yet more complex, not least due to the many additional invariant subspaces. In such situations, one might expect to see parameters where there is repetitive metastable dynamics due to robust heteroclinic cycles, see e.g.~\cite{Krupa,LloydChampWilson}. In the present context of evolution equations with $O(3)$-symmetry, Chossat and co-workers \cite{Chossat1,Chossat2,Chossat3} have established conditions under which such robust cycles exist close to parameter values where there is simultaneous instability for two different values of $\ell$. Investigation of the dynamics close to such codimension-two bifurcation points for the systems considered here is left for future work.}

    Finally, we remark on how the method developed in this paper could be adapted to provide amplitude equations for bulk-surface reaction-diffusion systems in other geometries. \rtext{The presented} approach is general, and extensions to other geometries may well be possible, provided a suitable alternative to Lemma \ref{lemma:linearexpansion} can be established for that geometry. \rtext{Other possibilities include abandoning the method of \cite{tirapegui} for normal form computation and resorting to fixed-point methods. Alternatively, as in \cite{ratz2014symmetry}, one could explore certain limits in which the bulk dynamics become trivial, leading to nonlocal surface-only equations, which can be posed on more complex geometries (see \cite{bobrowski2025existence} for example). These considerations are left for future work.}

\section*{Acknowledgments}
    The authors acknowledge helpful conversations with Alastair Rucklidge (Leeds, UK),
    Michael Ward (UBC, Canada) and Fr\'{e}d\'{e}ric Paquin-Lefebvre
    (Ecole Normal Sup\'{e}rieure, Paris). The work of EV-S was supported by ANID, Beca Chile Doctorado en el extranjero, number 72210071. DC was supported by the ``Fundação para a Ciência e a Tecnologia (FCT)'' under the projects with reference UIDB/00208/2020, \rtext{UIDB/00013/2020} and UIDP/00013/2020. AM was supported by a Canada Research Chair  (CRC-2022-00147), the Natural Sciences and Engineering Research Council of Canada (NSERC), Discovery Grants Program (RGPIN-2023-05231), the British Columbia Knowledge Development Fund (BCKDF), Canada Foundation for Innovation – John R. Evans Leaders Fund – Partnerships (CFI-JELF), the UKRI Engineering and  Physical Sciences Research Council (EPSRC: EP/J016780/1) and the  British Columbia Foundation for Non-Animal Research.

\bibliographystyle{plain}
\bibliography{refs}

\newpage

\appendix

\section*{\Large Supplementary materials}

\section{Proof of Lemma \ref{Lemma1}} \label{appendix_proof_of_Lemma}
    For \eqref{firstinvman}, note that
    \begin{align*}
        \frac{\dd}{\dd t}\left(\abs{A_2}^2 - \frac{3}{2} \, A_0^2\right) &= \dot A_{-2} \, A_2 + A_{-2} \, \dot A_2 - 3 \, A_0 \, \dot A_0
        \\
        &= A_{-2} \left(\varepsilon + 4 \, a \, A_0\right) \, A_2
        \\
        & + A_{ -2} \, A_2 \left(\varepsilon + 4 \, a \, A_0\right)
        \\
        & - 3 \, A_0 \, \left(A_0 \left(\varepsilon - 2 \, a \, A_0\right) + 4 \, a \, A_{-2} \, A_2\right)
        \\
        &= 2 \, \abs{A_2}^2 \left(\varepsilon + 4 \, a \, A_0\right)
        \\
        & - 3 \, \left(A_0^2 \left(\varepsilon - 2 \, a \, A_0\right) + 4 \, a \, A_0 \, \abs{A_2}^2\right)
        \\
        &= 0.
    \end{align*}
    On the other hand, to analyze \eqref{secondinvman}, note that $\dot A_0 = 0$ trivially, by the conditions that set the manifold. Furthermore,
    \begin{align*}
        \frac{\dd}{\dd t} \left(2 \, \abs{A_2}^2 - \abs{A_1}^2\right) &= 2 \left(\dot A_{-2} \, A_2 + A_{-2} \, \dot A_2\right) + \dot A_{-1} \, A_1 + A_{-1} \, \dot A_1
        \\
        &= 2 \left(\left(\varepsilon \, A_{-2} - \sqrt{6} \, a \, A_{-1}^2\right) \, A_2\right.
        \\
        & \left. + A_{-2} \, \left(\varepsilon \, A_2 - \sqrt{6} \, a \, A_1^2\right)\right)
        \\
        & + \left(\varepsilon \, A_{-1} + 2 \, \sqrt{6} \, a \, A_1 \, A_{-2}\right) \, A_1
        \\
        & + A_{-1} \, \left(\varepsilon \, A_1 + 2 \, \sqrt{6} \, a \, A_{-1} \, A_2\right)
        \\
        &= 2 \left(2 \, \varepsilon \, \abs{A_2}^2 - \sqrt{6} \, a \, A_{-1}^2 \, A_2 - \sqrt{6} \, a \, A_{-2} \, A_1^2\right)
        \\
        & - 2 \, \varepsilon \, \abs{A_1}^2 + 2 \, \sqrt{6} \, a \, A_1^2 \, A_{-2} + 2 \, \sqrt{6} \, a \, A_{-1}^2 \, A_2
        \\
        &= 0.
    \end{align*}
    Finally, to analyze \eqref{steadystateman}, note that if we subtract the first two equations, we get
    \begin{align*}
        \dot A_0 = A_0 \left(\varepsilon - 2 \, a \, A_0\right) - 2 \, a \, \abs{A_1}^2 + 4 \, a \, \abs{A_2}^2 = 0.
    \end{align*}
    Therefore, due to the last equation in \eqref{steadystateman}, we have that $A_0$ and $A_2$ are constants fulfilling
    \begin{align*}
        4 \, A_0 \left(- \, A_0 + \frac{\varepsilon}{a}\right) + 24 \, \abs{A_2}^2 = \frac{\varepsilon^2}{a^2}.
    \end{align*}
    Now, note that
    \begin{align*}
        \dot A_1 = A_1 \left(\varepsilon - 2 \, a \, A_0\right) + 2 \, \sqrt{6} \, a \, A_{-1} \, A_2 &= A_1 \left(\varepsilon - 2 \, a \, A_0\right) + 2 \, \sqrt{6} \, a \, A_{-1} \, \frac{\sqrt{6} \, a \, A_1^2}{\varepsilon + 4 \, a \, A_0}
        \\
        &= A_1 \, \frac{\left(\varepsilon - 2 \, a \, A_0\right) \left(\varepsilon + 4 \, a \, A_0\right) + 12 \, a^2 \, A_{-1} \, A_1}{\varepsilon + 4 \, a \, A_0}
        \\
        &= 0
    \end{align*}
    Therefore, the right-hand side of all the equations in \eqref{examplel2} \rtext{is} equal to zero. That implies that all the variables are constants in the manifold determined by \eqref{steadystateman}, which concludes the proof.

    \newpage

\section{The numerical scheme}\label{sec:bsfem}
    As a complement to our analytical results, we present and discuss the numerical solutions of some examples of bulk-surface systems of type \eqref{generalsystem}. We solve such systems by employing the bulk-surface element method, which has been developed and used in the literature for studying different systems, often arising from modelling real-world phenomena, such as cell polarisation and migration (see e.g. \cite{madzvamuse2015, cusseddu2019, elliott2013finite, macdonald2016computational,  madzvamuse2016bulk}). Before outlining the procedure for the specific system \eqref{generalsystem}, we rewrite the system in a more explicit form. Specifically, we write 
    \begin{align}
        \frac{\partial U_p}{\partial t}  &= - \sigma_p U_p + D_p \nabla_\Omega^2 U_p, \quad \text{ in } \Omega, \label{eq:BS_system_explicit_bulk}
        \\
        D_p\frac{\partial U_p}{\partial \nu} & = K_p (u_p - U_p), \quad \text{ on } \Gamma,
        \label{eq:BS_system_explicit_bc}
        \\
        \frac{\partial u_p}{\partial t}  &= g_p(u_1, \ldots, u_n) 
        - K_p u_p + K_p U_p  
        + d_p \nabla_\Gamma^2 u_p, \quad \text{ on } \Gamma,
        \label{eq:BS_system_explicit_surface}
    \end{align}
    for $p = 1, \ldots, n$. In what follows we assume $g_p \in L^2(\mathbb R^n)$ for all $p = 1, \ldots, n$. Let $\Phi_p$ and $\phi_p$ be functions of the Sobolev spaces $H^1(\Omega)$ and $H^1(\Gamma)$, respectively. The following weak formulation of the above problem is derived by testing \eqref{eq:BS_system_explicit_bulk} with $\Phi_p$ and applying the boundary condition \eqref{eq:BS_system_explicit_bc}, and by testing  \eqref{eq:BS_system_explicit_surface} with $\phi_p$. It must be noted that the surface equation \eqref{eq:BS_system_explicit_surface} has no boundary conditions since $\Gamma$ is a closed surface. \rtext{ While our work is carried out on a ball of radius $R$, here we describe the numerical method on a general bounded domain $\Omega \subset \mathbb R^3$, with no explicit dependence on the radius $R$. We remark that, when $\Omega = \left\{\mbf x \in \mathbb R^3: \lVert \mbf x \rVert \leq R \right\}$, the system can be solved on the reference domain $B_1(\mbf 0)$, after rescaling the bulk and surface diffusion coefficients $D_p$, $d_p$.}

    Hence, the bulk-surface finite element method seeks to \rtext{f}ind $(U_1, \ldots, U_n)$ $\in L^2([0, T]; H^1(\Omega)) \cap L^\infty([0, T]\times\Omega)$, with  
    $$
        \left( \frac{\partial U_1}{\partial t}, \ldots, \frac{\partial U_n}{\partial t} \right) \in L^2([0,T]; H^{- 1}(\Omega))
    $$
    and $\left(u_1, \ldots, u_n\right) \in L^2([0, T]; H^1(\Gamma))\cap L^\infty([0, T]\times\Gamma)$  with 
    $$
        \left ( \frac{\partial u_1}{\partial t}, \ldots, \frac{\partial u_n}{\partial t} \right ) \in L^2([0,T]; H^{-1}(\Gamma)),
    $$
    such that, $\forall \, \Phi_p\in H^1(\Omega)$ and $\forall \, \phi_p \in H^1(\Gamma)$, these functions satisfy 
    \begin{align}
        \int_\Omega \frac{\partial U_p}{\partial t} \, \Phi_p \, \dd \mbf{x}  &= - \int_\Omega \sigma_p \, U_p \, \Phi_p \, \dd \mbf{x} - D_p  \int_\Omega  \nabla_\Omega \, U_p \cdot \nabla_\Omega \, \Phi_p \, \dd \mbf{x}  \label{eq:BS_system_explicit_weak_bulk}
        \\
        & \hspace{2cm} + K_p \int_\Gamma   (u_p - U_p) \, \Phi_p \, \dd \mbf{s}, \notag 
        \\
        \int_\Gamma \frac{\partial u_p}{\partial t} \, \phi_p \, \dd \mbf{s} &= \int_\Gamma g_p(u_1, \ldots, u_n) \, \phi_p \, \dd \mbf{s} - K_p \int_\Gamma (u_p - U_p) \, \phi_p \, \dd \mbf{s} \label{eq:BS_system_explicit_weak_surface}
        \\
        & \hspace{2cm}- d_p \int_\Gamma \nabla_\Gamma \, u_p \cdot \nabla_\Gamma \, \phi_p \dd \mbf{s}, \notag 
    \end{align}
    with $U_p(0, \mbf x) = U_{p, 0}(\mbf x)$ and $u_p(0, \mbf x) = u_{p,0}(\mbf x)$ as prescribed by the initial conditions to \eqref{eq:BS_system_explicit_bulk}-\eqref{eq:BS_system_explicit_surface}, for $p = 1, \ldots, n$. The spaces $L^2([0, T]; H^1(\Omega))$ and  $L^2([0, T]; H^1(\Gamma))$ are known as Bochner spaces and $H^{- 1}(\Omega)$ and $H^{- 1}(\Gamma)$ are the dual space of, respectively, $H^1(\Omega)$ and $H^1(\Gamma)$ \cite{evans2010partial, salsa2016partial}.
    
    \subsection{Spatial discretisation}
        The bulk-surface finite element method \\ (BS-FEM) is based on a restriction of the above weak formulation to discrete function spaces. First, let $\Omega_h$ be a polyhedral approximation of the domain $\Omega$ and $\Gamma_h = \partial\Omega_h$ its boundary.  Let $\mathcal T_h = \left\{T_1, \ldots, T_{n_{x(h)}}\right\}$ be a mesh on $\Omega_h$, i.e. a set of polyhedrons whose intersection is either empty or a polygonal face and such that $\bigcup_{i = 1}^{n_x(h)} T_i = \Omega_h$ (for details on proper choices of such elements see, e.g., \cite{quarteroni2008numerical}). By considering the set $\mathcal S_h = \left\{T_i\cap\Gamma_h, \text{ for } i = 1, \ldots, n_{x(h)}\right\}$ which is constituted by polygons, i.e. by the faces of the elements of $\mathcal T_h$ relying on the boundary $\Gamma_h$, we naturally define a mesh on $\Gamma_h$ entirely induced by $\mathcal{T}_h$. Therefore, by choosing tetrahedron elements $T_i$, the mesh $\Gamma_h$ will be constituted by triangles. Let $N_\Omega$ and $N_\Gamma$ denote the number of total vertices in the bulk and surface meshes, respectively. These meshes help us in defining appropriate basis functions for deriving the BS-FEM. Let us define the following function spaces
        \begin{align*}
            V_h(\Omega_h) &:= \Big\{ \Phi_h :\Omega_h \to \mathbb R \text{ such that } \Phi_h \in C^0(\Omega_h) \text{ and } \Phi_h|_T \in \mathbb P_1(T), \forall \, T \in \mathcal T_h
            \Big\},
            \\
            W_h(\Gamma_h) &:= \Big\{\phi_h :\Gamma_h \to \mathbb R \text{ such that } \phi_h \in C^0(\Gamma_h) \text{ and }  \phi_h|_S \in \mathbb P_1(S), \forall \, S \in \mathcal S_h\Big\},
        \end{align*}
        where $\mathbb P_1(D)$ indicates the space of all linear polynomials over a domain $D$. In particular, $V_h\left(\Omega_h\right)\subset H^1\left(\Omega_h\right)$ and $V_h\left(\Gamma_h\right)\subset H^1\left(\Gamma_h\right)$. Basis functions for these two spaces are, respectively, $\left(\Psi_i(\mbf x)\right)_{i = 1, \ldots, N_\Omega}\in [V_h(\Omega_h)]^{N_\Omega}$ and $\left(\psi_i(\mbf x)\right)_{i = 1, \ldots, N_\Gamma}\in [W_h(\Gamma_h)]^{N_\Gamma}$ with the property $\Psi_i(\mbf x_j) = \delta_{i, j}$  
        for all vertices $\mbf x_j$ of the mesh $\mathcal T_h(\Omega_h)$ ($j = 1, \ldots, N_{\Omega}$) 
        and $\psi_i(\mbf x_k) = \delta_{i, k}$ for all vertices $\mbf x_k$ of $\mathcal S_h(\Gamma_h)$ ($k = 1, \ldots, N_{\Gamma}$). Here, $\delta_{i, j}$ is the Kronecker delta.

        Thanks to the basis functions, our problem goes from looking for functions over the time-space domain to look for solely time-dependent functions $A^p_i(t)$ and $a^p_i(t)$ such that $U_{h, p}(t, \mbf x) = \sum_i A^p_i(t) \Psi_i(\mbf x)$ and $u_{h, p}(t, \mbf x) = \sum_i a^p_i(t) \psi_i(\mbf x)$ for $p = 1, \ldots, n$. Therefore, the semi-discrete weak formulation reads as follows.
        
        Find $A^p_i \in \mathbb R$ for $i = 1, \ldots, N_\Omega$ and $a^p_k \in \mathbb R$ for $i = 1, \ldots, N_\Gamma$ and $p = 1, \ldots, n$ such that
        \begin{align*}
            \sum_{i = 1}^{N_\Omega} \frac{d {A^p_i}}{dt} \int_{\Omega_h} \Psi_i \, \Psi_j \, \dd \mbf x &= - \sigma_p \sum_{i = 1}^{N_\Omega} {A^p_i}(t) \int_{\Omega_h} \Psi_i \, \Psi_j \, \dd \mbf x
            \\
            & - D_p \sum_{i = 1}^{N_\Omega} {A^p_i}(t) \int_{\Omega_h} \nabla \Psi_i \cdot \nabla \Psi_j \, \dd \mbf x \nonumber
            \\
            & + K_p \left(\sum_{k = 1}^{N_\Gamma} {a^p_k}(t) \int_{\Gamma_h} \psi_k \, \Psi_j \, \dd \mbf s - \sum_{i = 1}^{N_\Omega} {A^p_i}(t) \int_{\Gamma_h} \Psi_i \, \Psi_j \, \dd \mbf s\right),
            \\ 
            \sum_{k = 1}^{N_\Gamma} \frac{d a^p_k}{dt} \int_{\Gamma_h} \psi_k \, \psi_h \, \dd \mbf s &= \int_{\Gamma_h} g_p\rtext{\left(\sum_{i = 1}^{N_\Gamma} a^1_i(t) \psi_i(\mbf x), \ldots, \sum_{i=1}^{N_\Gamma} a^n_i(t) \psi_i(\mbf x)\right)} \, \psi_h \, \dd \mbf s
            \\
            & - d_p \sum_{k = 1}^{N_\Gamma} {a^p_k}(t) \int_{\Gamma_h} \rtext{\nabla_\Gamma \, \psi_k \cdot \nabla_\Gamma \, \psi_h \,} \dd \mbf s \nonumber
            \\
            & - K_p \left( \sum_{k = 1}^{N_\Gamma} {a^p_k}(t) \int_{\Gamma_h} \psi_k \, \psi_h \, \dd \mbf s - \sum_{i = 1}^{N_\Omega} {A^p_i}(t) \int_{\Gamma_h} \Psi_i \, \psi_h \, \dd \mbf s\right),
        \end{align*}
        for $j = 1, \ldots, N_\Omega$ and $h=1, \ldots, N_\Gamma$. The above equations can be written, in compact matrix-vector form, as
        \begin{align*}
            \mathbb M^\Omega \frac{d \mbf A^p}{dt} &= - \sigma_p \, \mathbb M^\Omega {\mbf A^p} - D_p \, \mathbb K^\Omega  {\mbf A^p} + K_p \, (\mathbb J \, {\mbf a^p} - \mathbb H^\Omega {\mbf A^p}),
            \\
            \mathbb M^\Gamma  \frac{d \mbf a^p}{dt} &= g_p(\mbf a^1, \ldots, \mbf a^n) \, I
            - K_p \, (\mathbb M^\Gamma {\mbf a^p} 
            - \mathbb J^\intercal \mbf A^p)
            - d_p \, \mathbb K^\Gamma {\mbf a^p},
        \end{align*}
        for $p = 1, \ldots, n$, where $I$ is the identity matrix
		
	\subsection{Temporal discretisation}
		The above spatial semi-discrete weak formulation requires the solution of $2 n$ coupled systems of ordinary differential equations. We set a time step $\tau > 0$, look for the solutions at the time points $t_k = t_{k - 1} + \tau$, with $t_0 = 0$, and apply a finite difference scheme. Let ${\mbf a^{p, k}}$ be an approximation of ${\mbf a^p}(t_k)$. Therefore, by replacing the time derivatives by their first order difference operator (for example), we aim at solving the following system on the surface
		\begin{align*}
            \left(\tau^{- 1} \mathbb M^\Gamma
		    + d_p \, \mathbb K^\Gamma\right)
		    {\mbf a^{p, k + 1}}
		    &= \tau^{- 1} \mathbb M^\Gamma {\mbf a^{p, k}}
		    + g_p\left(\mbf a^{1, k}, \ldots, \mbf a^{n, k}\right) \, I
            \\
            &\hspace{3cm} - K_p \left(\mathbb M^\Gamma {\mbf a^{p, k}} - \mathbb J^\intercal \mbf A^{p, k} \right),
		\end{align*}
		for $p = 1, \ldots, n$. The above temporal discretisation scheme is known as IMEX, because the diffusion terms are considered implicitly and the reaction terms explicitly \cite{ruuth1995implicit,madzvamuse2006time}. The explicit treatment of the reaction terms constitutes a way of obtaining a linear system of equations. By solving the above systems, we get the solution of the surface equations at time $t_{k + 1}$. To obtain the solution of the bulk equations at time $t_{k + 1}$, we solve the following systems
		\begin{align}
            \left(\tau^{- 1} \mathbb M^\Omega + D_p \, \mathbb K^\Omega + K_p \, \mathbb H^\Omega + \sigma_p \, \mathbb M^\Omega\right) {\mbf A^{p, k + 1}} = \tau^{- 1} \mathbb M^\Omega {\mbf A^{p, k}} + K_p \, \mathbb J \, {\mbf a^{p, k + 1}}, 
		\end{align}
        for $p = 1, \ldots, n$. By taking advantage of the linear reactions in the bulk, and having already calculated the surface solutions at $t_{k + 1}$, we can solve the bulk systems fully implicitly.
    
    \newpage

\section{Amplitude equations up to order 3 for $\ell = 4$} \label{ap:order3ell=4}
    \begin{align}
        \label{l4-appendixa0}
        \dot A_0 &= A_0 \left(\varepsilon - 18 \, a \, A_0\right) + 18 \, a \, A_{-1} \, A_1 + 22 \, a \, A_{-2} \, A_2 
        \\
        & - 42 \, a \, A_{-3} \, A_3 -28 \, a \, A_{-4} \, A_4 + A_0 \left(2 \, c \, A_{- 4} \, A_4 + (140 \, b - 2 \, c) \, A_{- 3} \, A_3 \right. \notag
        \\
        & \left. + (- 600 \, b + 2 \, c) \, A_{- 2} \, A_2 + (320 \, b - 2 \, c) \, A_{- 1} \, A_1 \right. \notag
        \\
        & \left. + (-160 \, b + c) \, A_0^2\right) - 14 \, \sqrt{10} \, b \, A_{- 4} \, A_1 \, A_3 \notag
        \\
        & - 12 \, \sqrt{70} \, b \, A_{- 4} \, A_2^2 - 14 \, \sqrt{10} \, b \, A_{- 3} \, A_{- 1} \, A_4 + 23 \, \sqrt{70} \, b \, A_{- 3} \, A_1 \, A_2 \notag
        \\
        & - 12 \, \sqrt{70} \, b \, A_{- 2}^2 \, A_4 + 23 \, \sqrt{70} \, b \, A_{- 2} \, A_{- 1} \, A_3 + 21 \, \sqrt{10} \, b \, A_{- 2} \, A_1^2 \notag
        \\
        & + 21 \, \sqrt{10} \, b \, A_{- 1}^2 \, A_2 + \mathcal O\left(\norm{\rtext{\mbf A_4}}^4\right), \notag
    \end{align}
    \begin{align}
        \label{l4-appendixa1}
        \dot A_1 &= A_1 \left(\varepsilon - 18 \, a \, A_0\right) + 12 \, \sqrt{10} \, a \, A_{-1} \, A_2 - 2 \, \sqrt{70} \, a \, A_{-2} \, A_3
        \\
        & - 14 \, \sqrt{10} \, a \, A_{- 3} \, A_4  + A_1 \, \left((- 28 \, b + 2 \, c) \, A_{- 4} \, A_4 + (273 \, b - 2 \, c) \, A_{- 3} \, A_3 \right. \notag
        \\
        & \left. + (- 313 \, b + 2 \, c) \, A_{- 2} \, A_2 + (383 \, b - 2 \, c) \, A_{- 1} \, A_1 + (- 160 \, b + c) \, A_0^2\right) \notag
        \\
        & - 210 \, b \, A_{- 4} \, A_2 \, A_3 + 14 \, \sqrt{10} \, b \, A_{- 3} \, A_0 \, A_4 + 45 \, \sqrt{7} \, b \, A_{- 3} \, A_2^2 \notag
        \\
        & - 40 \, \sqrt{7} \, b \, A_{- 2} \, A_{- 1} \, A_4  - 23 \, \sqrt{70} \, b \, A_{- 2} \, A_0 \, A_3 + 90 \, \sqrt{7} \, b \, A_{- 1}^2 \, A_3 \notag
        \\
        & - 42 \, \sqrt{10} \, b \, A_{- 1} \, A_0 \, A_2 + \mathcal O\left(\norm{\rtext{\mbf A_4}}^4\right), \notag
    \end{align}
    \begin{align}
        \dot A_2 &= A_2 \left(\varepsilon + 22 \, a \, A_0\right) - 6 \, \sqrt{10} \, a \, A_1^2 + 2 \, \sqrt{70} \, a \, A_{-1} \, A_3 - 6 \, \sqrt{70} \, a \, A_{- 2} \, A_4 
        \label{l4-appendixa2}
        \\
        & + A_2 \, \left((- 168 \, b + 2 \, c) \, A_{- 4} \, A_4 + (343 \, b - 2 \, c) \, A_{- 3} \, A_3 \right. \notag
        \\
        & \left. + (- 208 \, b + 2 \, c) \, A_{- 2} \, A_2+ (313 \, b - 2 \, c) \, A_{- 1} \, A_1 + (- 300 \, b + c) \, A_0^2\right) \notag
        \\
        & - 70 \, \sqrt{7} \, b \, A_{- 4} \, A_3^2  + 210 \, b \, A_{- 3} \, A_1 \, A_4 - 24 \, \sqrt{70} \, b \, A_{- 2} \, A_0 \, A_4 \notag
        \\
        & - 90 \, \sqrt{7} \, b \, A_{- 2} \, A_1 \, A_3  + 20 \, \sqrt{7} \, b \, A_{- 1}^2 \, A_4 + 23 \, \sqrt{70} \, b \, A_{- 1} \, A_0 \, A_3 \notag
        \\
        & + 21 \, \sqrt{10} \, b \, A_0 \, A_1^2 + \mathcal O\left(\norm{\rtext{\mbf A_4}}^4\right), \notag  
    \end{align}
    \begin{align}
        \label{l4-appendixa3}
        \dot A_3 &= A_3 \left(\varepsilon + 42 \, a \, A_0\right) - 14 \, \sqrt{10} \, a \, A_{-1} \, A_4 - 2 \, \sqrt{70} \, a \, A_1 \, A_2
        \\
        & + A_3 \, \left((- 448 \, b + 2 \, c) \, A_{- 4} \, A_4 + (203 \, b - 2 \, c) \, A_{- 3} \, A_3 \right. \notag
        \\
        & \left. + (- 343 \, b + 2 \, c) \, A_{- 2} \, A_2 + (273 \, b - 2 \, c) \, A_{- 1} \, A_1 + (- 70 \, b + c) \, A_0^2\right) \notag
        \\
        & + 140 \, \sqrt{7} \, b \, A_{- 3} \, A_2 \, A_4  - 210 \, b \, A_{- 2} \, A_1 \, A_4 + 14 \, \sqrt{10} \, b \, A_{- 1} \, A_0 \, A_4 \notag
        \\
        & + 45 \, \sqrt{7} \, b \, A_{- 1} \, A_2^2 - 23 \, \sqrt{70} \, b \, A_0 \, A_1 \, A_2 + 30 \, \sqrt{7} \, b \, A_1^3 + \mathcal O\left(\norm{\rtext{\mbf A_4}}^4\right), \notag
    \end{align}
    \begin{align}
        \label{l4-appendixa4}
        \dot A_4 &= A_4 \left(\varepsilon - 28 \, a \, A_0\right) + 14 \, \sqrt{10} \, a \, A_1 \, A_3 - 3 \, \sqrt{70} \, a \, A_2^2
        \\
        & + A_4 \, \left((- 448 \, b + 2 \, c) \, A_{- 4} \, A_4 + (448 \, b - 2 \, c) \, A_{- 3} \, A_3 \right. \notag
        \\
        & \left.+ (- 168 \, b + 2 \, c) \, A_{- 2} \, A_2 + (28 \, b - 2 \, c) \, A_{- 1} \, A_1 + c \, A_0^2\right) - 70 \, \sqrt{7} \, b \, A_{- 2} \, A_3^2 \notag
        \\
        & + 210 \, b \, A_{- 1} \, A_2 \, A_3 - 14 \, \sqrt{10} \, b \, A_0 \, A_1 \, A_3 - 12 \, \sqrt{70} \, b \, A_0 \, A_2^2 \notag
        \\
        & + 20 \, \sqrt{7} \, b \, A_1^2 \, A_2 + \mathcal O\left(\norm{\rtext{\mbf A_4}}^4\right). \notag
    \end{align}
        
    \newpage

    \section{Plots of the $L^2$-norm of transient solutions}
        
    \begin{figure}[hb]
        \centering
        \includegraphics[width = 0.49 \textwidth]{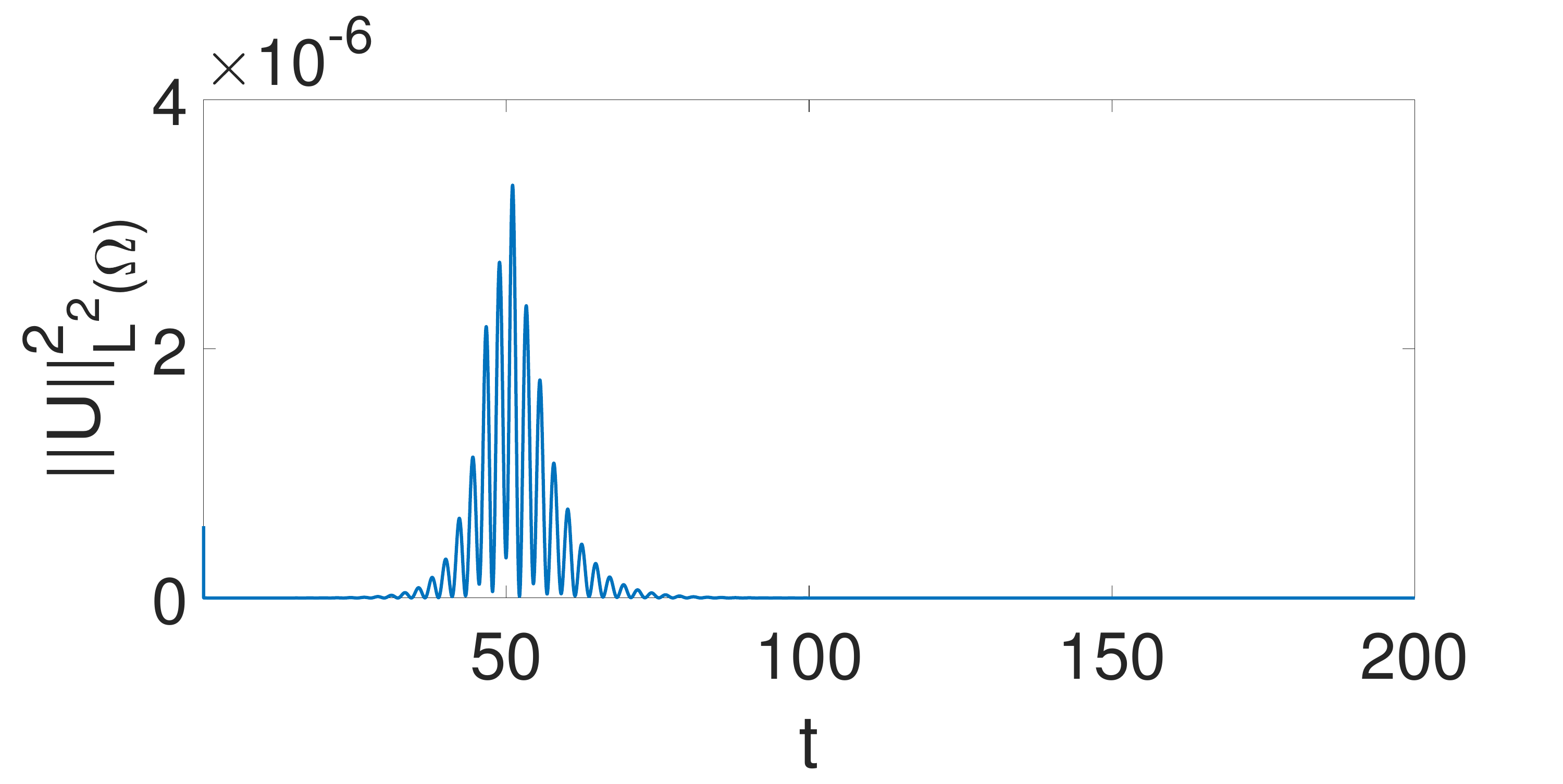} \hfill \includegraphics[width = 0.49 \textwidth]{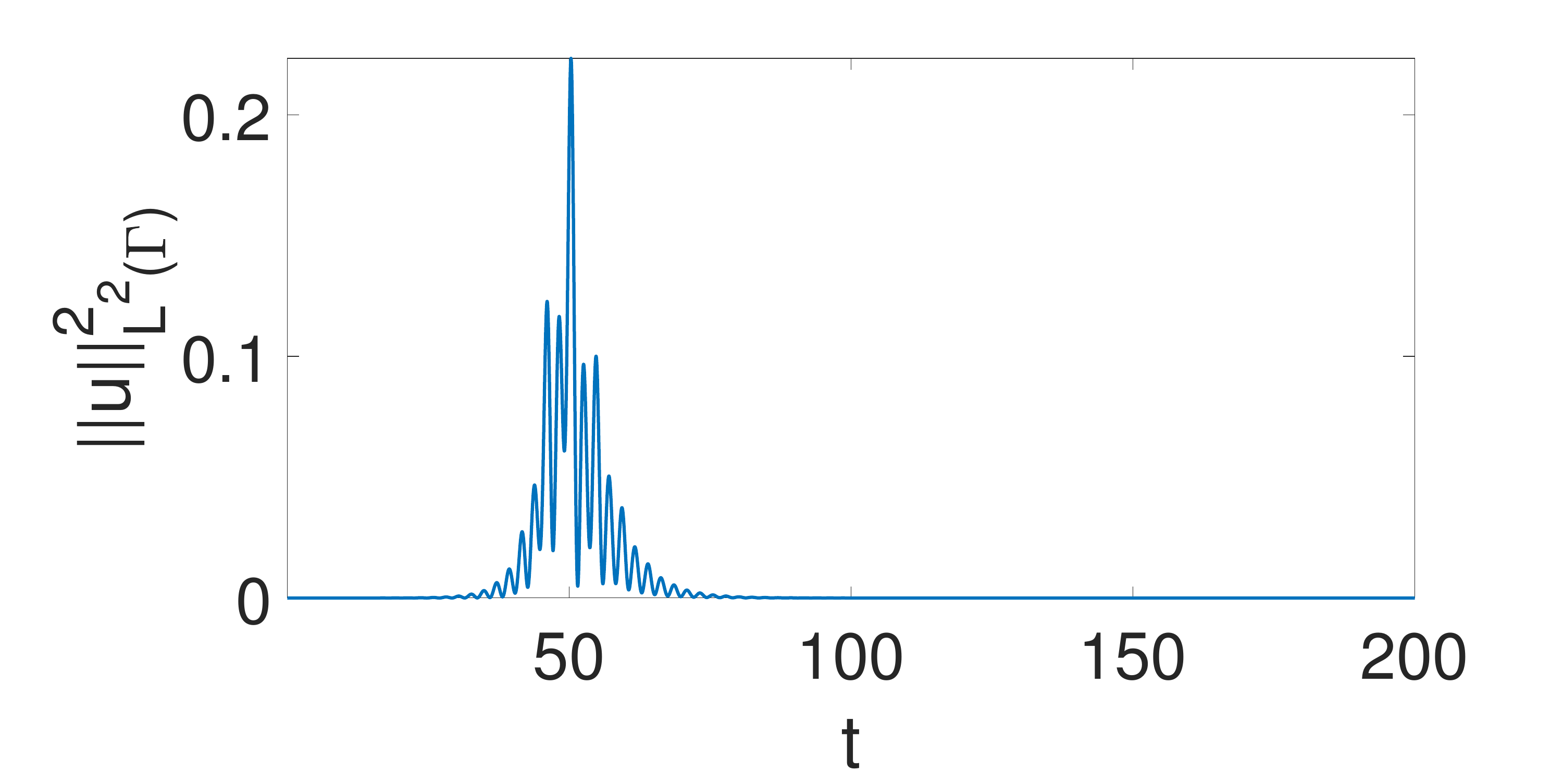}
        \\
        \includegraphics[width = 0.49 \textwidth]{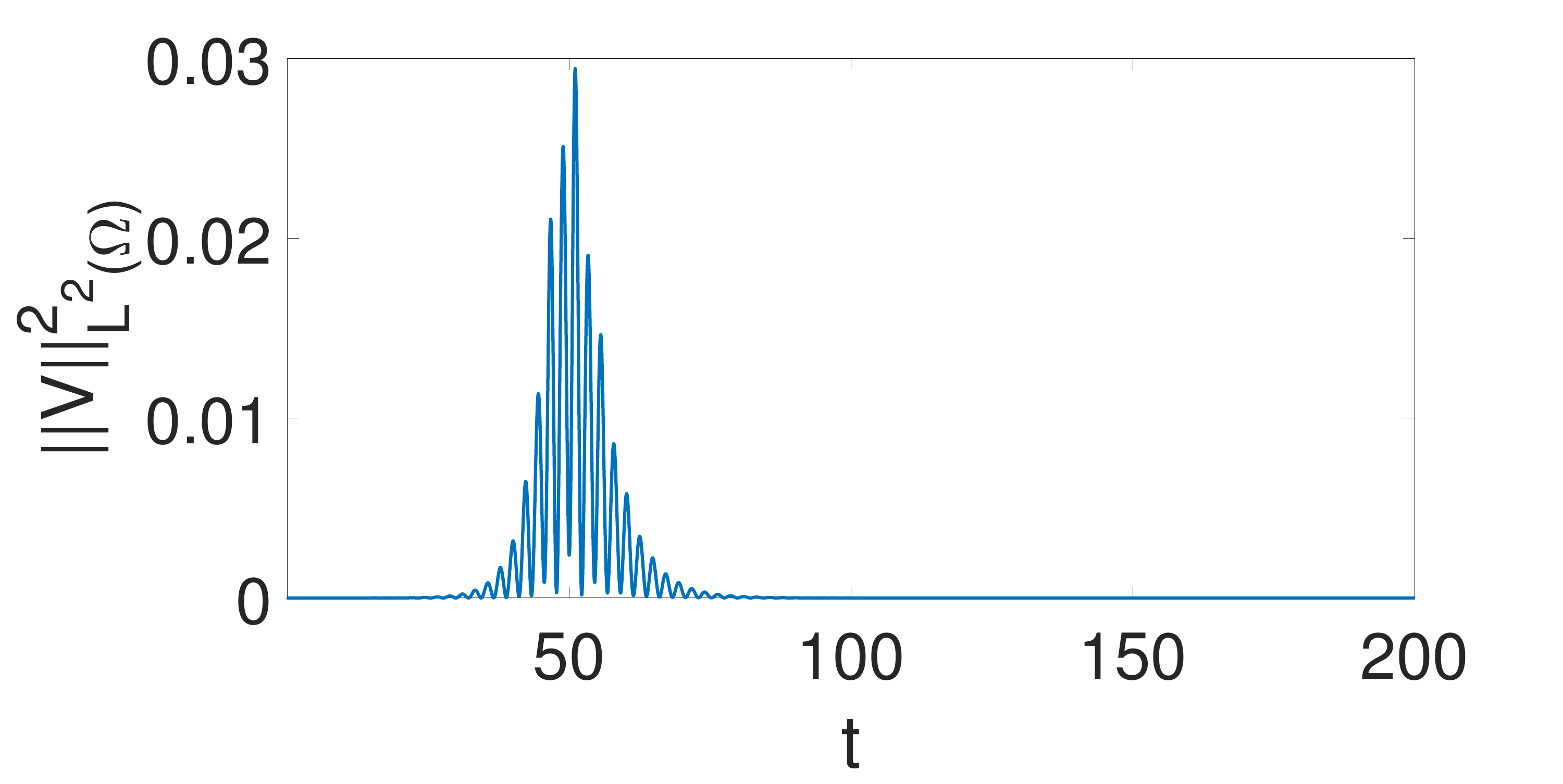} \hfill \includegraphics[width = 0.49 \textwidth]{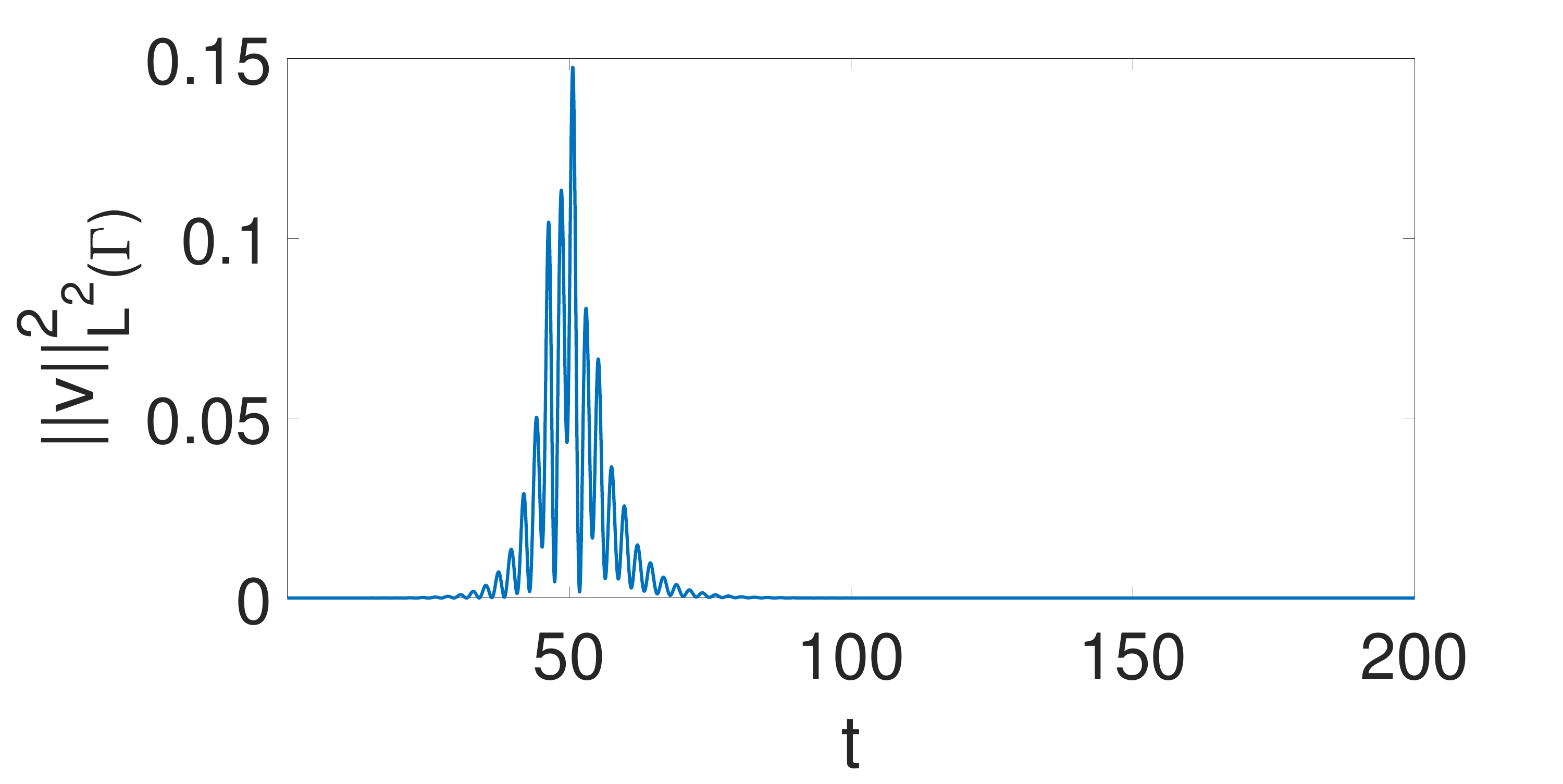}
        \caption{Brusselator for $\ell = 1$.}
    \end{figure}
    \begin{figure}[!b]
        \centering
        \includegraphics[width = 0.49 \textwidth]{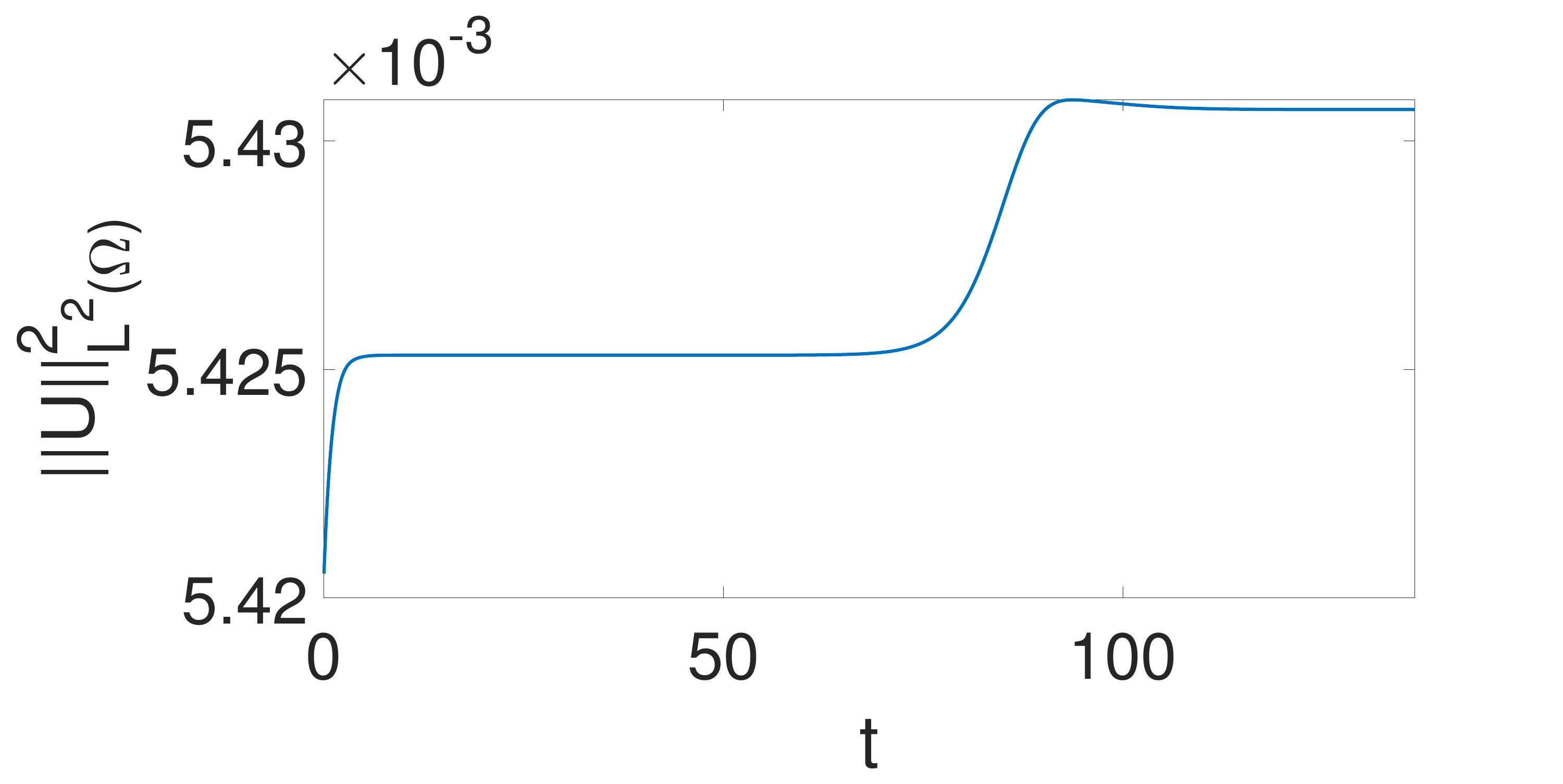} \hfill \includegraphics[width = 0.49 \textwidth]{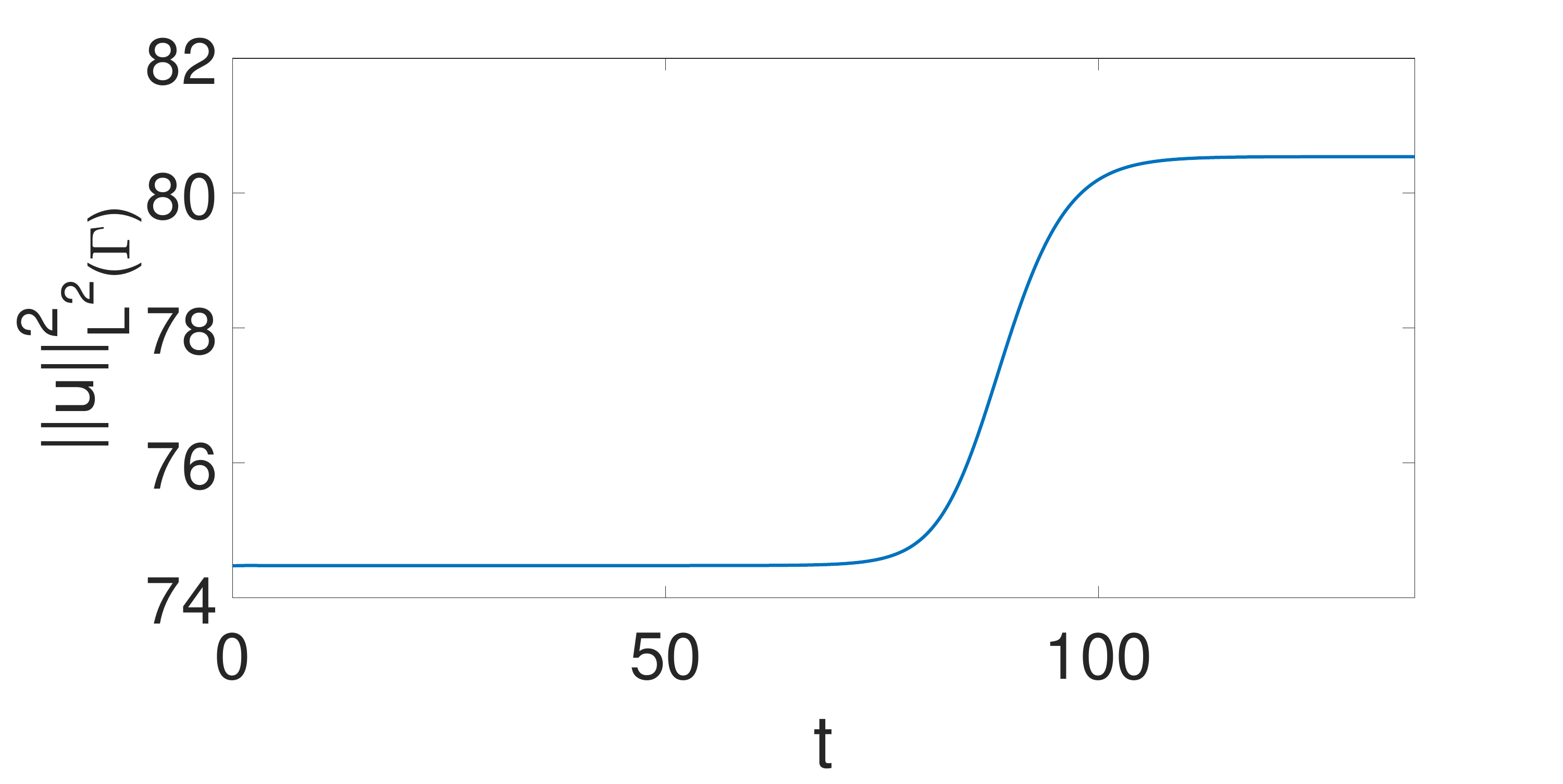}
        \\
        \includegraphics[width = 0.49 \textwidth]{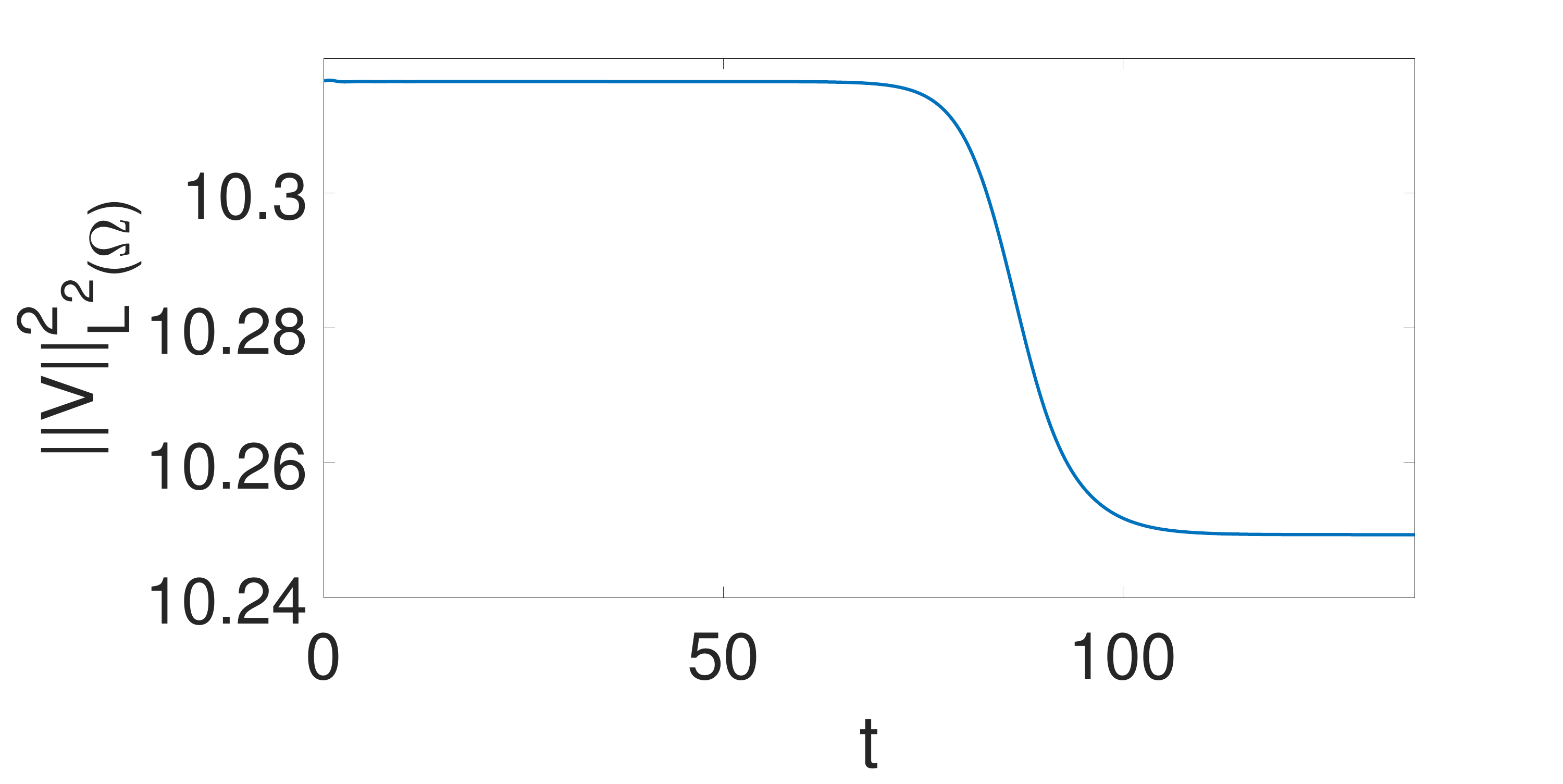} \hfill \includegraphics[width = 0.49 \textwidth]{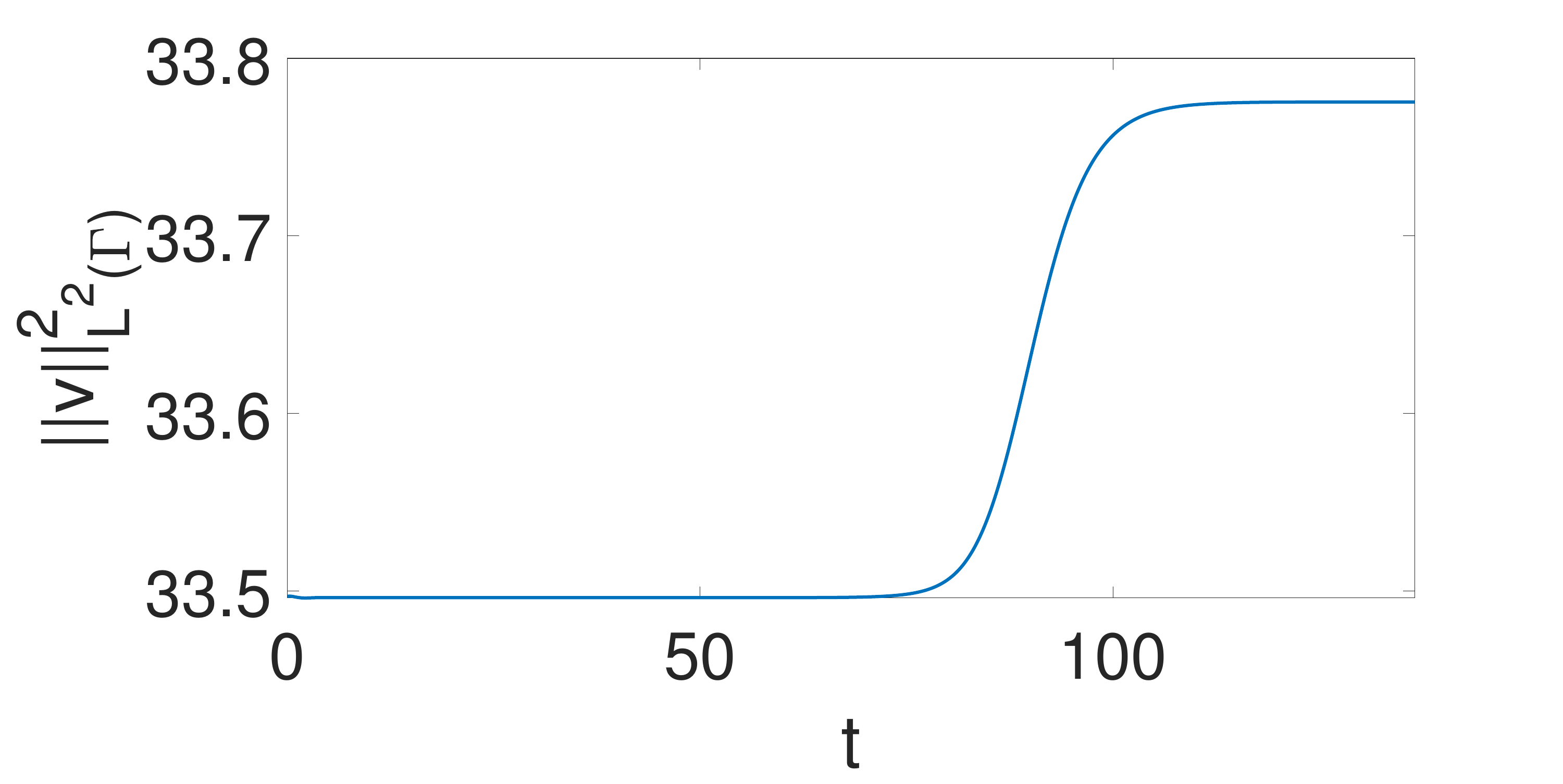}
        \caption{Brusselator for $\ell = 2$.}
        \label{fig:time-series-bru2}
    \end{figure}

    \newpage
    
    \ 
    
    \begin{figure}[hb]
        \centering
        \includegraphics[width = 0.49 \textwidth]{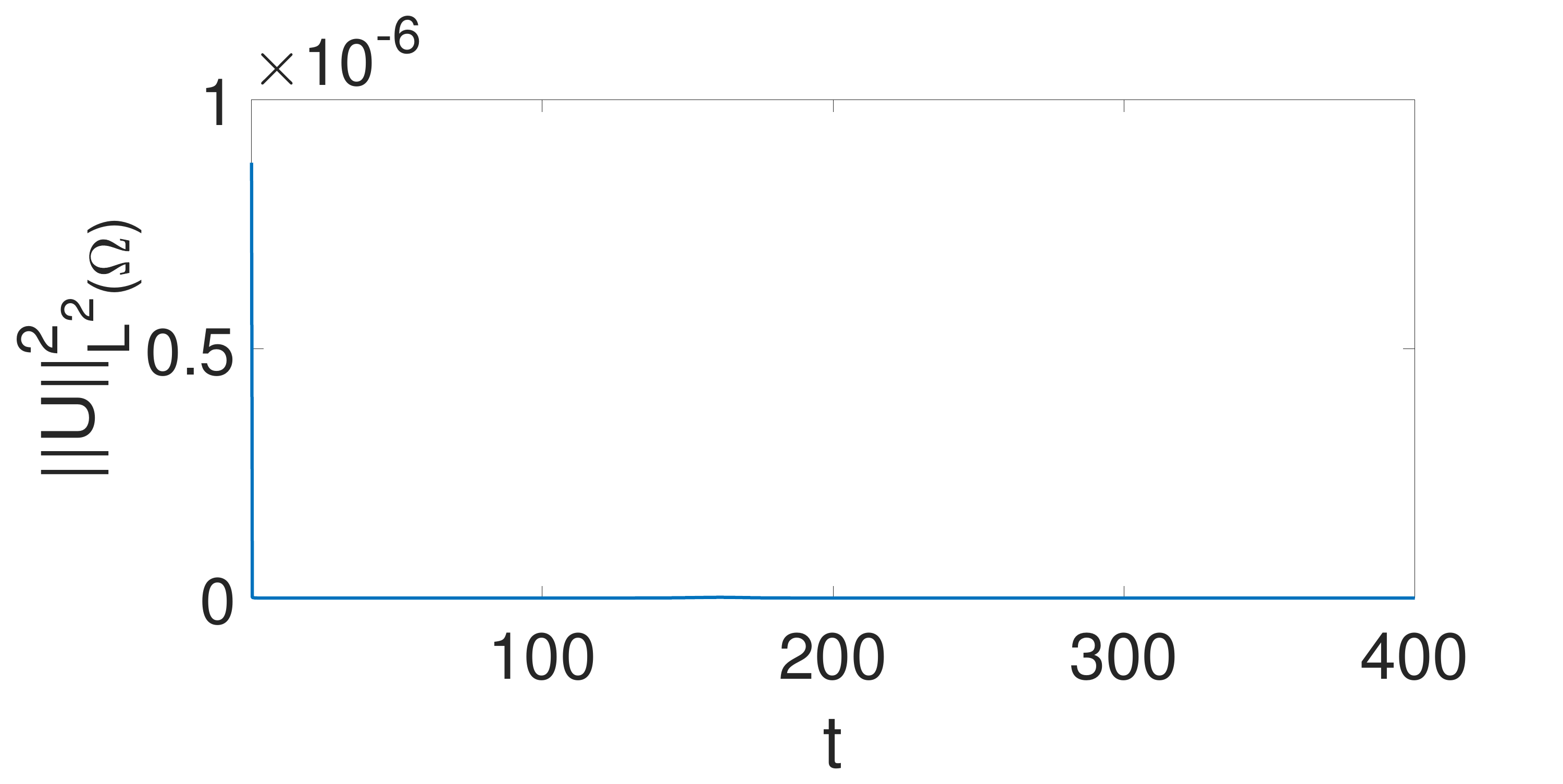} \hfill \includegraphics[width = 0.49 \textwidth]{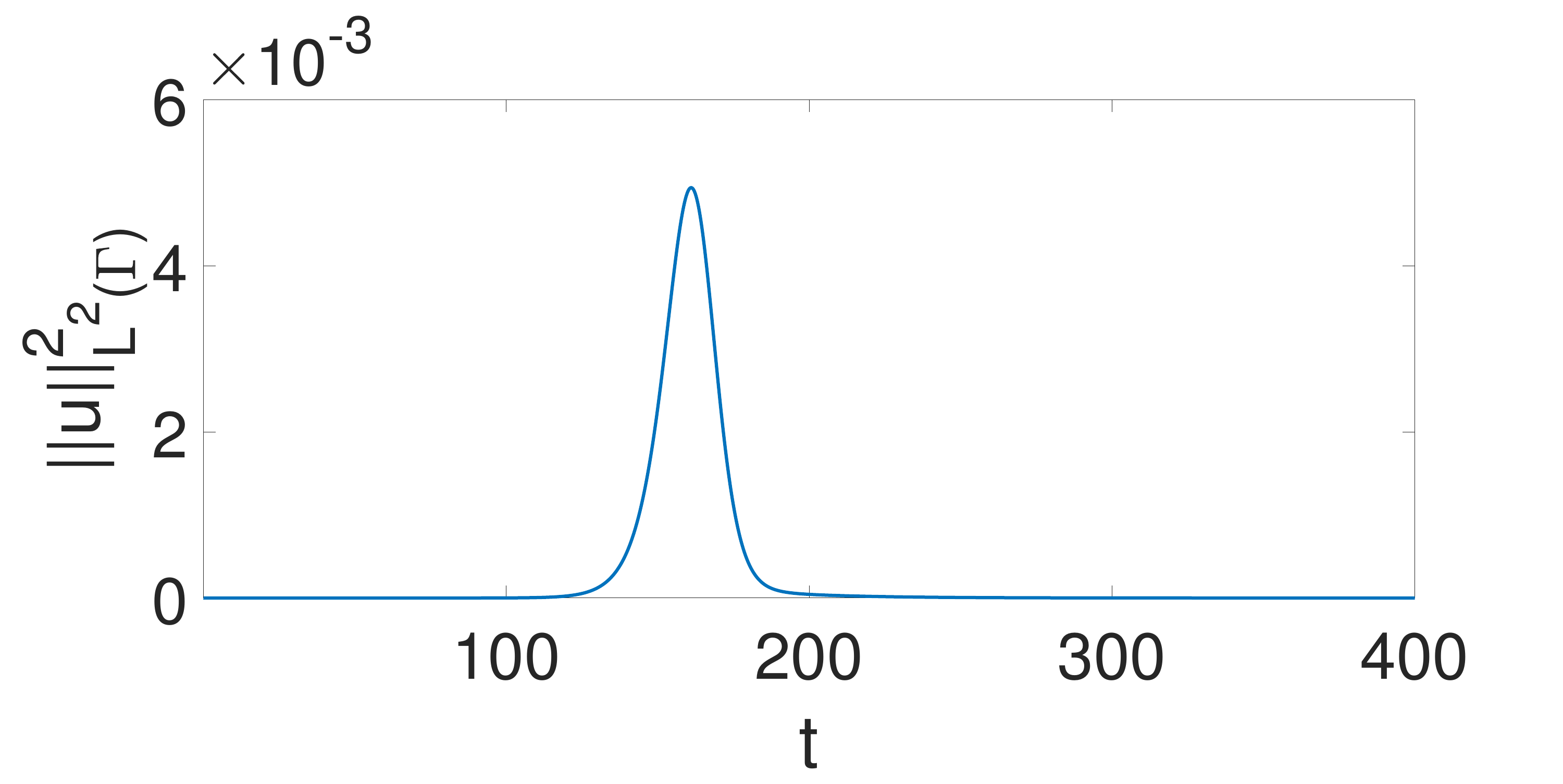}
        \\
        \includegraphics[width = 0.49 \textwidth]{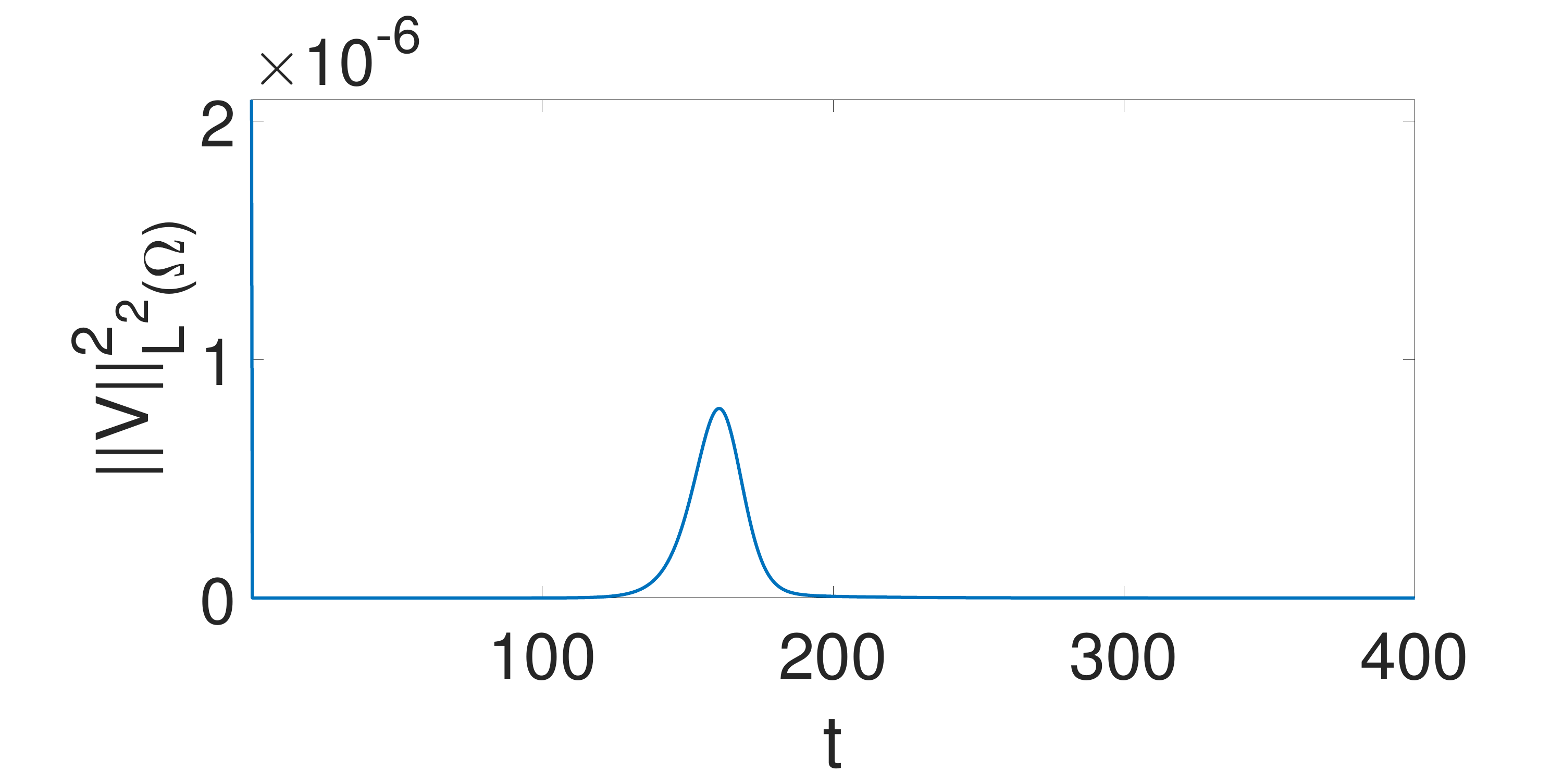} \hfill \includegraphics[width = 0.49 \textwidth]{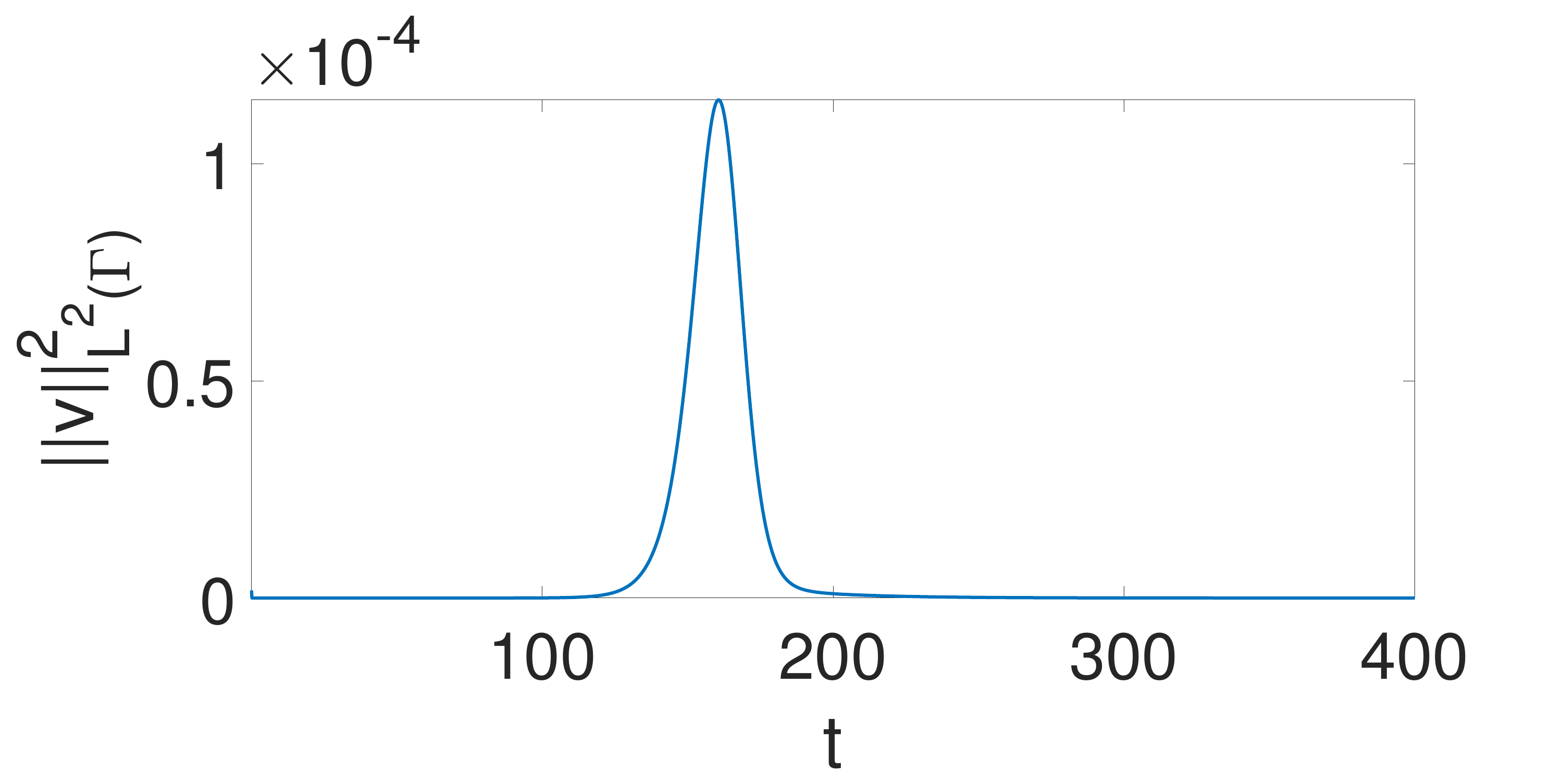}
        \caption{Brusselator for $\ell = 3$.}
    \end{figure}
    \begin{figure}[!b]
        \centering
        \includegraphics[width = 0.49 \textwidth]{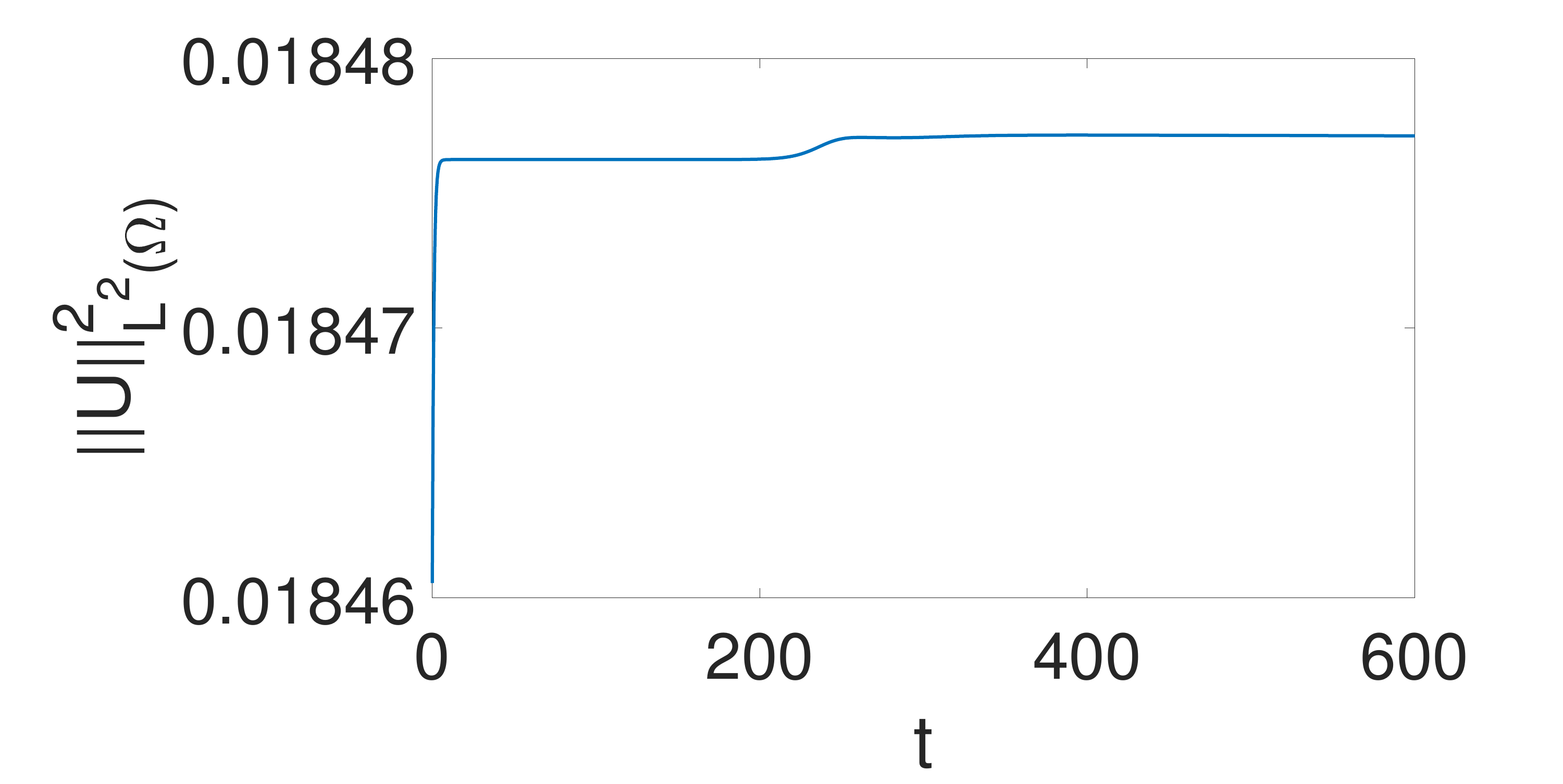} \hfill \includegraphics[width = 0.49 \textwidth]{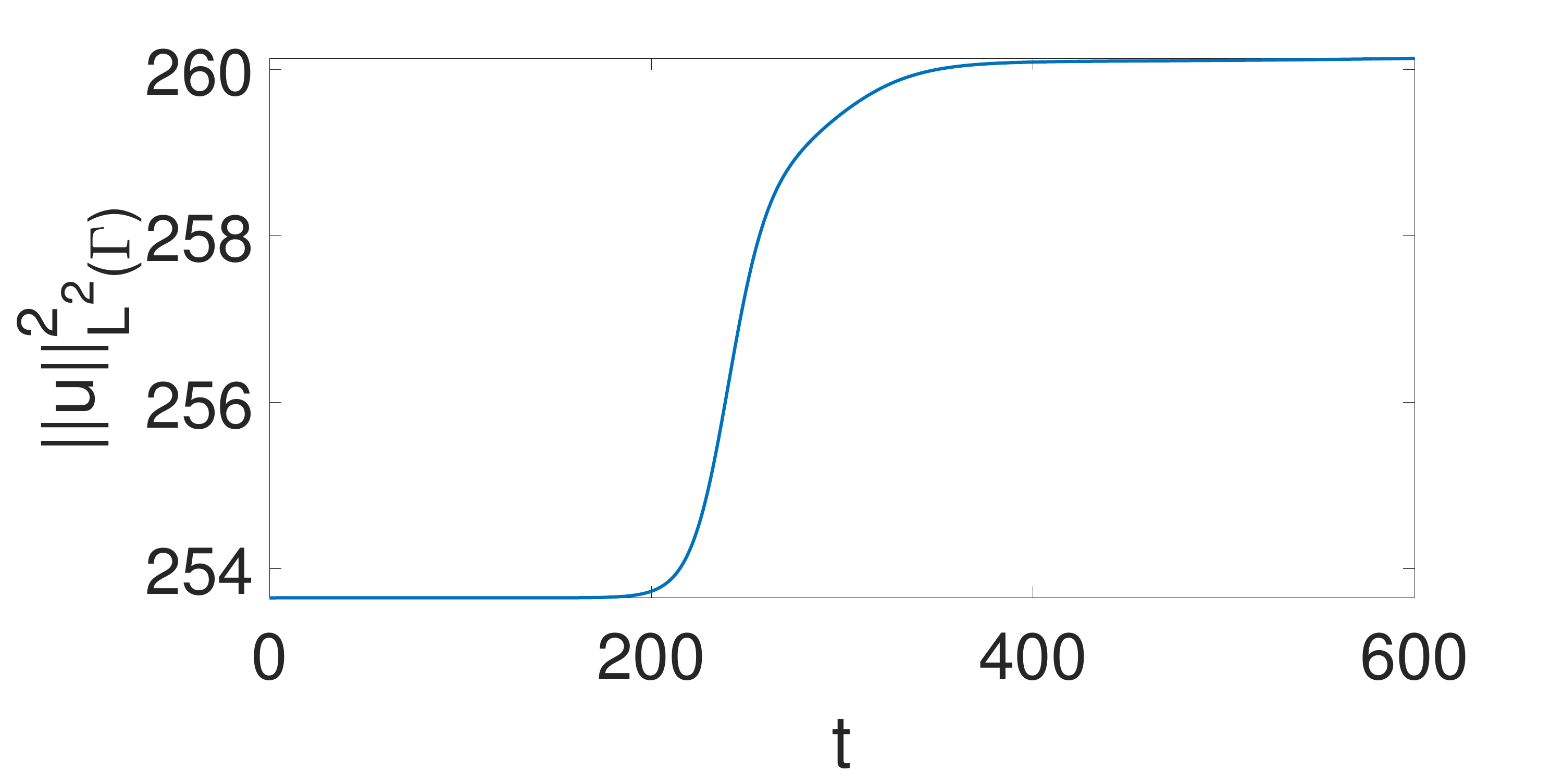}
        \\
        \includegraphics[width = 0.49 \textwidth]{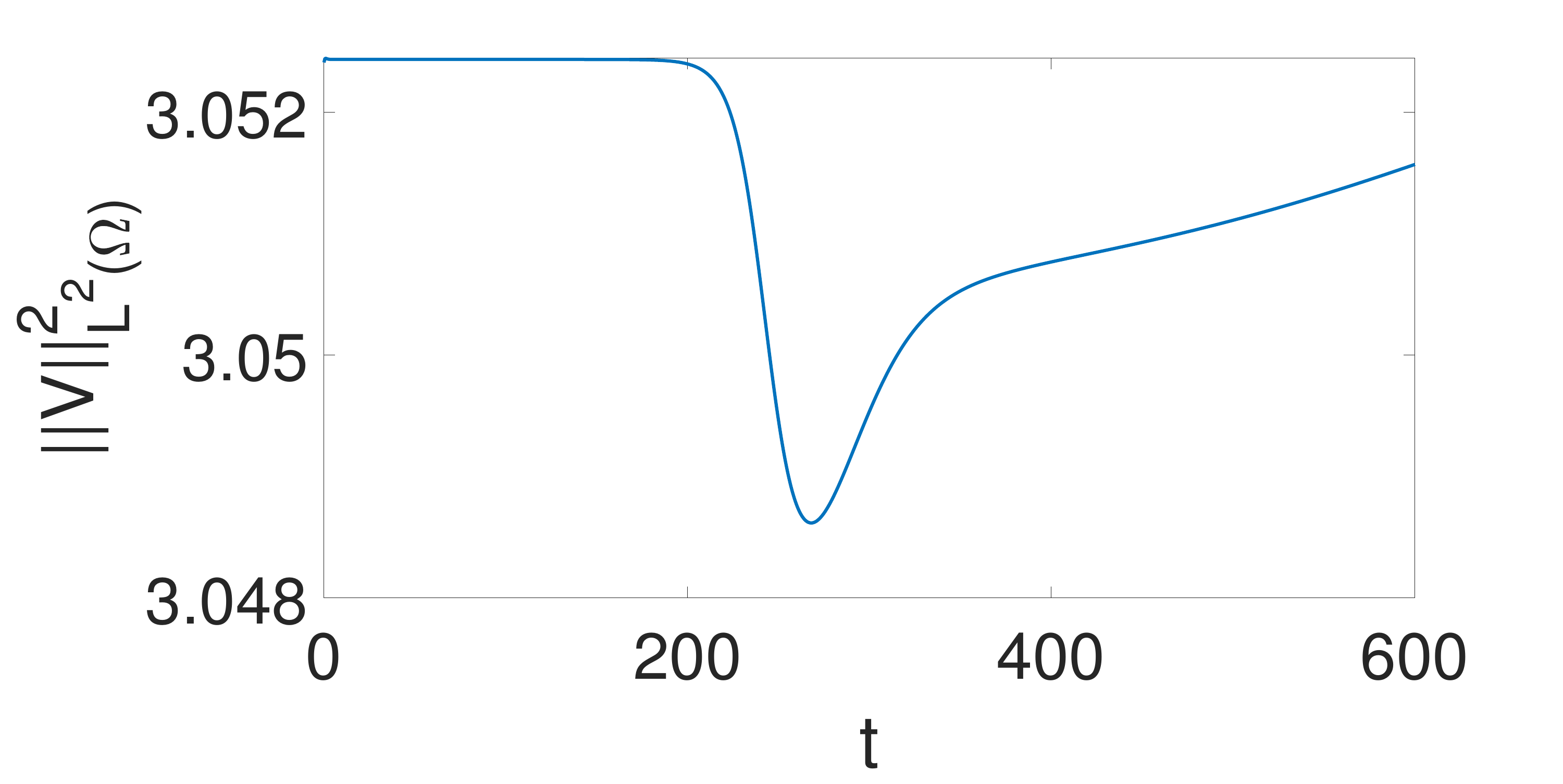} \hfill \includegraphics[width = 0.49 \textwidth]{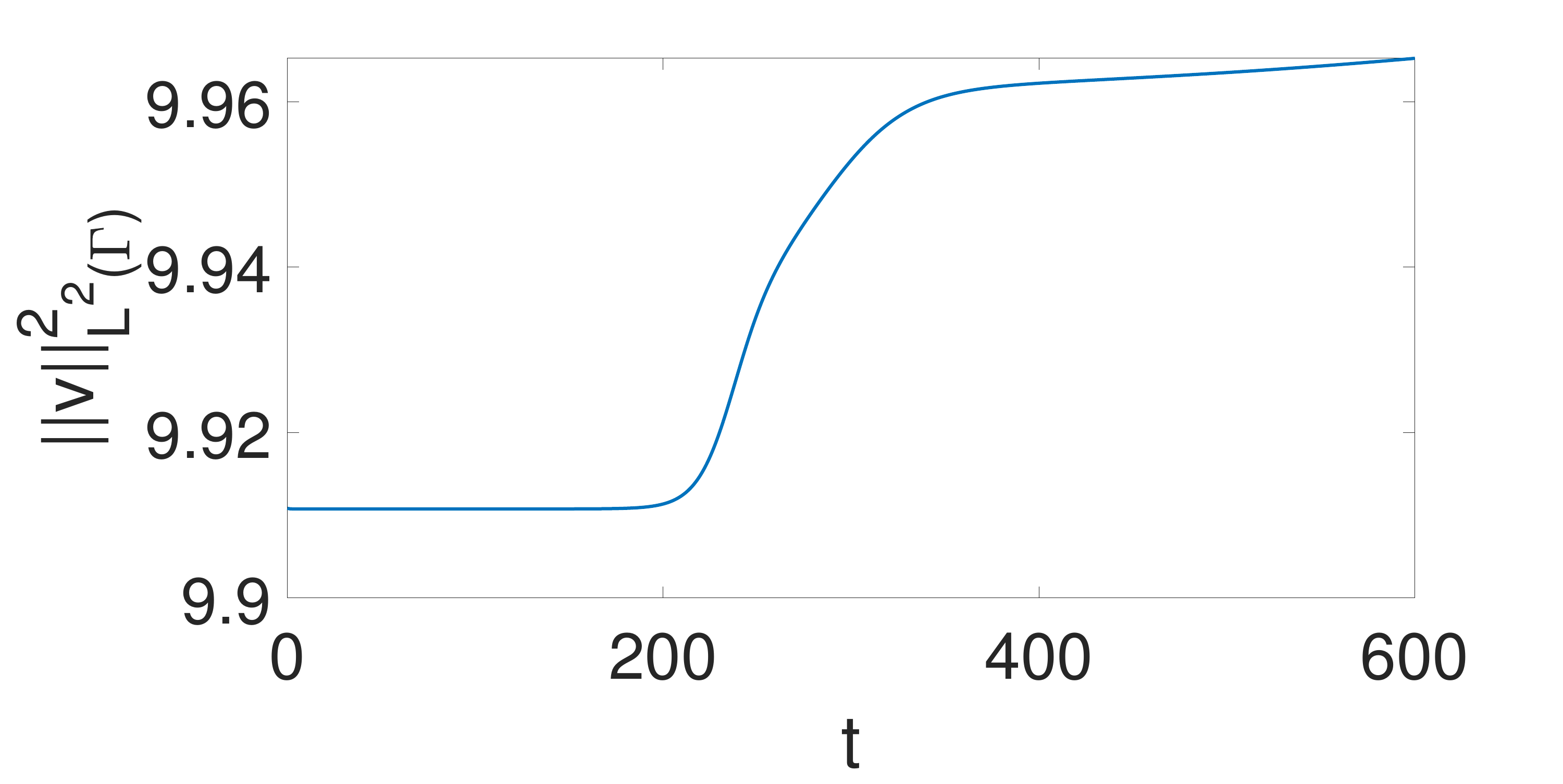}
        \caption{Brusselator for $\ell = 4$.}
    \end{figure}

    \newpage

    \
    
    \begin{figure}[hb]
        \centering
        \includegraphics[width = 0.49 \textwidth]{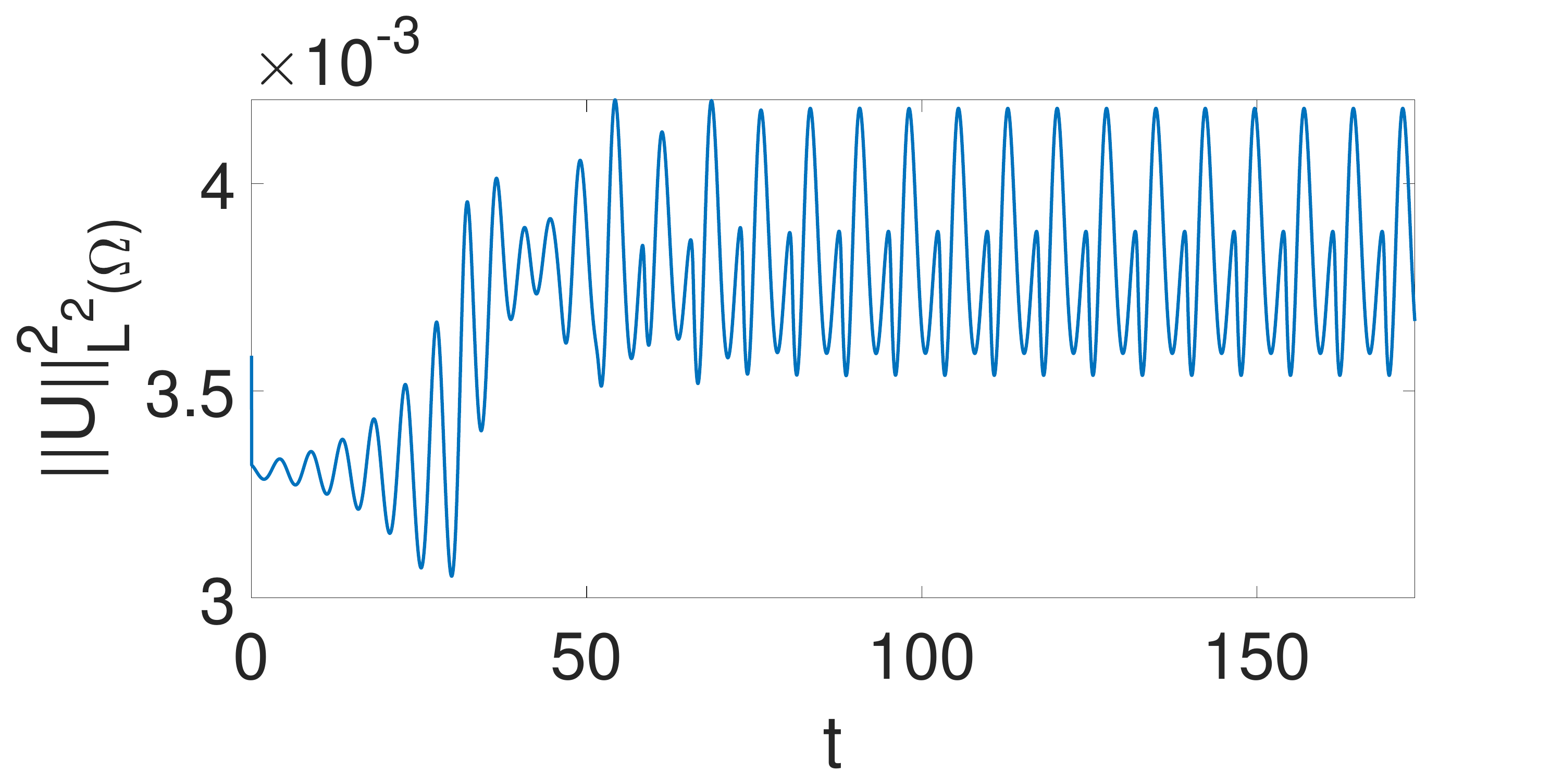} \hfill \includegraphics[width = 0.49 \textwidth]{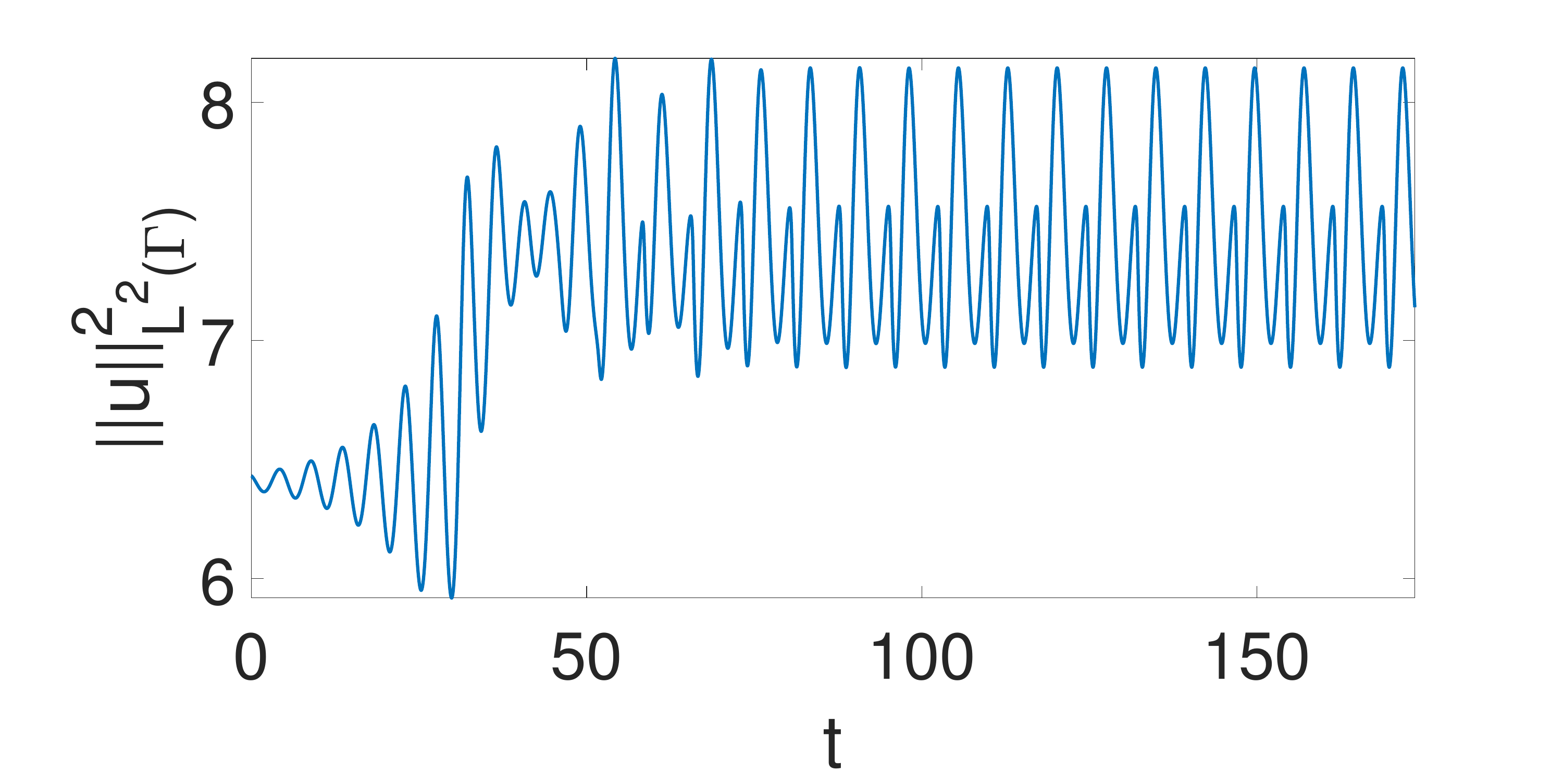}
        \\
        \includegraphics[width = 0.49 \textwidth]{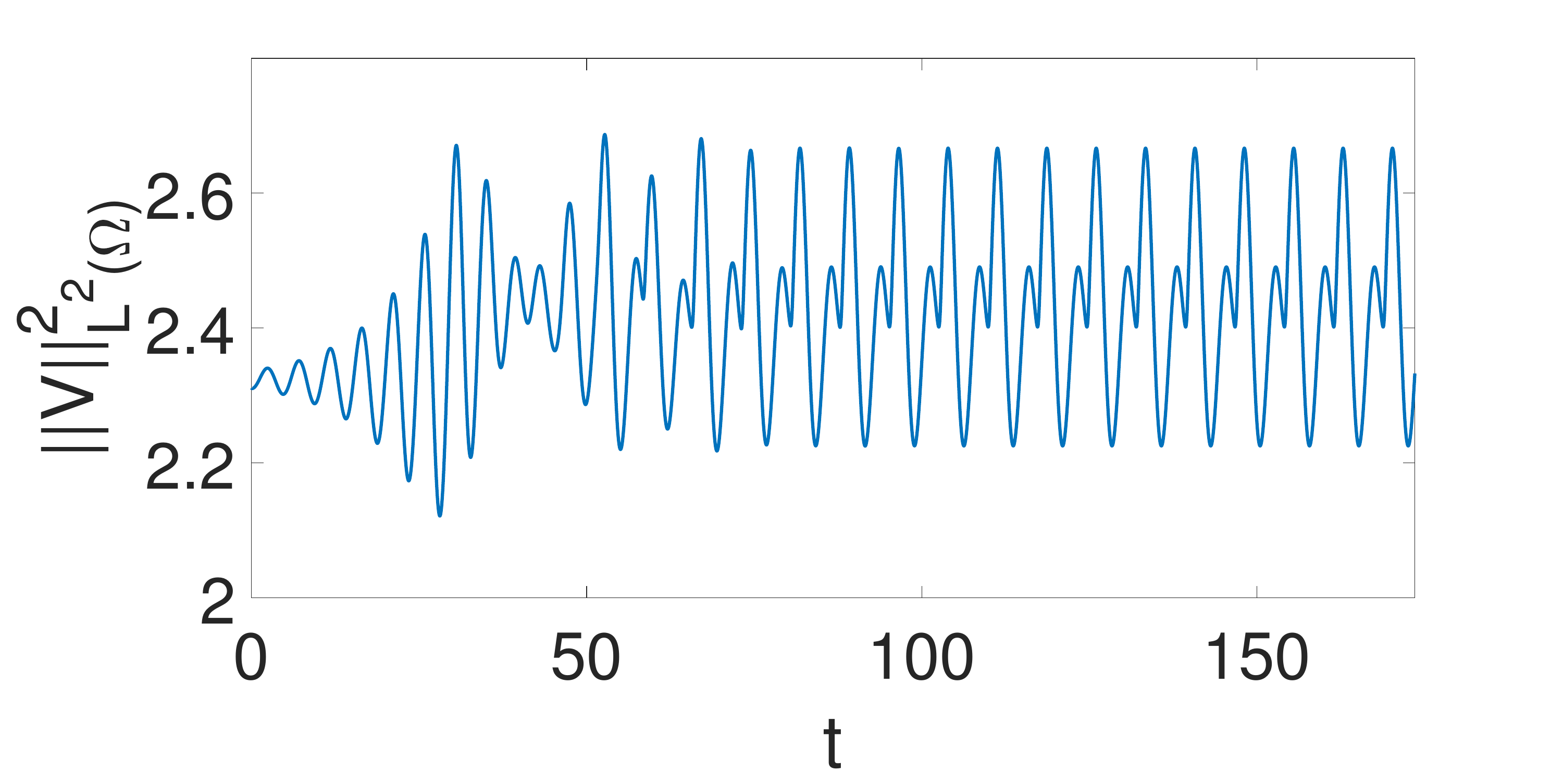} \hfill \includegraphics[width = 0.49 \textwidth]{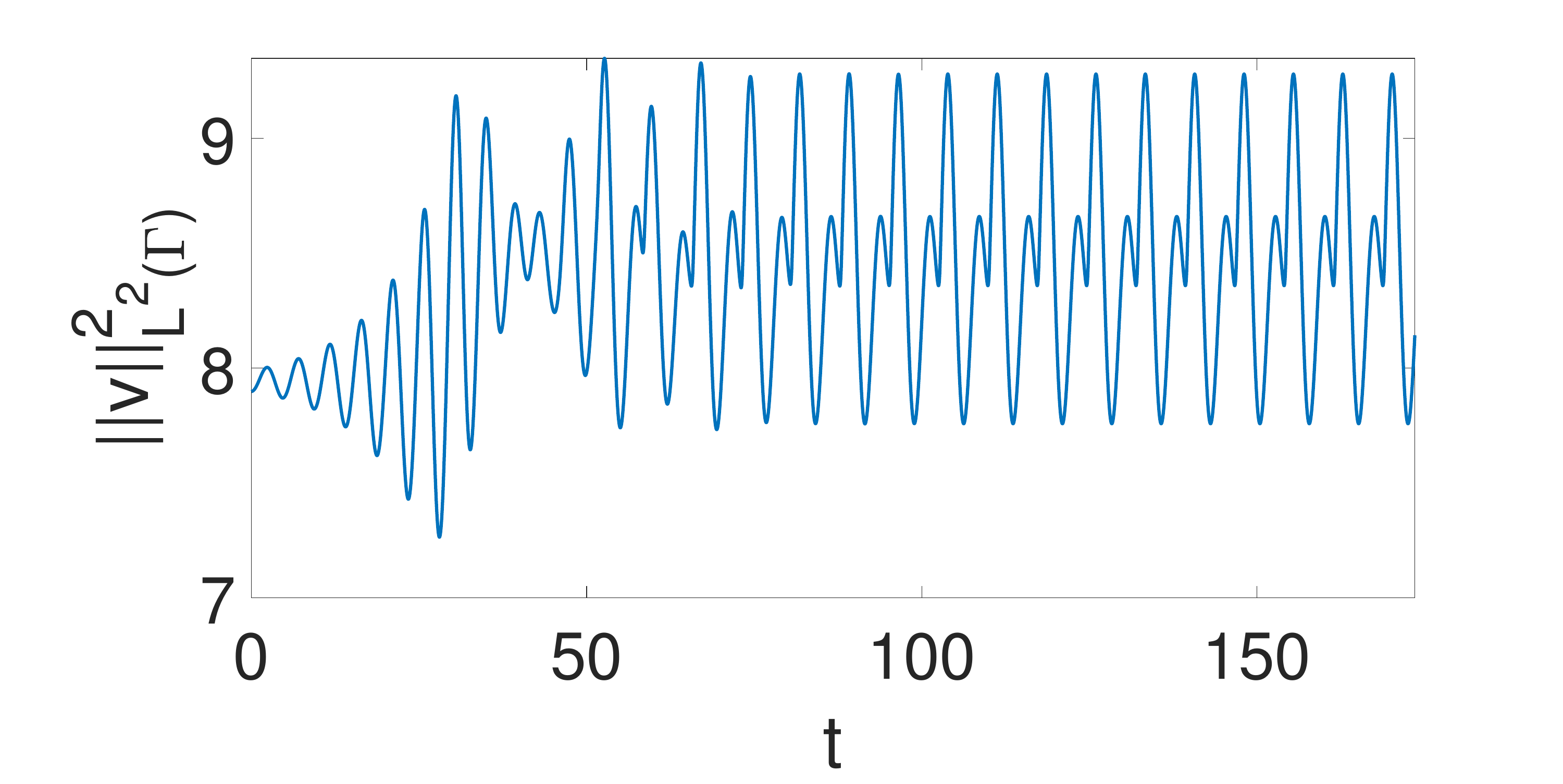}
        \\
        \includegraphics[width = 0.49 \textwidth]{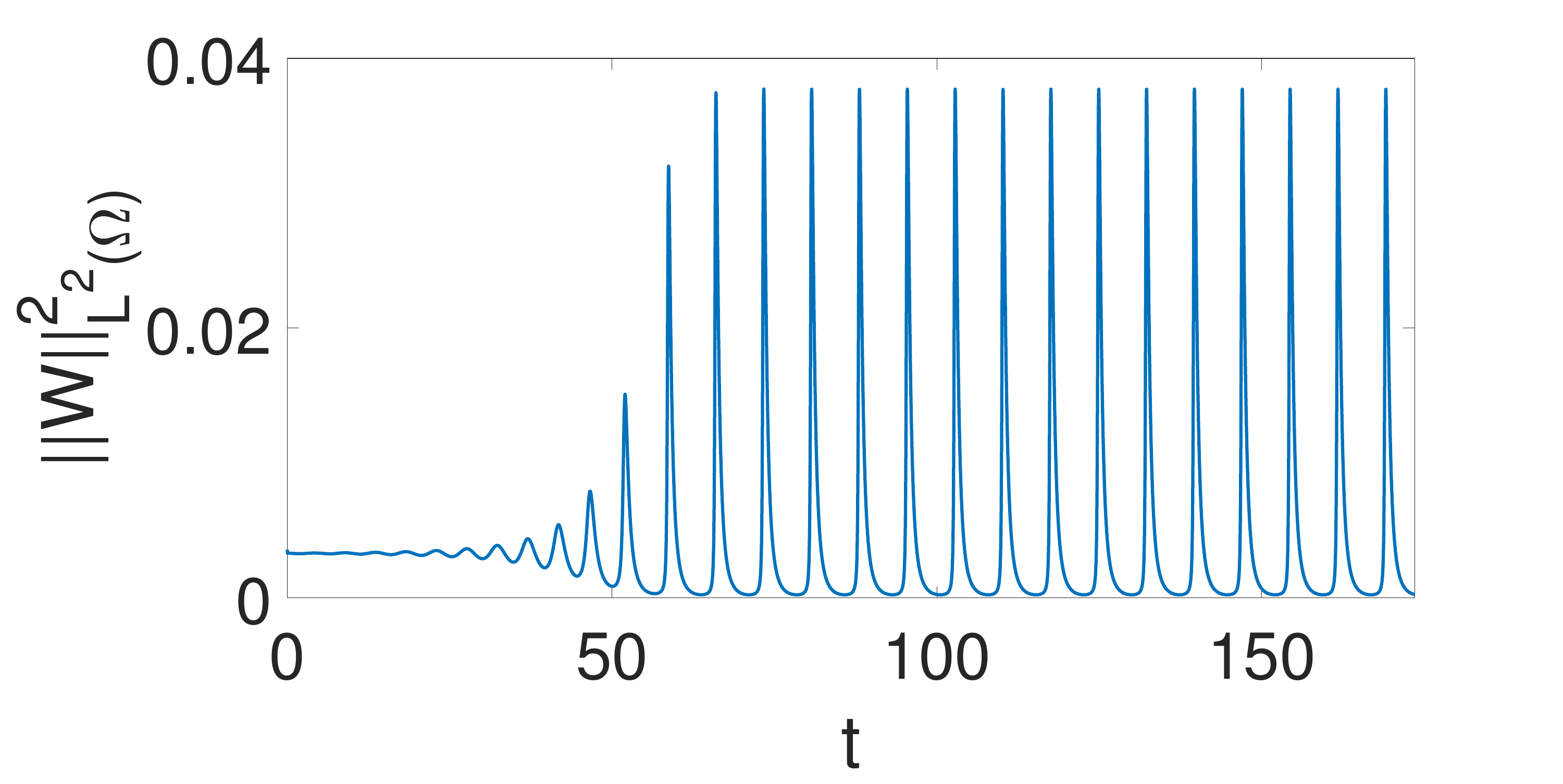} \hfill \includegraphics[width = 0.49 \textwidth]{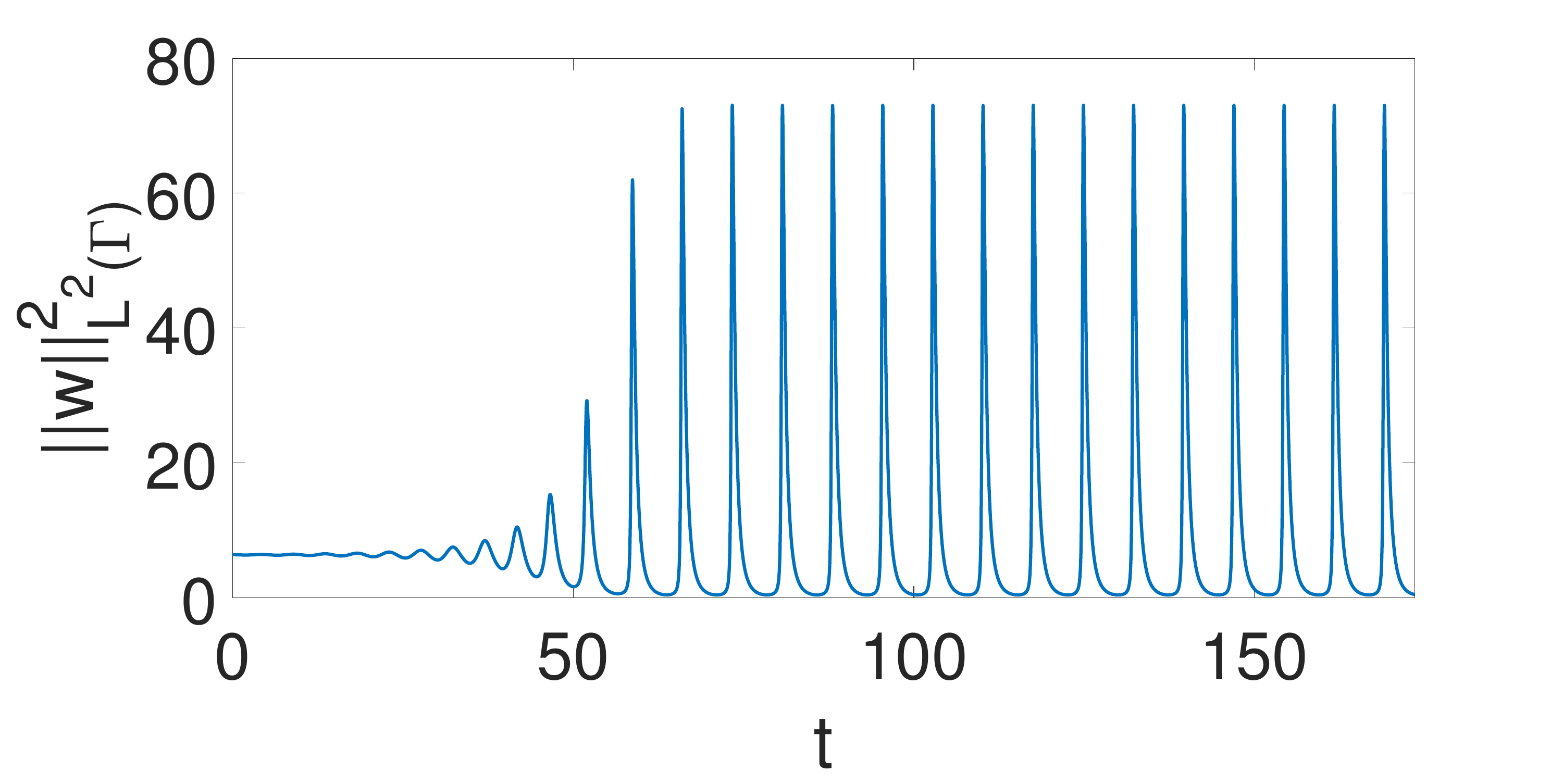}
        \\
        \includegraphics[width = 0.49 \textwidth]{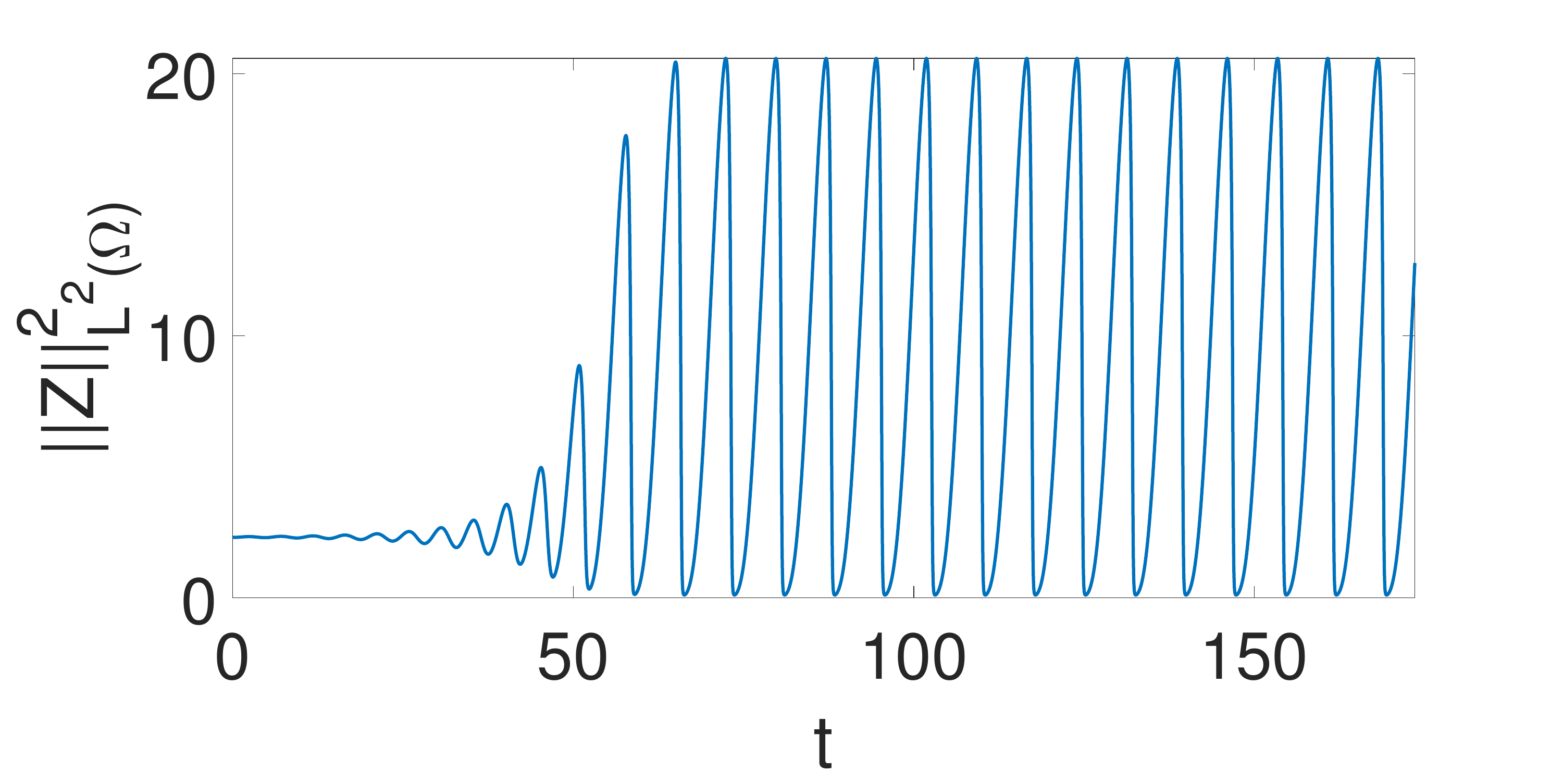} \hfill \includegraphics[width = 0.49 \textwidth]{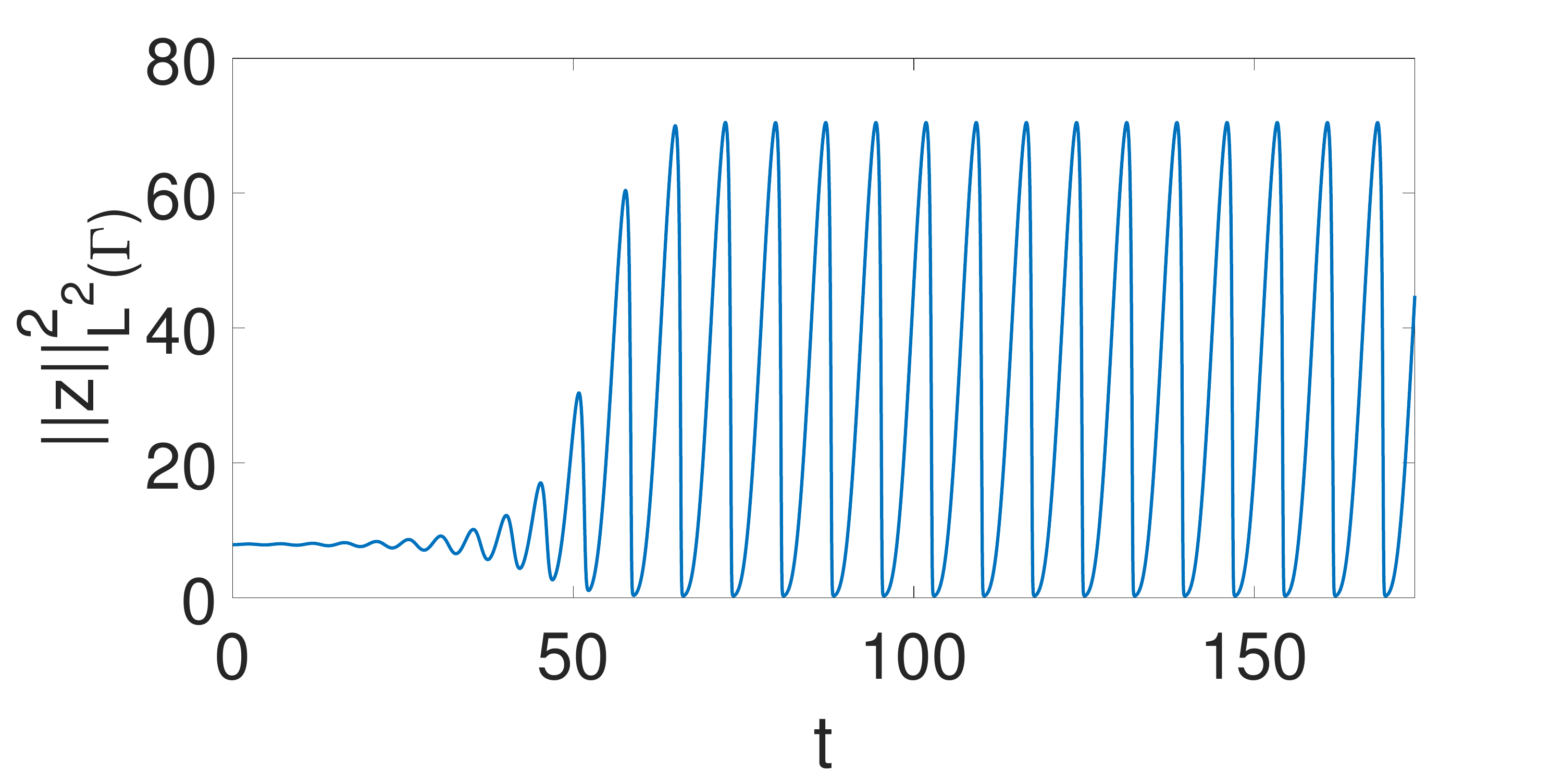}
        \caption{Cell-polarity model for $\ell = 1$.}
        \label{fig:fahadl=1}
    \end{figure}

    \newpage
    
    \ 
    
    \begin{figure}[hb]
        \centering
        \includegraphics[width = 0.49 \textwidth]{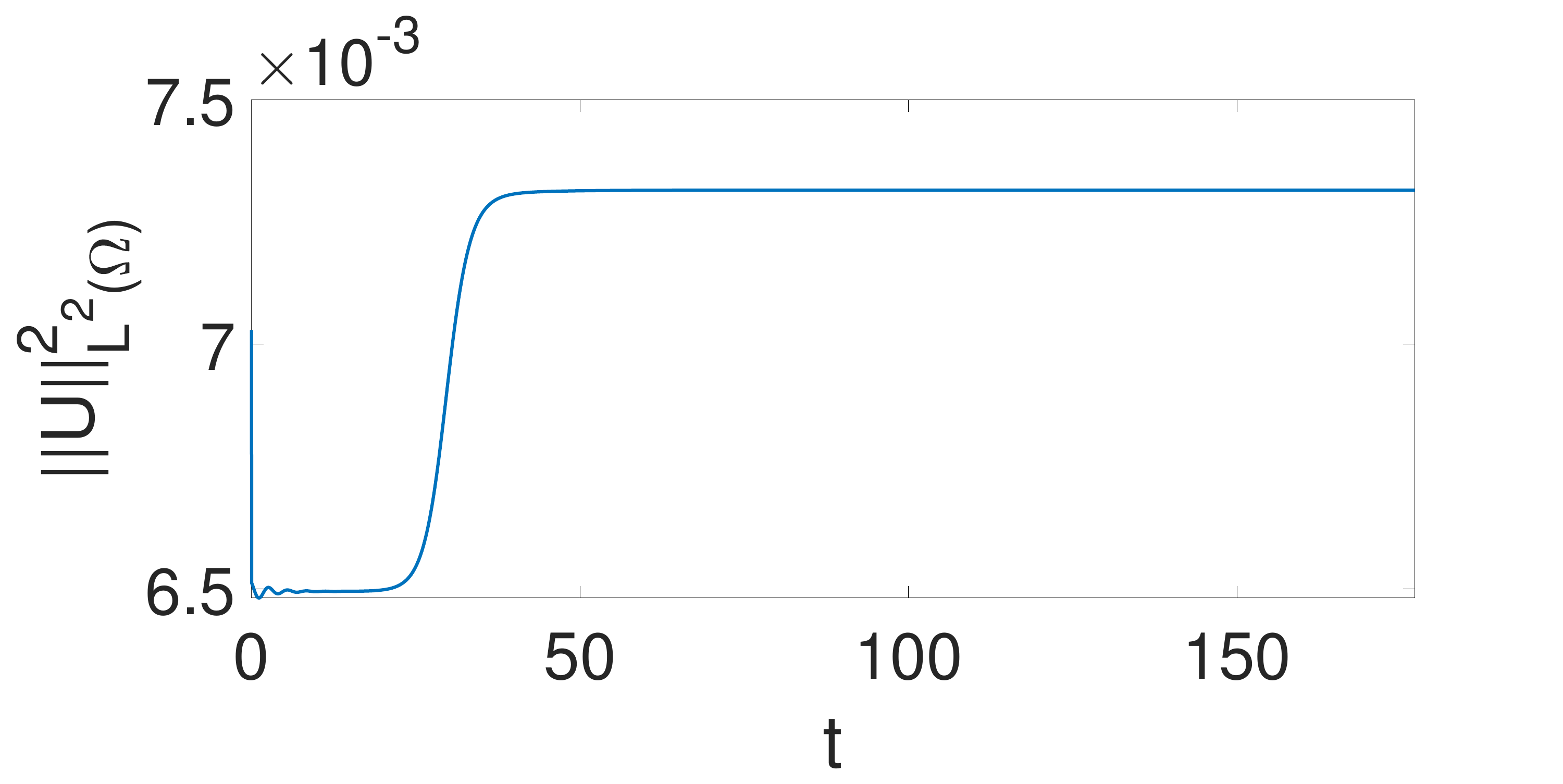} \hfill \includegraphics[width = 0.49 \textwidth]{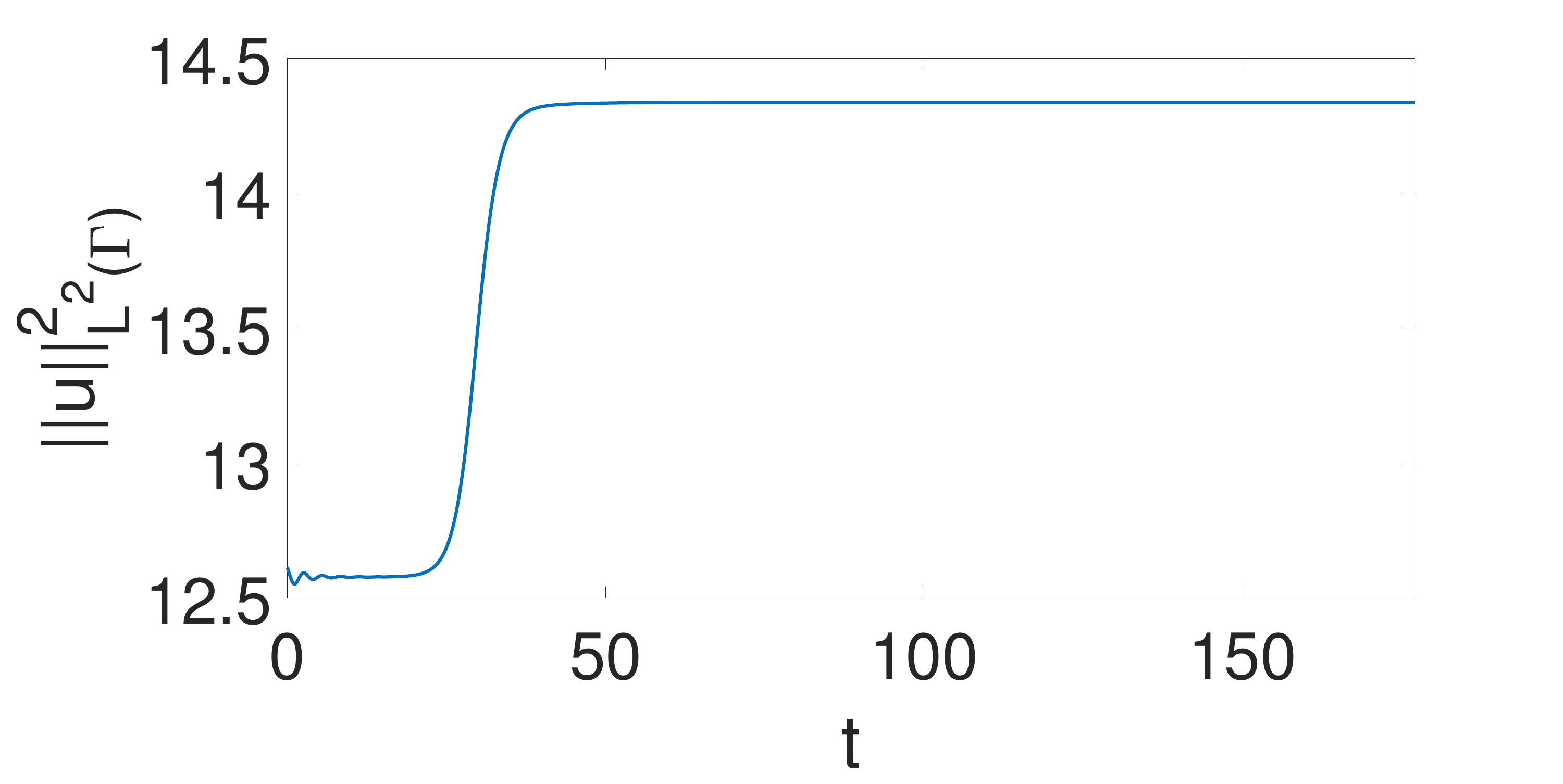}
        \\
        \includegraphics[width = 0.49 \textwidth]{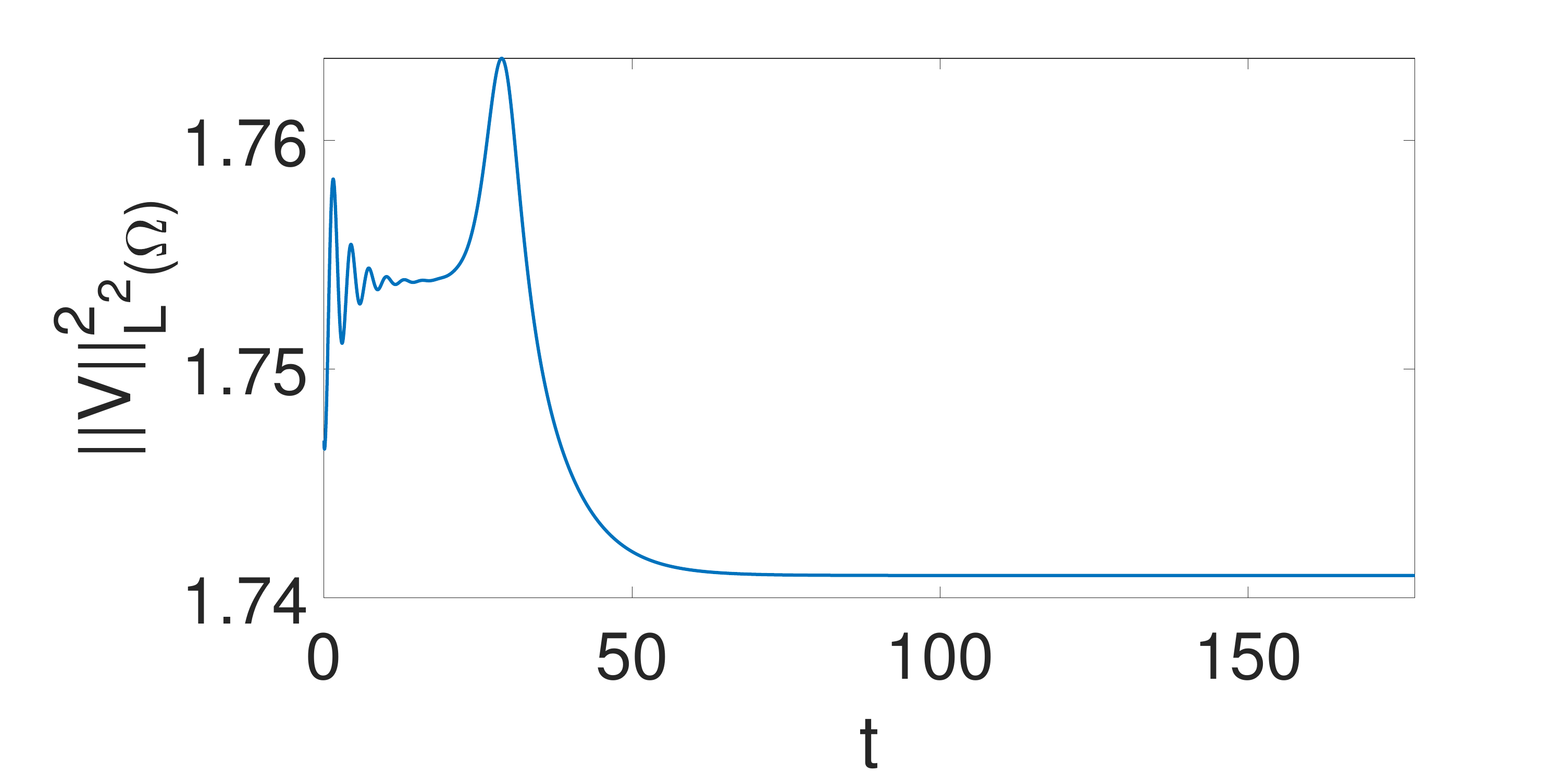} \hfill \includegraphics[width = 0.49 \textwidth]{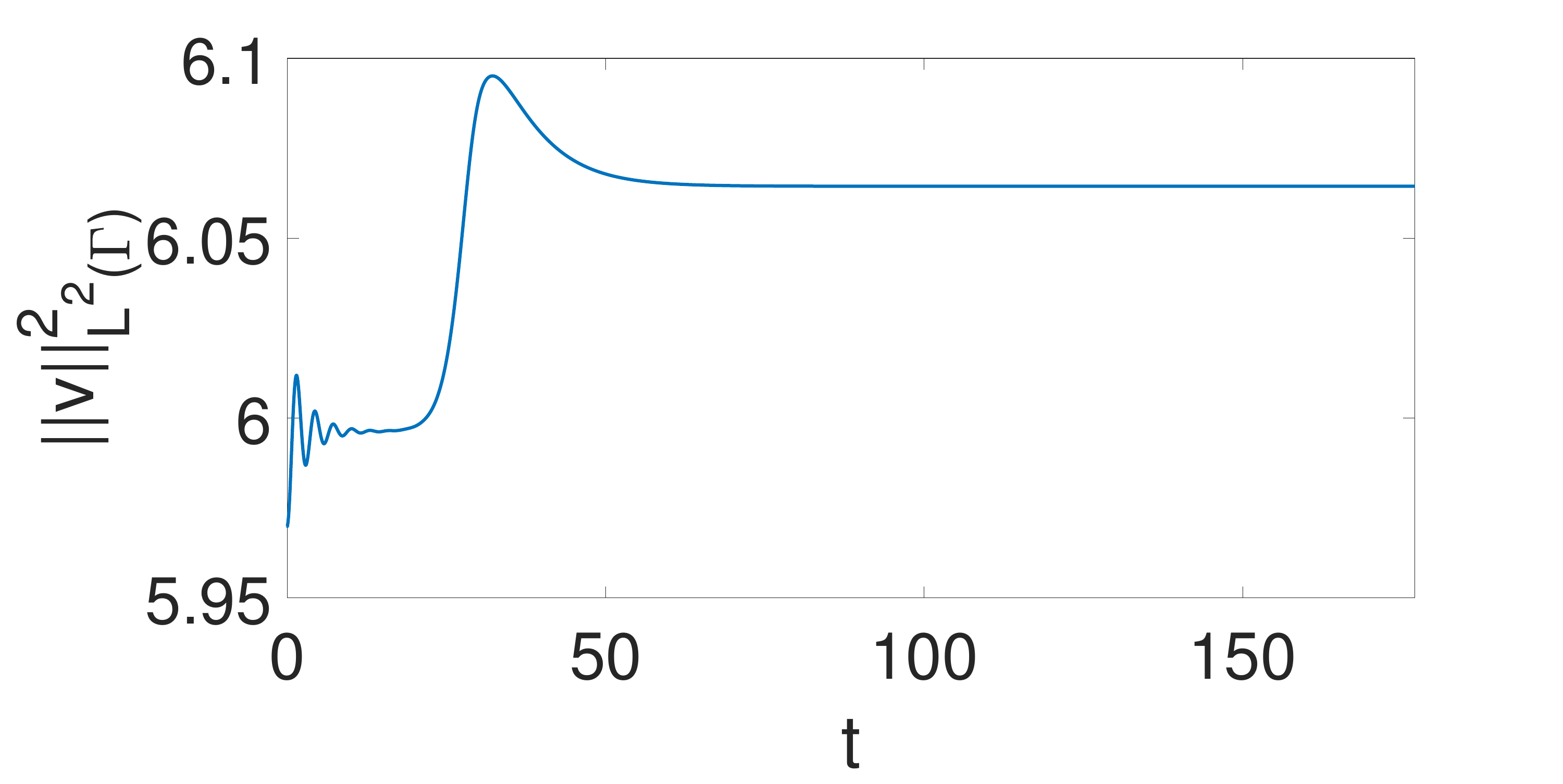}
        \\
        \includegraphics[width = 0.49 \textwidth]{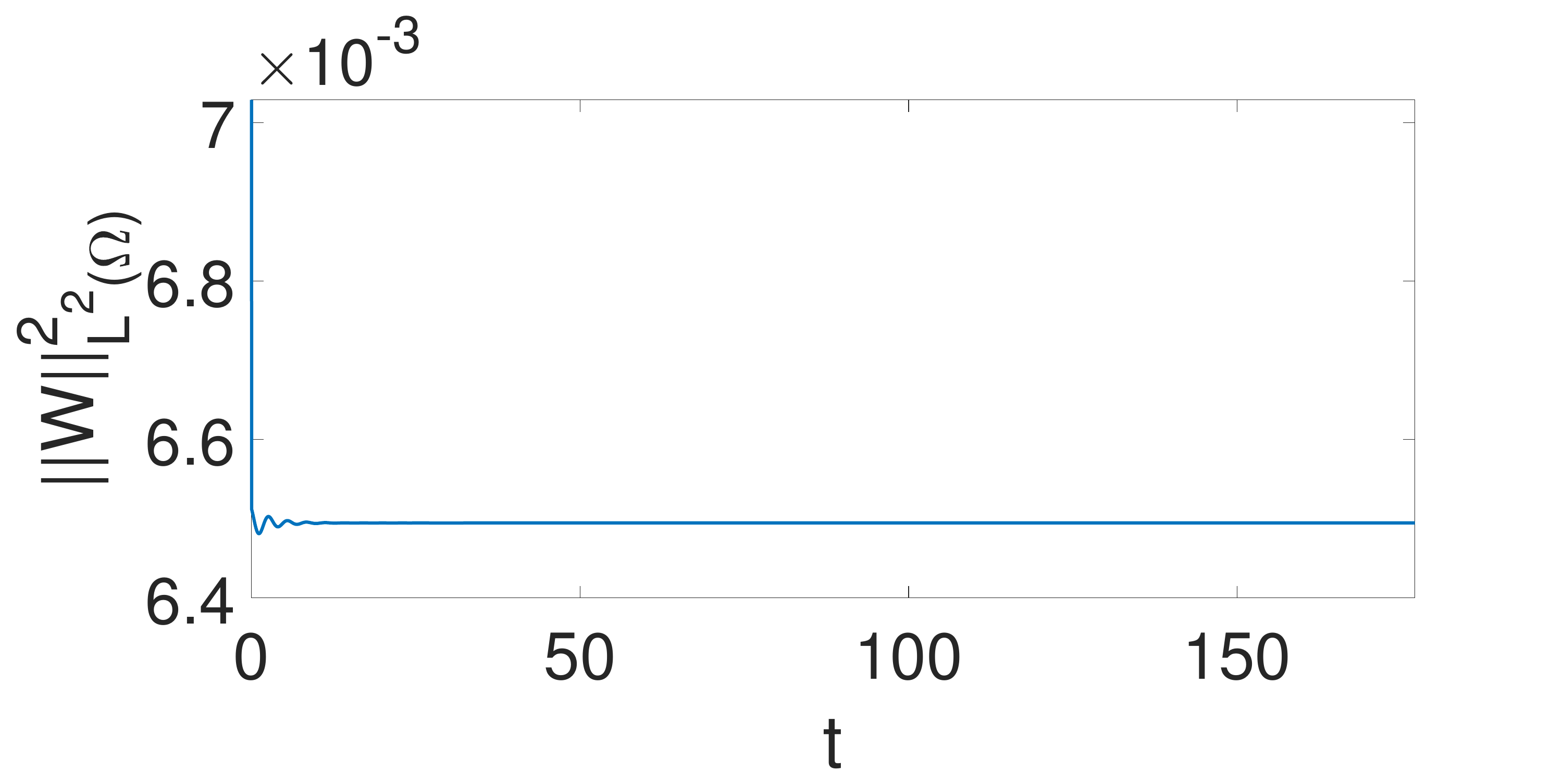} \hfill \includegraphics[width = 0.49 \textwidth]{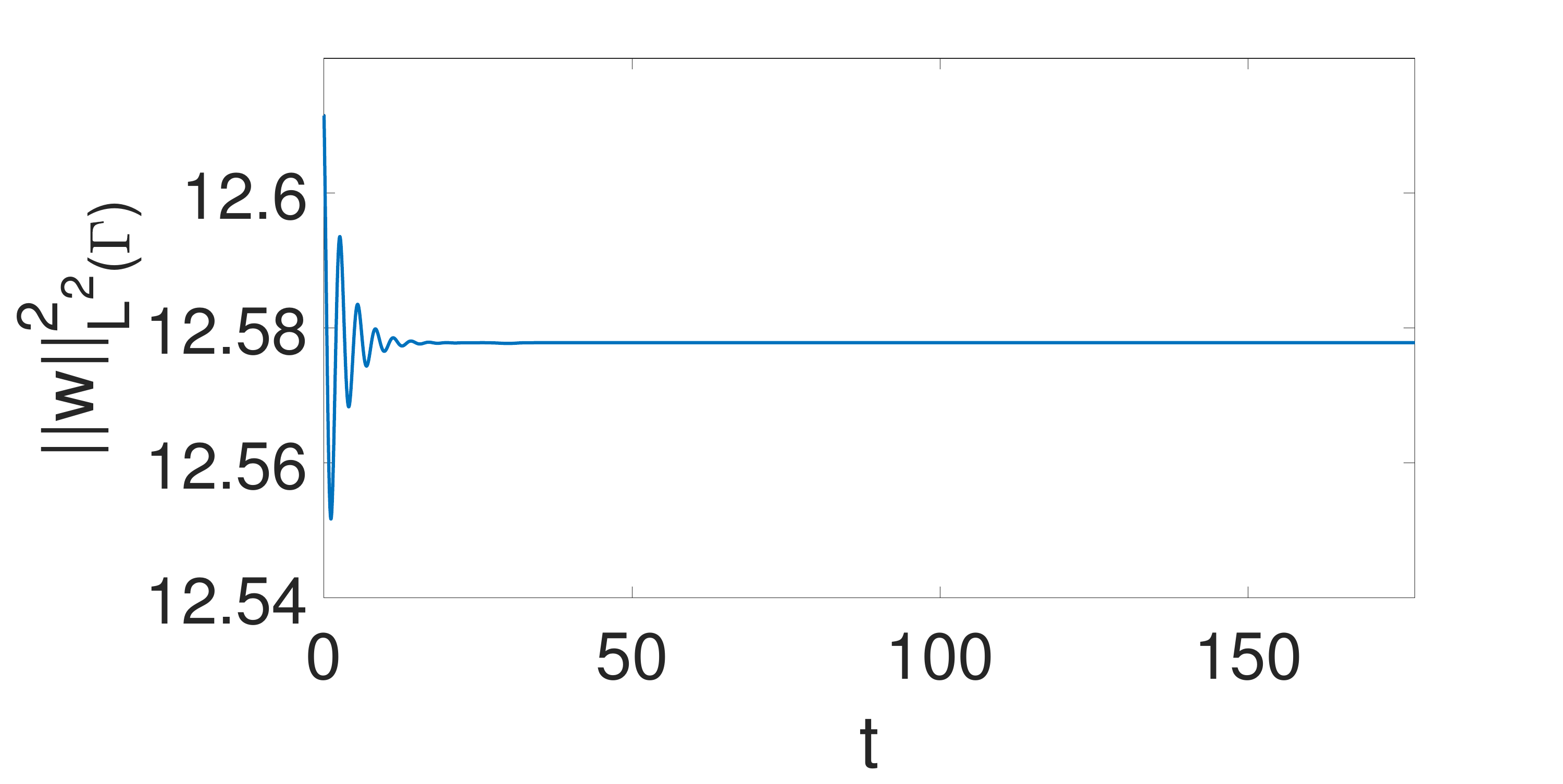}
        \\
        \includegraphics[width = 0.49 \textwidth]{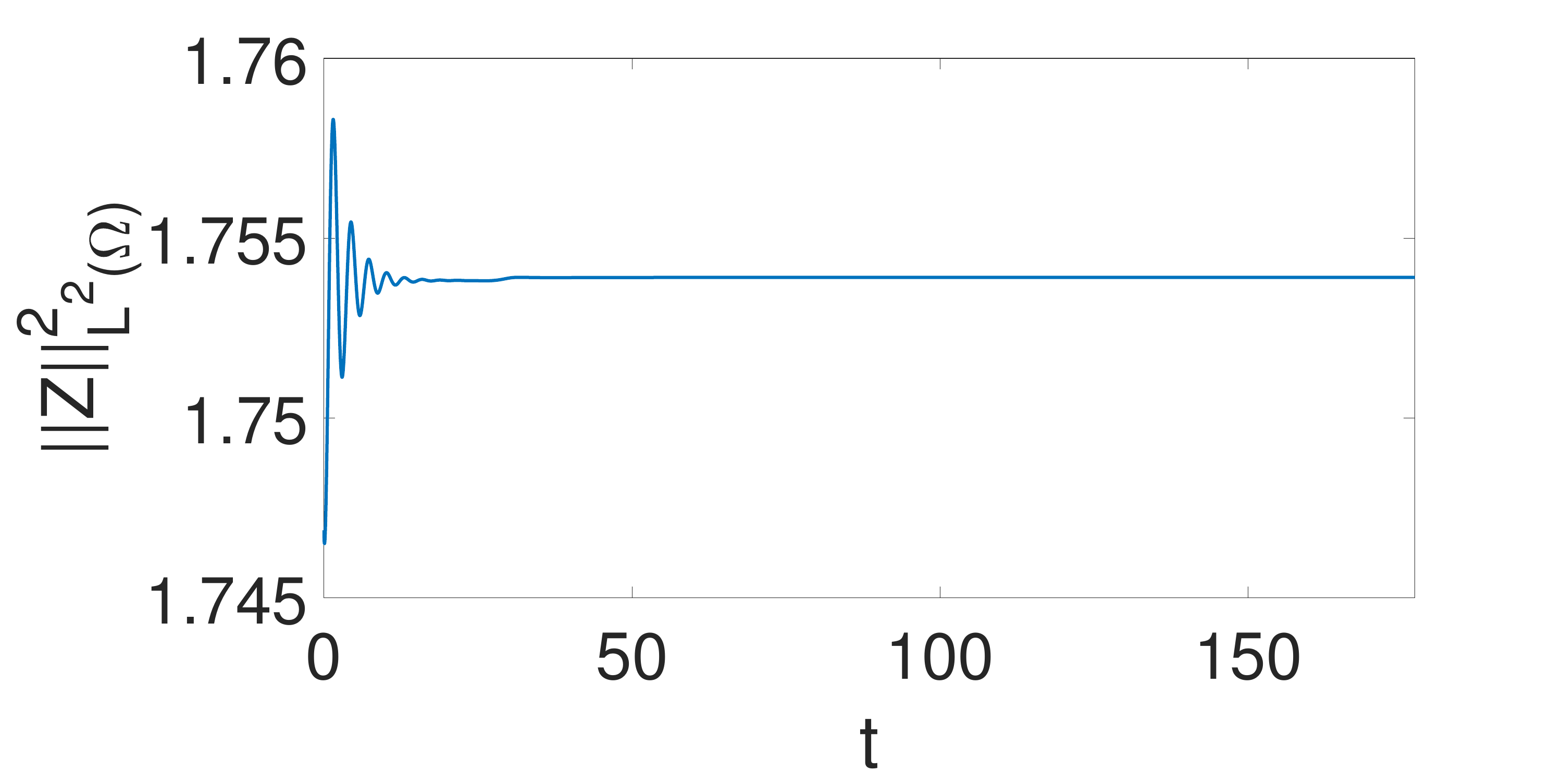} \hfill \includegraphics[width = 0.49 \textwidth]{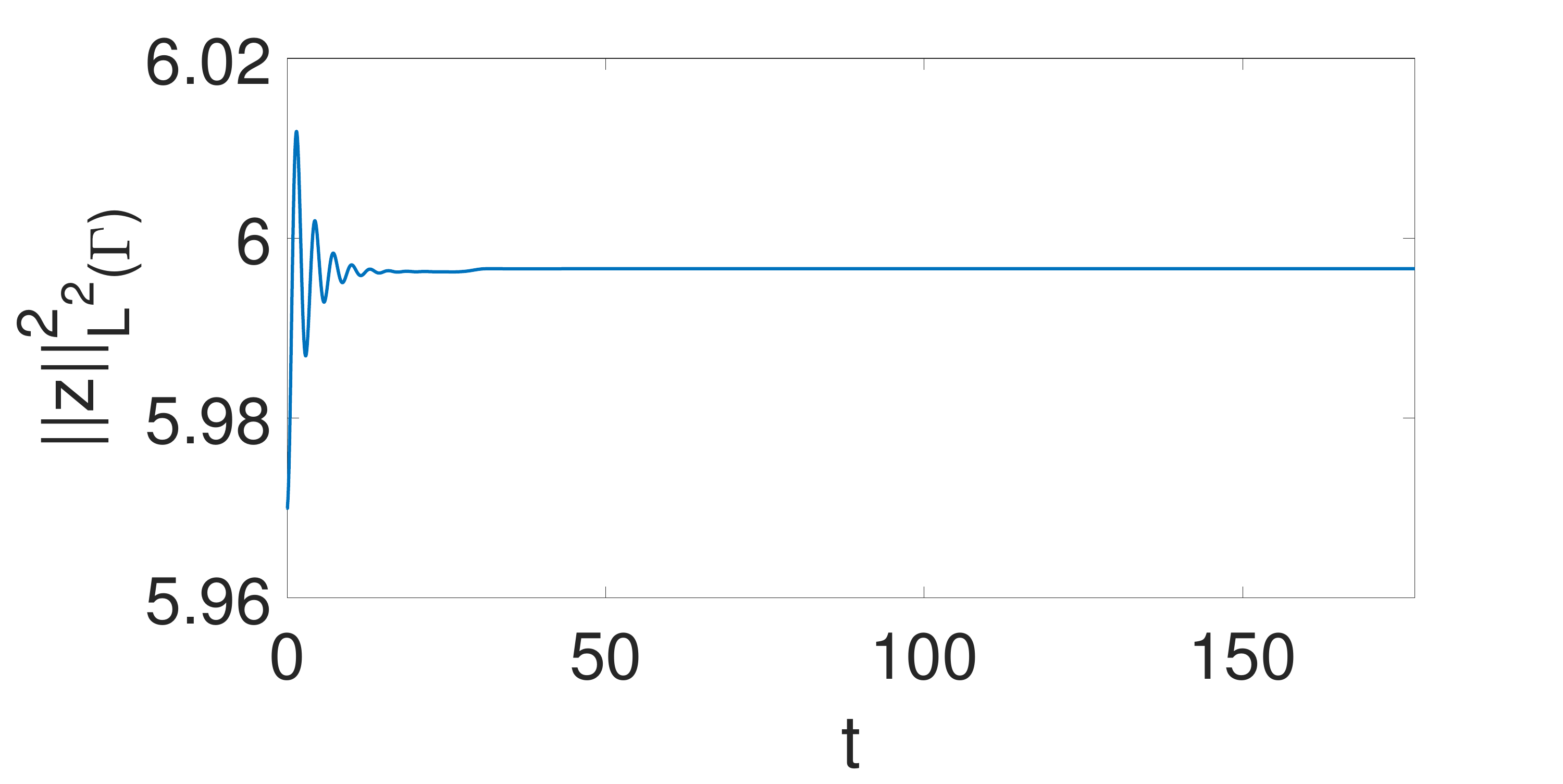}
        \caption{Cell-polarity model for $\ell = 2$.}
    \end{figure}

    \newpage

    \
    
    \begin{figure}[hb]
        \centering
        \includegraphics[width = 0.49 \textwidth]{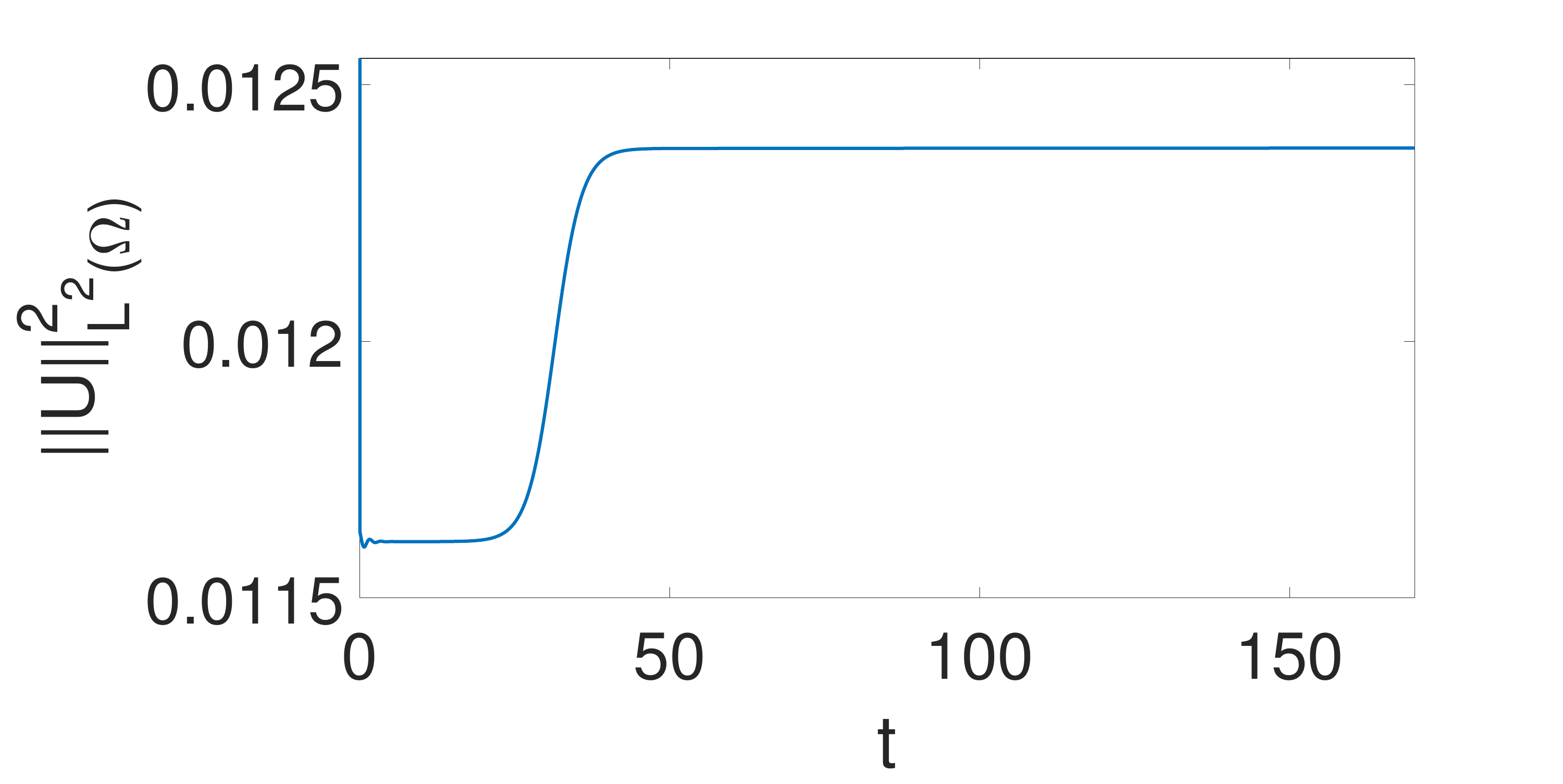} \hfill \includegraphics[width = 0.49 \textwidth]{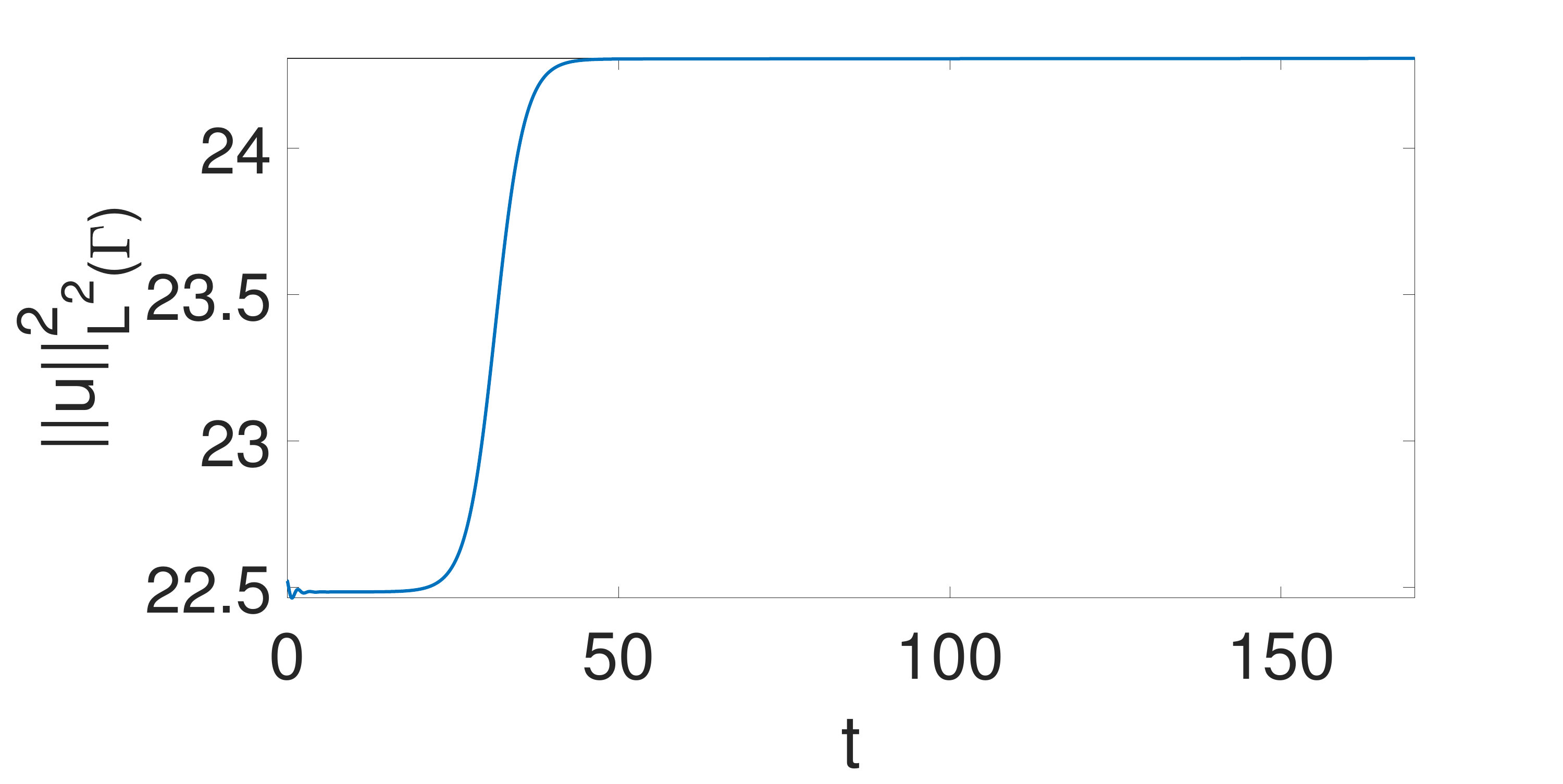}
        \\
        \includegraphics[width = 0.49 \textwidth]{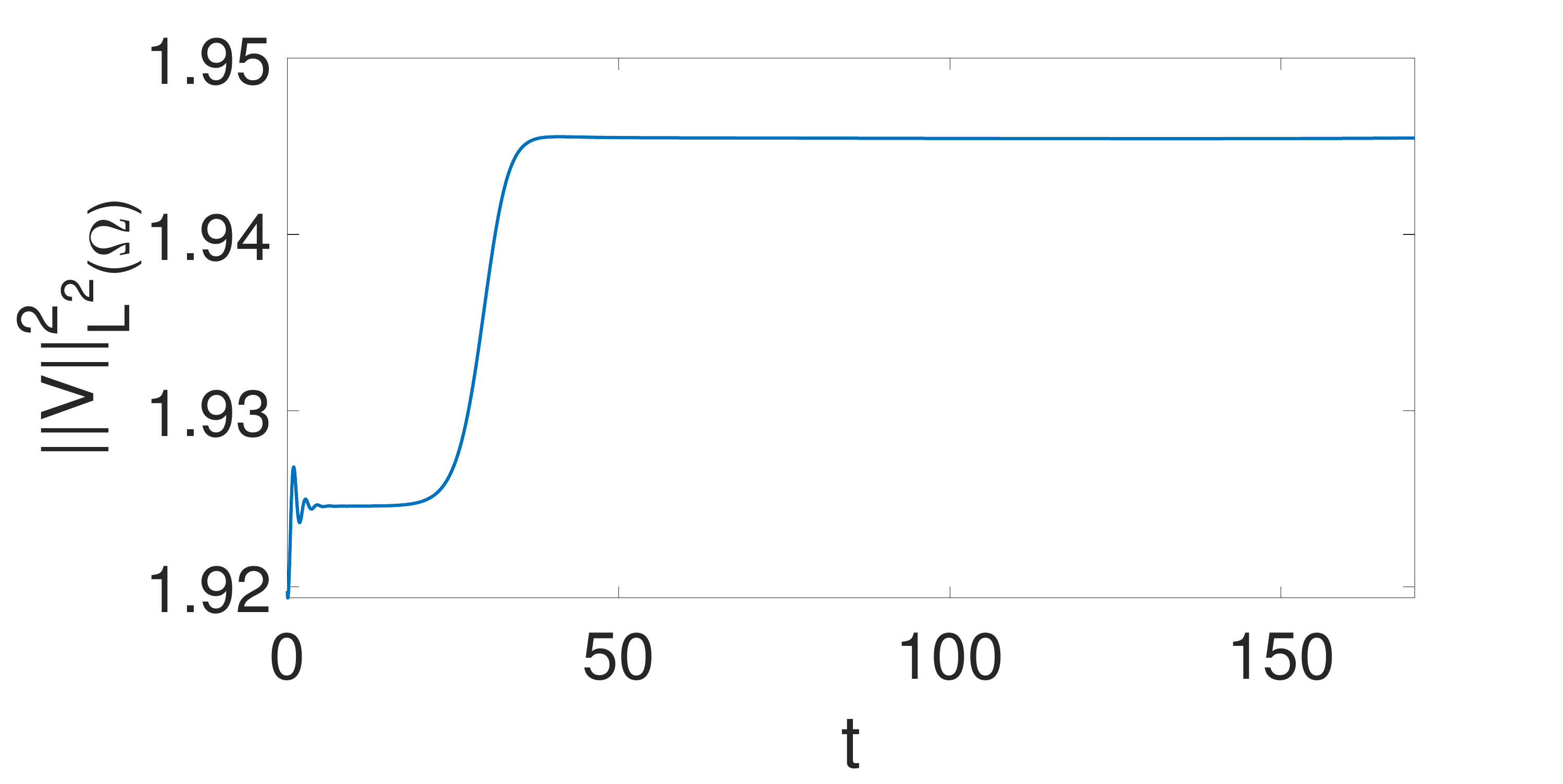} \hfill \includegraphics[width = 0.49 \textwidth]{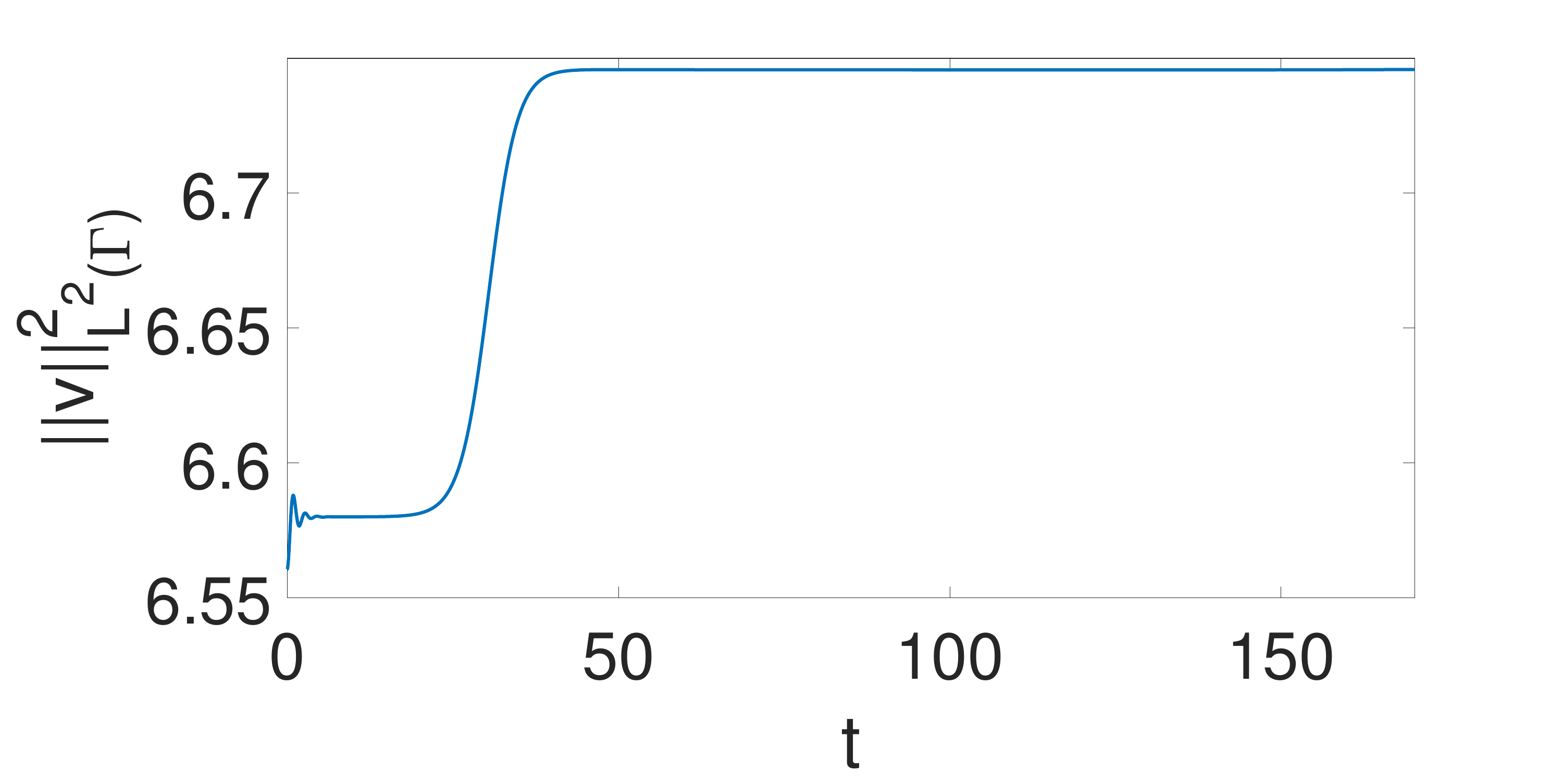}
        \\
        \includegraphics[width = 0.49 \textwidth]{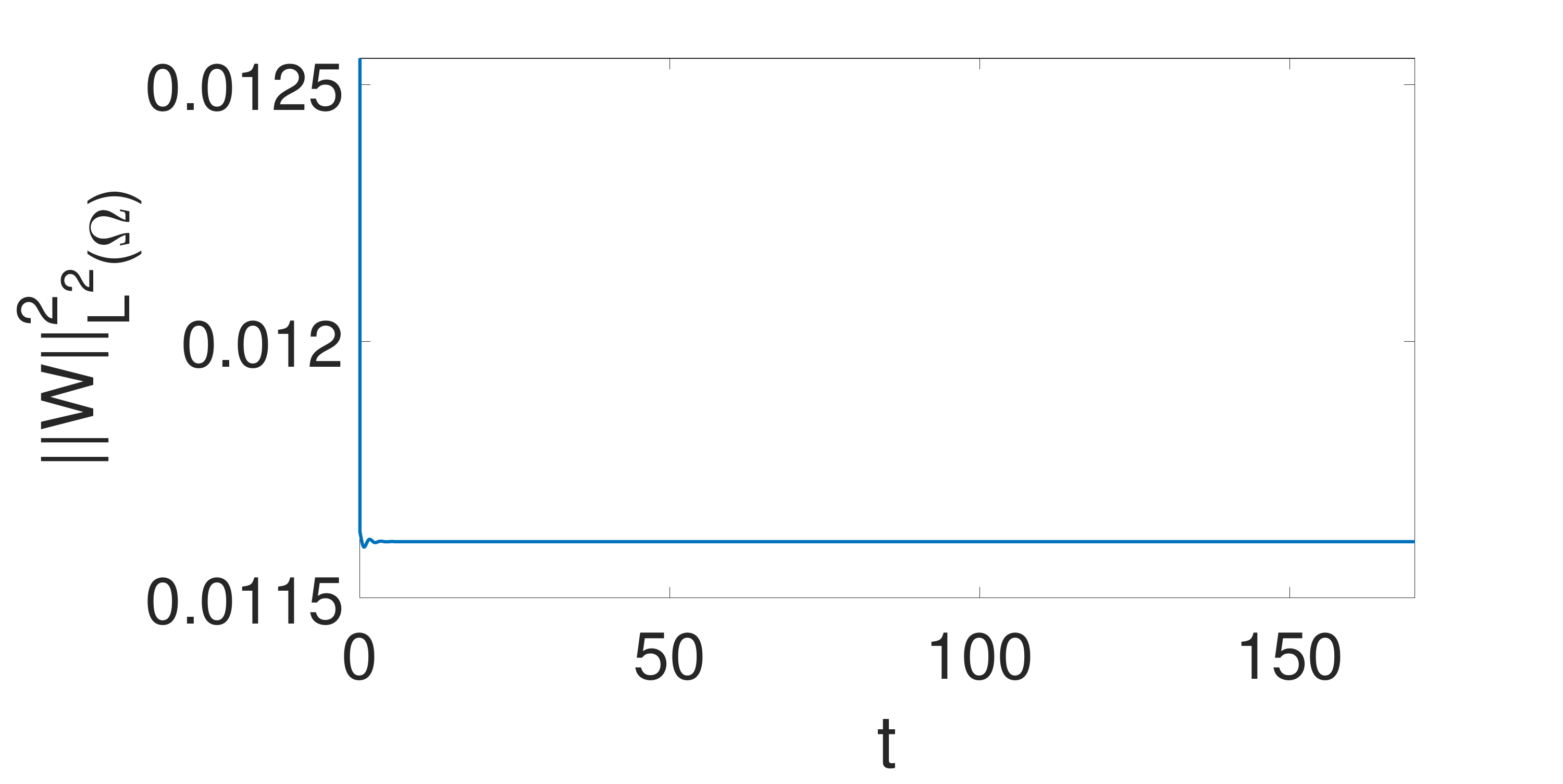} \hfill \includegraphics[width = 0.49 \textwidth]{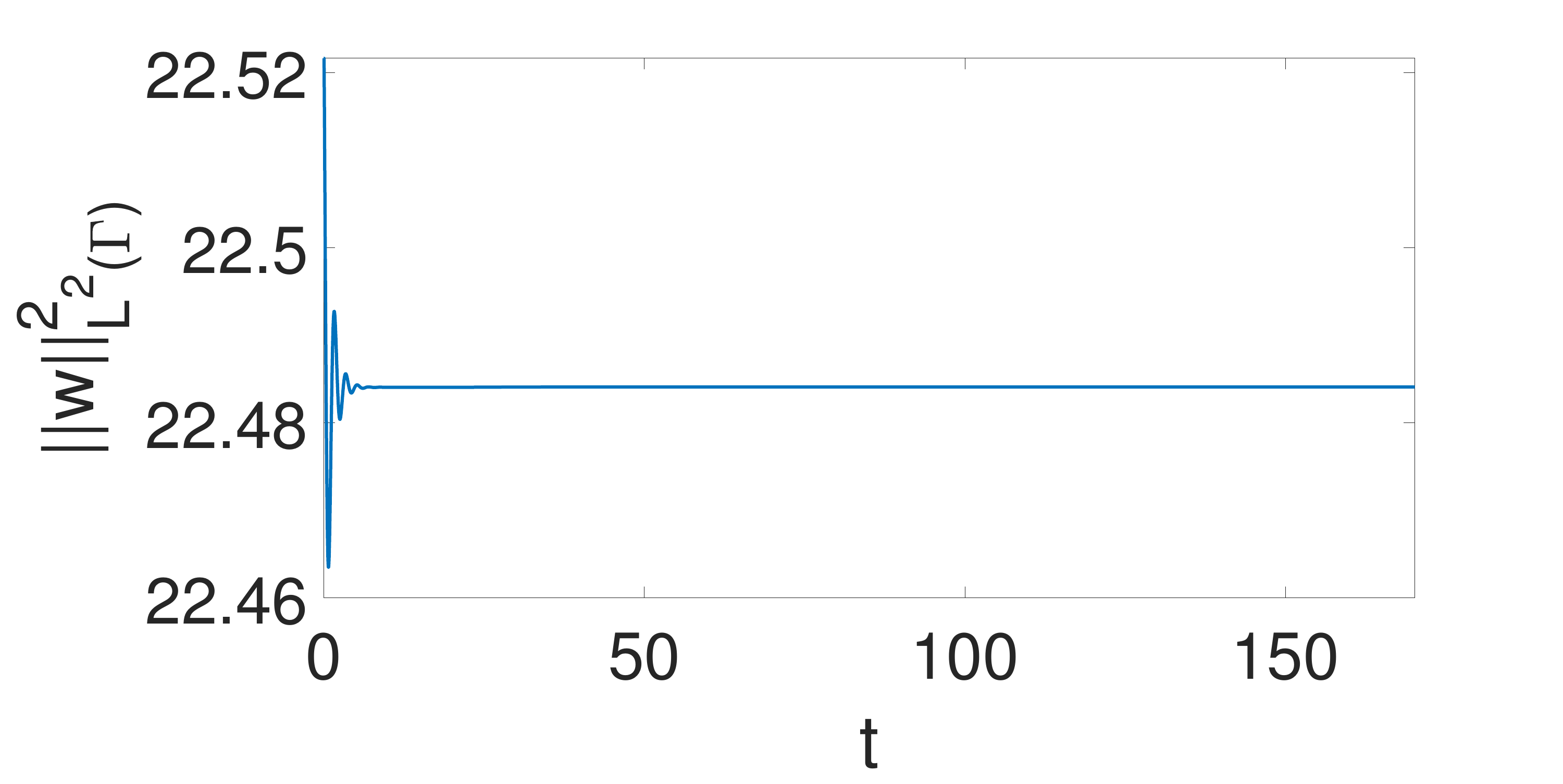}
        \\
        \includegraphics[width = 0.49 \textwidth]{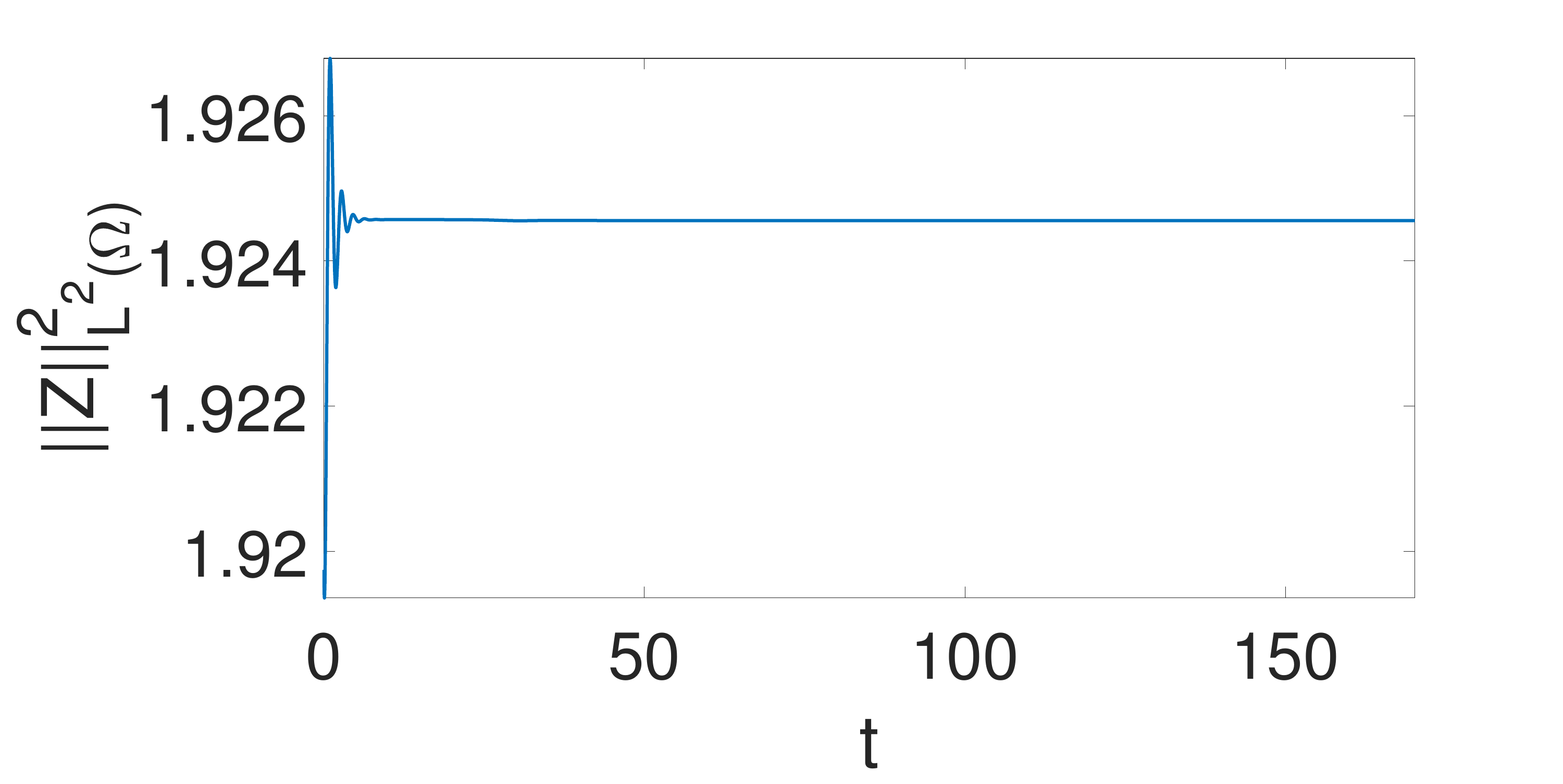} \hfill \includegraphics[width = 0.49 \textwidth]{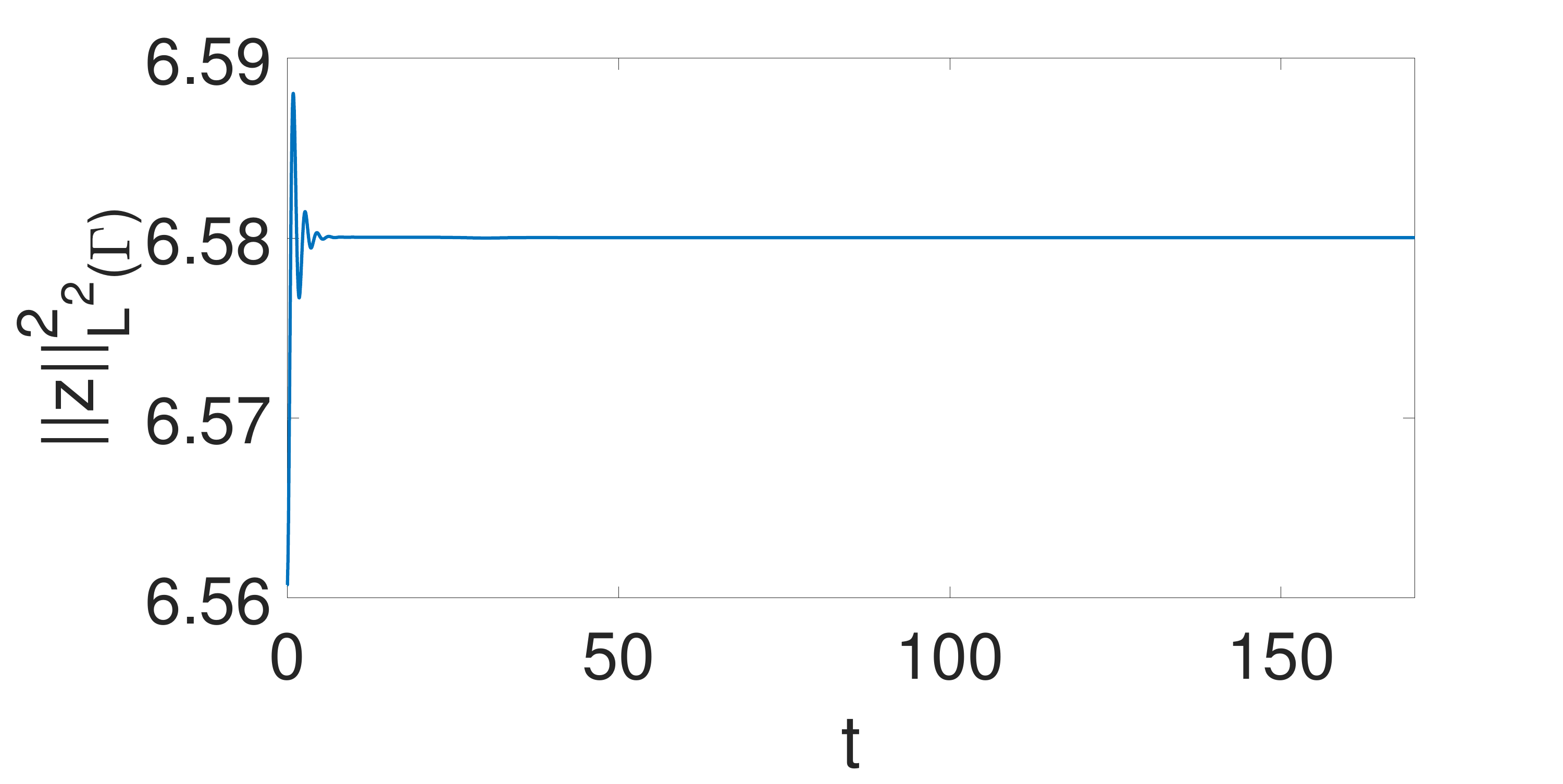}
        \caption{Cell-polarity model for $\ell = 3$.}
    \end{figure}

\end{document}